\def\E{ {\mathcal E} }
\def\h{ {\mathcal H} }
\def\H{ {\mathcal H} }
\def\J{ {\mathcal J} }
\def\U{ {\mathcal U} }
\def\G{ {\mathcal G} }
\def\F{ {\mathcal F} }
\def\B{ {\mathcal B} }
\def\D{ {\mathcal D} }
\def\S{ {\mathcal S} }
\def\T{ {\mathcal T} }
\def\V{ {\mathcal V} }
\def\I{ {\mathcal I} }
\def\Tr{{\rm{tr}}}
\def\tr{ \mbox{tr} }
\newcommand{\tra}[1]{\mathrm{tr}\left( #1 \right)}
\newcommand{\trb}[2]{\mathrm{tr}_{#1}\left( #2 \right)}
\def\>{\rangle}
\def\<{\langle}
\def\hc{^{\dagger}}
\renewcommand{\emph}{\textit}
\newcommand{\bra}[1]{\langle {#1} |}
\newcommand{\ket}[1]{| {#1} \rangle}
\newcommand{\vect}[1]{|#1\rangle\!\rangle}
\newcommand{\ketbra}[2]{\ensuremath{\left|#1\right\rangle\!\!\left\langle#2\right|}}
\newcommand{\braket}[2]{\ensuremath{\left\langle#1\right|\left.\!#2\right\rangle}}
\newcommand{\iden}{\mathbb{I}} 
\newcommand{\vv}[1]{\ensuremath{\boldsymbol #1}}
\newcommand{\vc}[1]{|vec(#1)\rangle}
\newcommand{\vcb}[1]{\langle vec(#1)|}
\newcommand{\clebsch}[6]{\langle #1,#2;#3,#4\hspace{0.5px} | \hspace{0.5px}#5,#6\rangle}
\def\non{ \nonumber\\}
\newcommand{\cc}[1]{\textcolor{black}{#1}}
\newcommand{\kk}[1]{{\color{black}#1}}
\newtheorem{theorem}{Theorem}
\newtheorem{lemma}[theorem]{Lemma}
\newtheorem{definition}[theorem]{Definition}
\newtheorem{result}{Result}
\pgfplotsset{compat=1.16}
\newlength{\figheight}
\newlength{\figwidth}
\begin{document}

	
	\title{Robustness of Noether's principle: Maximal disconnects between conservation laws \& symmetries in quantum theory}
	\author{Cristina C\^{i}rstoiu}
	\email{ccirstoiu@gmail.com }
	\affiliation{Department of Physics, Imperial College London, London SW7 2AZ, UK}
	\affiliation{Department of Computer Science, University of Oxford, Oxford, OX1 3PU, UK}
	\affiliation{ Cambridge Quantum Computing Ltd,
		9a Bridge Street, Cambridge, United Kingdom}
	\author{Kamil Korzekwa}
	\affiliation{Centre for Engineered Quantum Systems, School of Physics, The University of Sydney, Sydney, NSW 2006, Australia}
	\affiliation{International Centre for Theory of Quantum Technologies, University of Gda{\'n}sk, 80-308 Gda{\'n}sk, Poland}	
	\affiliation{Faculty of Physics, Astronomy and Applied Computer Science, Jagiellonian University, 30-348 Kraków, Poland}
	\author{David Jennings}
	\affiliation{School of Physics and Astronomy, University of Leeds, Leeds, LS2 9JT, UK}
	\affiliation{Department of Physics, University of Oxford, Oxford, OX1 3PU, UK}
	\affiliation{Department of Physics, Imperial College London, London SW7 2AZ, UK}
	
	\begin{abstract}
	
		To what extent does Noether's principle apply to quantum channels? Here, we quantify the degree to which imposing a symmetry constraint on quantum channels implies a conservation law, and show that this relates to physically impossible transformations in quantum theory, such as time-reversal and spin-inversion. In this analysis, the convex structure and extremal points of the set of quantum channels symmetric under the action of a Lie group~$G$ becomes essential. It allows us to derive bounds on the deviation from conservation laws under any symmetric quantum channel in terms of the deviation from closed dynamics as measured by the unitarity of the channel~$\E$. In particular, we investigate in detail the U(1) and SU(2) symmetries related to energy and angular momentum conservation laws. In the latter case, we provide fundamental limits on how much a spin-$j_A$ system can be used to polarise a larger spin-$j_B$ system, and on how much one can invert spin polarisation using a rotationally-symmetric operation. Finally, we also establish novel links between unitarity, complementary channels and purity that are of independent interest.

	\end{abstract}
	
	\maketitle
	
	
\section{Introduction}
\label{sec:intro}
	
\subsection{Symmetry principles versus conservation laws}

Noether's theorem in classical mechanics states that for every continuous symmetry of a system there is an associated conserved charge~\cite{noether1918invariante,baez2013noether,gough2015noether}. This fundamental result forms the bedrock for a wide range of applications and insights for theoretical physics in both non-relativistic and relativistic settings. Quantum theory incorporates Noether's principle at a fundamental level, where for unitary dynamics generated by a Hamiltonian $H$ we have that an observable $A$ is conserved, in the sense of $\<\psi|A|\psi\>$ being constant under the dynamics for any state $|\psi\>$, if and only if $[A,H]=0$. In quantum field theory, Noether's theorem gets recast as the Ward-Takahashi identity~\cite{ward1950identity, takahashi1957generalized} for $n$-point correlations in momentum space. 

In all of the above cases a continuous symmetry principle is identified with some conserved quantity. However, the most general kind of evolution of a quantum state, for relativistic or non-relativistic quantum theory, is not unitary dynamics but instead a quantum channel. This broader formalism includes both unitary evolution and open system dynamics, but also allows more general quantum operations such as state preparation or discarding of subsystems. It is therefore natural to ask about the status of Noether's principle for those quantum channels that obey a symmetry principle.

A quantum channel $\E$~\cite{watrous2018theory} takes a quantum state $\rho_A$ of a system $A$ into some other valid quantum state \mbox{$\sigma_B=\E(\rho_A)$} of a potentially different system $B$. The channel respects a symmetry, described by a group $G$, if we have that 
\begin{equation}
	\label{eq:covariance_intro}
	\E(U_A(g) \rho_A U_A^\dagger (g)) = U_B(g) \sigma_B U_B^\dagger (g)
\end{equation}
for all \mbox{$g \in G$}, where $U(g)$ denotes a unitary representation of the group $G$ on the appropriate quantum system.

However, even in the simple case of the U(1) phase group $U(\theta) = e^{i \theta N}$ generated by the number operator~$N$, we know from quantum information analysis in asymmetry theory~\cite{gour2017quantum}, that situations arise in which the symmetry constraint is not captured by \mbox{$\<N\>:= \tr (N \rho)$} being constant~\cite{marvian2014extending}. Indeed, even if we were given all the moments $\<N^k\>$ of the generator $N$ of the symmetry, together with all the spectral data of the state $\rho_A$, this turns out to still be insufficient to determine whether $\rho_A$ may be transformed to some other state $\sigma_B$ while respecting the symmetry. Conversely, given a symmetry principle, there exist quantum channels that can change the expectation of the generators of the symmetry in non-trivial ways. These facts imply that a complex disconnect occurs between symmetries of a system and traditional conservation laws when we extend the analysis to open dynamics described by quantum channels, see Fig.~\ref{fig:symmetry_conservation}. Given this break-down of Noether's principle, our primary aim in this work is to address the following fundamental question:
\begin{center}
	{\it {\bf Q1.} What is the maximal disconnect between symmetry principles and conservation laws for quantum channels?}
\end{center}

Surprisingly, we shall see that this question relates to the distinction between the notion of an active transformation and a passive transformation of a quantum system.

\begin{figure}[t]
	\includegraphics[width=\columnwidth]{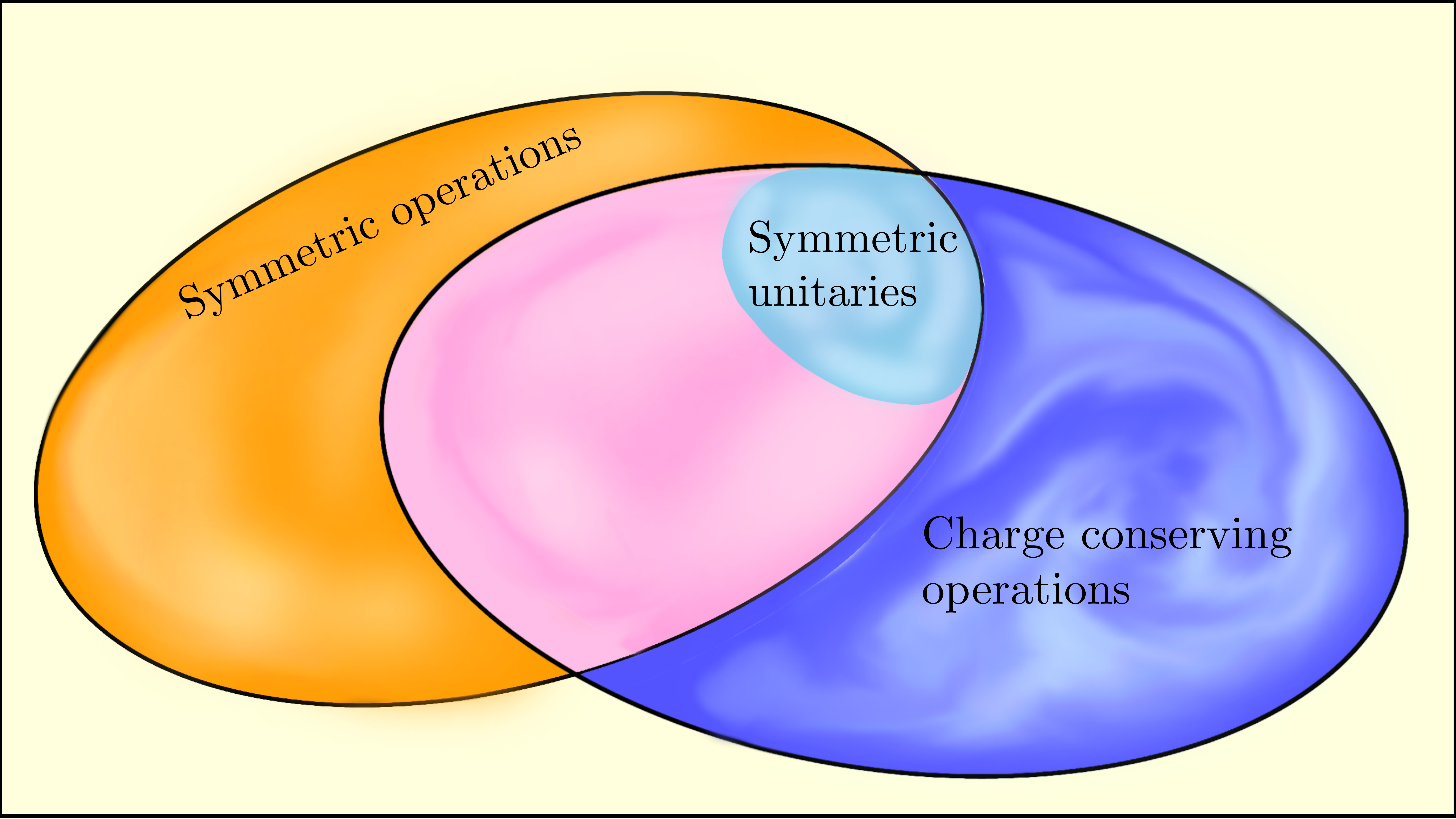}
	\caption{\label{fig:symmetry_conservation}\textbf{Disconnect between symmetries and conservation laws for open quantum dynamics.} Every continuous symmetry of the closed unitary evolution implies a conserved charge, but under the same symmetry constraints quantum channels may change the expectation value of such charges.}
\end{figure}

\subsection{Active versus passive: forbidden transformations in quantum mechanics.}

In quantum mechanics the time-reversal transformation $t \rightarrow -t$ is a stark example of a symmetry transformation that does not correspond to any physical transformation that could be performed on a quantum system $A$~\cite{peres2006quantum}. More precisely, within quantum theory time-reversal must be represented by an anti-unitary operator~$\Theta$, and so cannot be generated by any kind of dynamics acting on a quantum system. Instead, time-reversal is a passive transformation -- namely a change in our description of the physical system. On the other hand, active transformations, such as rotations or translations, are physical transformations with respect to a fixed description (coordinate system) that can be performed on the quantum system $A$. Time-reversal, therefore, constitutes an example of a passive transformation that is without any corresponding active realisation. This is in contrast to spatial rotations of $A$ which admit either passive or active realisations. 

If $A$ is a simple spin system, then the action of time-reversal on the spin angular momentum $\mathbf{J}$ degree of freedom coincides with \emph{spin-inversion}, which transforms states of the system as \mbox{$\rho_A \rightarrow \T(\rho_A)= \Theta \rho_A \Theta^\dagger$}. In the Heisenberg picture this transformation sends \mbox{$\mathbf{J} \rightarrow -\mathbf{J}$}. Indeed, while spin-inversion is seemingly less abstract than time-reversal, it constitutes another symmetry transformation in quantum theory that is forbidden in general -- a passive transformation with no active counterpart. 

The strength of this prohibition on spin-inversion actually depends on the fundamental structure of quantum theory itself. This can be seen if we ask the question: what is the \emph{best approximation} to spin-inversion that can be realised within quantum theory through an active transformation, given by a quantum channel $\E$, of an arbitrary state $\rho_A$ to some new state $\E(\rho_A)$? If we restrict to the simplest possible scenario of $A$ being a spin-1/2 particle system, we have that spin-inversion coincides with the \emph{universal-NOT} gate for a qubit. It is well known that such a gate is impossible in quantum theory~\cite{buzek99optimal}, and the best approximation of such a gate is a channel $\S_-$ that transforms any state $\rho$ with spin polarisation \mbox{$\mathbf{P}(\rho_A):= \tr(\mathbf{J} \rho_A)$} into a quantum state $\S_-(\rho_A)$ such that 
\begin{equation}
\mathbf{P}(\rho_A) \rightarrow \mathbf{P}(\S_-(\rho_A)) = -\frac{1}{3}\mathbf{P}(\rho_A).
\end{equation}
We refer to $\S_-$ as the optimal inversion channel for the system.

It is important to emphasise that the pre-factor of~\mbox{$-{1}/{3}$} is fundamental and cannot be improved on. Its numerical value can be determined by considering the application of quantum operations to one half of a maximally entangled quantum state -- anything closer to perfect spin-inversion would generate negative probabilities, and would thus be unphysical. Indeed, if we removed entanglement from quantum theory, by restricting to separable quantum states, then there would be no prohibition on spin-inversion of the system!\footnote{More precisely, spin-inversion is equivalent to $\rho \rightarrow \rho^T$ followed by a $\pi$-rotation around the $Y$ axis. However, the transpose map is known to be a positive but not completely-positive map~\cite{watrous2018theory}. If we restrict the state space to be the set of separable quantum states, however, such a strictly positive map will never generate negative probabilities and so would be an admissible physical transformation.}

While this limit is easily determined for spin-1/2 systems, it raises the more general question: 
\begin{center}
	{\it  {\bf Q2.} What are the limits imposed by quantum theory on approximate spin-inversion and other such inactive symmetries?}
\end{center}
Here, an \emph{inactive symmetry} transformation simply means a symmetry transformation that is purely passive and does not have an active counterpart. More precisely, and focusing on spin-inversion, the question becomes: given any quantum system $A$, what is the quantum channel $\E$ that optimally approximates spin-inversion on $A$? For a $d=2$ qubit spin system, this analysis essentially coincides with looking at depolarizing channels. However, for a $d>2$ spin system, this connection with depolarizing channels no longer holds and a more detailed analysis is required to account for the spin angular momentum of the quantum system.


\subsection{Structure and scope of the problem}

In this paper, our main focus will be on the maximal disconnects between symmetry principles and conservation laws. We will focus on symmetries corresponding to Lie groups, and the dominant case will be the $SU(2)$ rotational group. This provides an illustration of the non-trivial structures involved, but also shows that the problem of performing an optimal approximation to spin-inversion arises naturally. We do not consider more general inactive symmetries, but leave this to future work.

We first fully solve {\it {\bf Q2}} for the case of spin-inversion, and show that this can be better and better approximated at a state level as we increase the dimension of the spin. However, this has an information-theoretic caveat that things look quite differently at a quantum channel level. The solution of spin-inversion also connects with a seemingly paradoxical ability to perform \emph{spin-amplification} under rotationally symmetric channels. We diagrammatically present these results in Fig.~\ref{fig:spin_inv}. 

\begin{figure}[t]
	\includegraphics[width=\columnwidth]{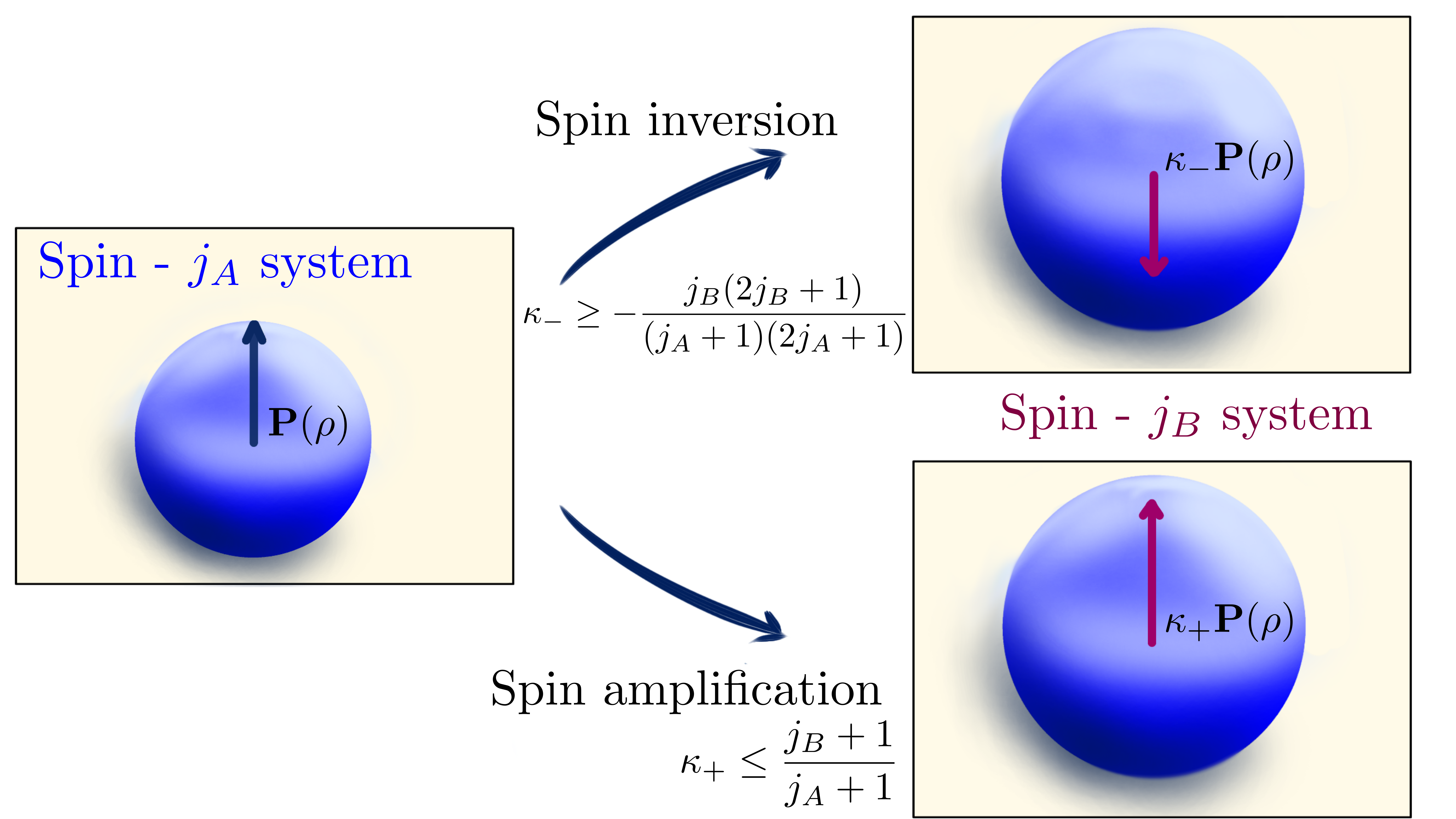}
	\caption{\label{fig:spin_inv}\textbf{Spin-inversion and amplification.} There exist quantum channels that can invert or amplify the polarisation of a spin system while exactly respecting $SU(2)$ rotational symmetry. The values $\kappa_\pm$ provide the ultimate limits of such processes and depend only on the dimension of the spin systems involved.}
\end{figure}

Both spin-inversion and spin-amplification turn out to be two extremal deviations from Noether's principle, and thus lead on to the central question {\it {\bf Q1}}. Here, we derive general bounds on deviations from conservation laws for general groups and systems. These describe the trade-off between allowed deviations and the departure from closed unitary dynamics as schematically portrayed in Fig.~\ref{fig:trade_off}. \kk{Crucially, the quantity we use to measure the departure from closed unitary evolution is extremely well suited to physical scenarios: not only it has a clean theoretical basis, but also it is experimentally measurable and avoids the exponential cost of full tomography of a quantum channel.}

The nature of the considered questions requires one to understand the structural aspects of the set of symmetric quantum channels and, in particular, to have a strong handle on the extremal points of this set. One also needs an operationally sensible way to cast questions {\it {\bf Q1}} and {\it {\bf Q2}} into quantitative and well-defined forms. To these ends we extend previous results on the structure of symmetric channels~\cite{holevo1993note,holevo1995structure,nuwairan2013su2,nuwairan2015su2,mozrzymas2017structure} and derive novel relations for the unitarity of a quantum channel~\cite{wallman2015estimating} -- both of which are of independent interest to the quantum information community. \kk{Our primary methodological advances lie in combining the concept of unitarity, which is efficiently estimable, with harmonic analysis tools for quantum channels. The value of this new methodology is that it provides means to address abstract features of covariant quantum channels, normally expressed in terms of irreducible tensor operators, diamond norm measures, resource measures, etc., with quantities that are readily accessible via experimental methods.}

\begin{figure}[t]
	\includegraphics[width=0.75\columnwidth]{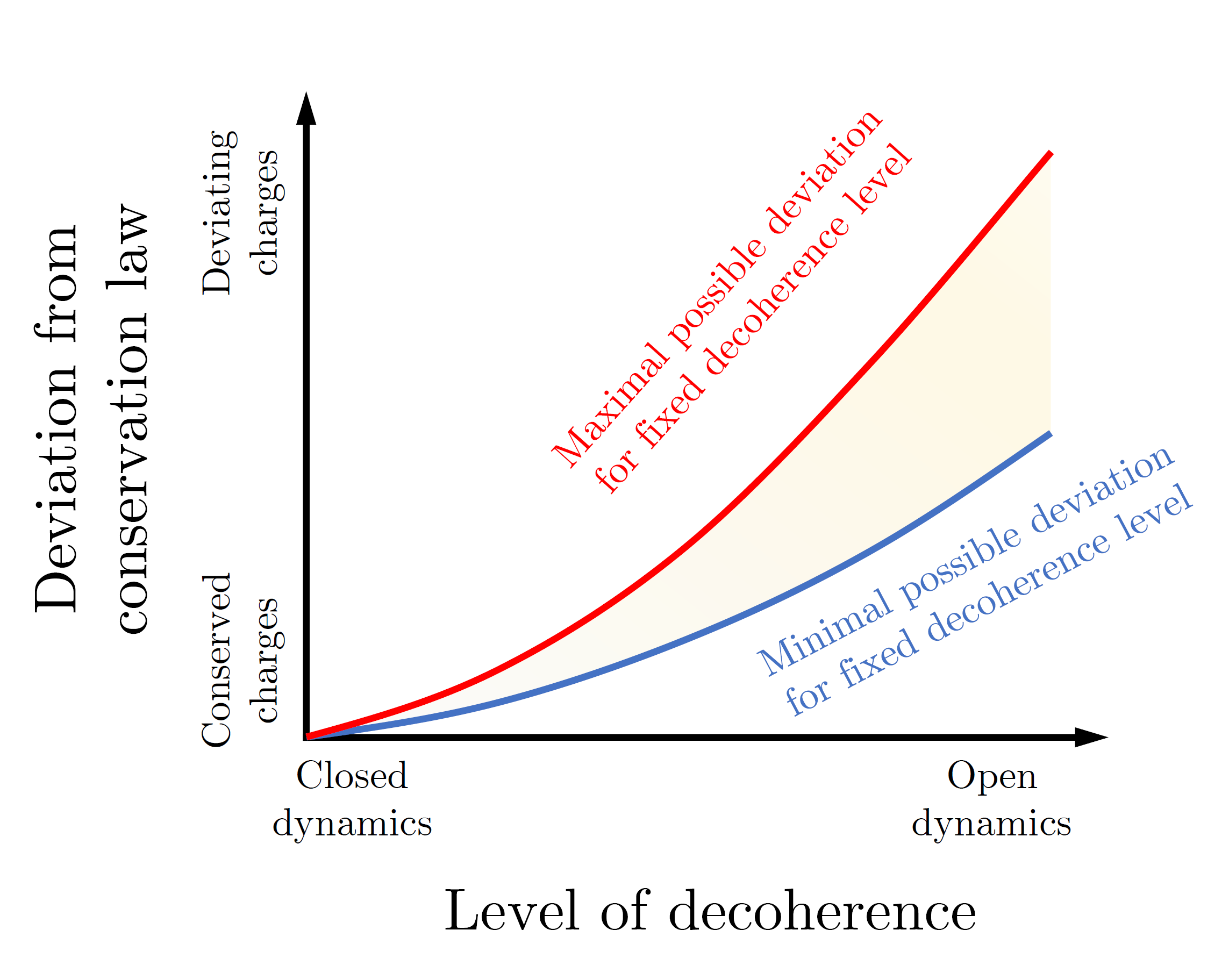}
	\caption{\label{fig:trade_off}\textbf{Robustness of Noether's principle \& trade-off relations.} A qualitative description of trade-off relations between deviation from conservation laws and level of decoherence under the dynamics of a symmetric channel. While the red upper bound exists for all symmetries described by connected Lie groups, the lower bound is present when quantum systems have multiplicity-free decompositions.}
\end{figure}

Since our results provide general bounds on the behaviour of expectation values of observables under symmetric dynamics, we believe that they may be of relevance to scientists working in quantum open systems, decoherence theory, and quantum technologies~\cite{breuer2002theory}. \kk{We explain in more detail how our work connects with the problem of benchmarking quantum devices~\cite{knill2008randomized,magesan2011scalable}; how it can be applied to improve error mitigation in quantum simulations~\cite{georgescu2014quantum}; how it could be extended to study quantum measurement theory~\cite{ozawa2002conservation}; and how it bounds the thermodynamic transformations of quantum systems~\cite{goold2016role}. In each of these cases we specify how concrete applications of our results can be made.} Moreover, as Noether's principle is fundamental and far-reaching, our studies are of potential interest to people investigating foundational topics and relativistic physics~\cite{hohn2016operational,vanrietvelde2018switching}. 

The structure of the paper is as follows. In the next section we give a detailed overview of our main results, and then in the rest of the paper we gradually introduce all the necessary ingredients that allow us to rigorously address the questions posed here and derive our results. In Sec.~\ref{sec:notation} we introduce the notation and provide preliminaries on covariant quantum channels. Next, in Sec.~\ref{sec:deviations} we define quantitative measures of the departure from conservation laws and from closed unitary dynamics. Section~\ref{sec:structure} contains the technical core of our paper with a detailed analysis of the convex structure of the set of symmetric channels. In Sec.~\ref{sec:reversal} we then use these mathematical tools to address the problem of spin-inversion and amplification, while in Sec.~\ref{sec:trade_off} we derive trade-off relations between conservation laws and decoherence. \kk{Section~\ref{sec:applications} is devoted to potential applications of our results to various fields of quantum information science.} Finally, Sec.~\ref{sec:conclusions} contains conclusions and outlook.


\section{Overview of Main Results}

\kk{The central message of our work is that we extend Noether's principle and the general relationship between symmetry and conservation laws to arbitrary quantum evolutions with a natural regulator to measure the openness of the dynamics, which can be efficiently estimated experimentally. We thus provide a concrete methodology to answer questions {\it {\bf Q1}} and {\it {\bf Q2}} that is framed in terms of \emph{experimentally accessible} quantities, and can be directly applied to the developing field of quantum devices and technologies. In what follows we describe the key specific features of this framework.}

\subsection{The optimal spin-inversion channel}

We first address question {\it {\bf Q2}} by studying in detail the problem of approximating spin polarisation inversion for spin-$j_A$ system $A$ with $2j_A+1$-dimensional Hilbert space~$\h_A$. The higher-dimensional spin angular momentum observables $\vv{J}_A:=(J^x_{A},J^y_{A},J^z_{A})$ along the three Cartesian coordinates generate rotations corresponding to elements $g\in \mathrm{SU(2)}$, which act on the system via the unitary representations $U_A(g)$ describing the underlying symmetry principle. A channel $\E:\B(\h_A)\rightarrow\B(\h_A)$ is symmetric under rotations, or \emph{SU(2)-covariant}, if it satisfies Eq.~\eqref{eq:covariance_intro} for all states $\rho_A\in \B(\h_A)$ and $g\in \mathrm{SU(2)}$ (since the input and output systems are the same we have $B=A$). Now, rotational invariance ensures that the symmetric channel $\E$ acts on single spin systems isotropically. As a result, spin polarisation vector $\vv{P}(\rho_A)$ of an initial state $\rho_A$ is simply scaled by the action of $\E$, i.e.,
\begin{equation}
\vv{P}(\E(\rho_A)) = f(\E) \vv{P}(\rho_A)
\end{equation}
for a single parameter $f(\E)$ that is independent on~$\rho_A$ or the spatial direction. The question {\it {\bf Q2}} thus amounts to determining the symmetric quantum channel $\S_{-}$ with coefficient $f(\S_{-})$ that is as close as possible to $-1$ (which can only be achieved by the unphysical spin-inversion operation). 

As the set of all symmetric channels is convex, this becomes a convex optimisation problem whose solution is attained on the boundary of the set. The convex structure of SU(2)-symmetric quantum channels on spin systems has been previously examined by Nuwairan in Ref.~\cite{nuwairan2013su2}, where a characterisation of extremal channels is given. We review these results in Sec.~\ref{sec:su2_structure} and extend the analysis in terms of the Liouville and Jamio{\l}kowski representations of channels (see Sec.~\ref{sec:notation} for details). This, in turn, allows us to directly compute the scaling factors $f(\E)$ for any symmetric channel. 

The convex set of SU(2)-covariant quantum channels on a spin-$j_A$ system forms a simplex with $2j_A+1$ vertices, each corresponding to a CPTP map $\E^{L}$ labelled by an integer $L\in \{0,\dots,2j_A\}$. Therefore, any such symmetric channel $\E$ is a convex combination of these extremal covariant channels:
\begin{equation}
	\E = \sum_{L=0}^{2j_A} p_{L} \E^{L},
\end{equation}
where $\{p_{L}\}_{L=0}^{2j_A}$ forms a probability distribution. 

The following result gives the best physical approximation to spin-inversion, and is proved and generalised to different input and output systems in Theorem~\ref{thm:spinreversalmax} of Sec.~\ref{sec:reversal_B}.
\begin{result}
	The optimal spin polarisation inversion channel is achieved by \mbox{$\S_{-} := \E^{2j_A}$}, the extremal point of SU(2)-covariant channels with the largest dimension $2j_A+1$ of the environment required to implement it. It results in an inversion factor:
	\begin{equation}
		f(\S_{-})=-\frac{j_A}{j_A +1} = -1+  O(1/j_A).
	\end{equation}	
\end{result}

This generalises the previous result on optimal approximations of universal-NOT under rotational symmetry, and determines a fundamental limit that quantum theory imposes on the specific task of (universally) inverting the spin of a quantum system. The higher the dimension of the system, the larger is the maximal spin-inversion factor. Specifically, the optimal channel $\S_{-}$ in the limit $j_A\rightarrow \infty$ approaches $f(\S_{-})\rightarrow -1$, which is the value obtained under the inactive spin-inversion transformation. However, this feature alone does not imply that the channel $\S_{-}$ behaves more like spin-inversion as the dimension of the system increases. As shown previously \cite{marvian2014extending}, once one goes beyond unitary dynamics, the angular momentum observables do not provide a complete description of symmetry principles and information-theoretic aspects become crucial. 

To explicitly quantify this aspect, in Sec.~\ref{sec:inversion_reversal} we compare the fidelity between the output of an active symmetric channel versus the passive transformation of spin-inversion~$\mathcal{T}$. We restrict to input states $\rho_A$ within the convex hull of spin coherent states as these behave classically in the sense of saturating the Heisenberg bound. We find that the output fidelity is given by
\begin{equation}
	F(\E(\rho_A), \T(\rho_A)) = p_{2j_A} \left(1 - \frac{2j_A}{1+4j_A}\right),
\end{equation}
which is maximised whenever $p_{2j_A} = 1$, i.e., whenever $\E$ coincides with the optimal spin-inversion channel $\S_{-}$. Notice that while $f(\S_{-})$ approaches $-1$ as we increase $j_A$, the fidelity only achieves $F(\S_{-}(\rho_A), \T(\rho_A)) \rightarrow 1/2$ in the limit, with the highest bound occurring for $j_A = 1/2$. In other words, the actions of the symmetric channel $\E$ and the passive transformation $\T$ on quantities beyond~$\mathbf{P}(\rho_A)$ distinguish the two, and limit the fidelity at the state level.


\subsection{Spin amplification}

The simple structure of the extremal points of SU(2)-covariant channels generalises to the situation where the input and output spaces correspond to different irreducible spin systems. We discuss all these aspects in Sec.~\ref{sec:su2_structure}, and extensions to general compact Lie groups in Sec.~\ref{sec:general_structure}. The convex set of symmetric channels \mbox{$\E:\B(\h_A) \rightarrow \B(\h_B)$}, where $\h_A$ and $\h_B$ are Hilbert spaces for spin-$j_A$ and spin-$j_B$ systems, forms a simplex now with \mbox{$2\max(j_A, j_B) + 1$} extremal points. In this scenario, it also holds that the spin polarisation of any input state is scaled isotropically by a constant parameter $f(\E)$, which depends only on the particular symmetric channel $\E$. While for $j_A = j_B$, it was always the case that $f(\E)\leq 1$, this no longer holds true for \mbox{$j_{B} > j_A$}, and the spin can be amplified under a symmetric open dynamics. The ultimate limits of this are derived in Theorem~\ref{thm:spinamplify}, and are summarised as follows.

\begin{result}
	Let us denote by $\kappa_{+} = {\max_{\E}} f(\E)$, where the maximisation occurs over the convex set of SU(2)-covariant channels $\E:\B(\h_A) \rightarrow \B(\h_B)$. Then the maximal spin-amplification factor $\kappa_{+}$ is given by:
	\begin{subequations}
		\begin{align}
			\kappa_{+} = \frac{j_B}{j_A} \quad &\mathrm{~for~} j_A\geq j_B,\\
		\kappa_{+} = \frac{j_B+1}{j_A+1} \quad &\mathrm{~for~} j_A<j_B.		
		\end{align}
	\end{subequations}
\end{result}

The above result may initially seem paradoxical: using purely rotationally invariant transformations on a quantum system, we are free to \emph{arbitrarily increase} the expectation value of angular momentum. This provides a dramatic example of the disconnect between symmetry principle and conservation laws. This surprising spin-amplification effect requires that the dynamics is not unitary, but is instead given by a quantum channel with non-trivial Kraus rank, and the intuitions we acquire while dealing with unitary evolution fail badly when we look at more general open quantum dynamics.

But where does this new angular momentum come from? Here, the ability to perform approximate spin-inversion comes in. Any symmetric quantum channel can be purified to a Stinespring dilation involving a symmetric unitary $V$ and an environment $E$ in a pure state $|\eta\>_E$ with zero angular momentum \cite{scutaru, keyl1999optimal, marvian-thesis},
\begin{equation}
\E(\rho_A) = \tr_C V (\rho_A \otimes |\eta\>_E\<\eta|)V^\dagger,
\end{equation}
where we have that $AE$ and $ BC$ denote the two different ways of factoring the global system.
Since angular momentum is exactly conserved across the joint system $AE$ we see that we must have 
\begin{equation}\label{conserve}
\mathbf{P}(\rho_A) = \mathbf{P}(\E(\rho_A)) + \mathbf{P}(\tilde{\E}(\rho_A)),
\end{equation}
where $\tilde{\E}$ denotes the complementary channel to $\E$ obtained by tracing out $B$ after the action of the global unitary $V$ \cite{watrous2018theory}. We now see that spin-inversion and spin-amplification are complementary to each other. Namely, given any spin-amplification for which \mbox{$f(\E) >1$}, Eq.~\eqref{conserve} necessarily implies that the complementary channel must have \mbox{$f(\E) <0$}, and thus is a spin-inversion channel. Some of these features have been discussed previously from the perspective of asymmetry theory~\cite{marvian2014modes}, and earlier in relation to optimal cloning and the universal-NOT gate \cite{vanEnk}. In particular, the complementary channel of the optimal spin polarisation inversion channel $\S_{-}$ will be the maximal spin amplification \mbox{$\tilde{\S}_{-} :\B(\h_A) \rightarrow\B(\h_{B})$} between a spin $j_A$ system and a spin \mbox{$j_B = 2j_A$} system. This generalises to optimal spin polarisation inversion channels between spin systems of different dimensions.

From the perspective of asymmetry theory, every resource measure is monotonically non-increasing under symmetric channels, and thus the fact that polarisation can be increased implies that spin polarisation cannot be a proper measure of asymmetry~\cite{marvian2014modes}. The polarisation may increase, but its ability to encode a spatial direction must become inherently noisier. This is also in agreement with the No-Stretching Theorem \cite{NoStretching} for spin systems.

\subsection{Conservation laws vs decoherence: Quantitative trade-off relations}

Starting from {\it {\bf Q2}}, we analysed to what degree a spin-inversion is possible within quantum theory. This led us to consider symmetric quantum channels and we found that both spin-inversion and spin-amplification are directly related and can be approximately performed under the symmetry constraint. These two examples are maximal disconnects between symmetric dynamics and conservation laws, and thus bring us to the broader issue of question {\it {\bf Q1}}. 

In order to address it properly, we first need to define measures quantifying the deviations from conservation laws and from unitary dynamics. We also generalise the discussion to symmetries described by an arbitrary compact Lie group $G$, and introduce quantitative measures for probing how much the conserved charges associated with symmetry generators, $\{J_A^k\}_{k=1}^{n}$ and $\{J_B^k\}_{k=1}^{n}$, can fluctuate between initial and final states, $\rho_A$ and $\E(\rho_A)$, for a $G$-covariant channel $\E$. To that end, in Sec.~\ref{sec:deviations} we introduce the notion of \emph{average total deviation} from a conservation law, which we define as the average $L_2$ norm of the difference in expectation values between \mbox{$\psi =\ketbra{\psi_A}{\psi_A}$} and $\E(\psi)$ of the generators. Explicitly:
\begin{equation}
	\Delta(\E) :=  \sum_{k=1}^n \int |\Tr(\E(\psi)J_B^k - \psi J_A^k) |^{2} d\, \psi,
\end{equation}
where the integration is with respect to the standard Haar measure on pure states.

To quantify how close a channel $\E$ is to a unitary dynamics we employ the notion of \emph{unitarity}, first defined in Ref.~\cite{wallman2015estimating}. It is defined as the average output purity over all pure states with the identity component subtracted, i.e., 
\begin{equation}
	u(\E) := \frac{d_A}{d_A-1}\int \Tr\left(\E\left(\psi-\frac{\iden_{A}}{d_A}\right)^2\right)\, d\,\psi,
\end{equation}
and satisfies $u(\E)\leq 1$ with equality if and only if $\E$ is a unitary channel. Note that previously this was defined only for channels between the same input and output spaces but, as we explain in Sec.~\ref{sec:deviations}, the definition can be generalised. We also provide a simple characterisation of unitarity in terms of the complementary channel, describing the back-flow of information from the environment, and relate it to the conditional purity of the corresponding Jamio{\l}kowski state. These results, which may be of independent interest, can be summarised as follows.
\begin{result}
	Let $u(\E)$ be the unitarity of an arbitrary quantum channel $\E$ from input system $A$ to output system $B$, then
\begin{enumerate}
\item (Purity representation) 
	\begin{equation}
		u(\E) = \frac{d_A^2}{d_A^2-1} \gamma_{B|A}({\J(\E)}),
	\end{equation}
	where $\gamma_{A|B}(\rho) := \gamma(\rho_{AB}) - \frac{1}{d_A} \gamma(\rho_B)$ is the conditional purity of a bipartite state, with $\gamma(\rho):=\Tr(\rho^2)$, and $\J(\E)$ is the Jamio{\l}kowski state of quantum channel $\E$.
\item (Complementary channel representation)
	\begin{align}
		u(\E) &= \frac{d_A}{d_A^2-1} \left(d_A\Tr(\tilde{\E}(\iden_A/d_A)^2)\right.\non
		&\left.\qquad\qquad\quad\quad - \Tr(\E(\iden_A/d_A)^2)\phantom{\tilde{E}}\!\!\!\!\!\right).
	\end{align}
	where $\tilde{\E}$ is the complementary channel to $\E$ in any Stinespring dilation. 
\item (Zero decoherence) We have that $u(\E) =1$ if and only if $\E$ is an isometry channel.
\end{enumerate}

\end{result}
Thus, unitarity can be understood both as a purity-based measure of correlations in the Jamio{\l}kowski state, or alternatively as a trade-off between the output purities for the channel and its complement. This result is independent of symmetry-based questions and holds for arbitrary quantum channels.

\emph{When do conservation laws hold?} For a unitary symmetric dynamics, the corresponding conservation laws will always hold, but generally this is no longer true for symmetric quantum channels. There will be situations, however, when the degrees of freedom that decohere through interactions with the environment have no effect on the expectation values of the generators. In Sec.~\ref{sec:lowerbounds} we give the most general form of such a covariant channel that is unital and for which conservation laws always hold. Such behaviour would require the presence of decoherence-free subspaces, so that the information is protected from leaking into the environment. It follows that conservation laws will hold for symmetric dynamics that protects the degrees of freedom associated with the symmetry generators from leaking the information into the environment. More precisely, suppose that $\{J_A^k\}_{k=1}^{n}$ generate a unitary representation $U_A$ acting on the Hilbert space $\h_A$ that describes the quantum system. Any symmetric channel $\E:\B(\h_A)\rightarrow\B(\h_A)$ for which $\Delta(\E) = 0$ will protect the subspace \mbox{$\mathfrak{S}:={\rm{span}} \{ \iden, J_A^{k}\}\subset \B(\h_A)$}, so $\E(\rho_A) = \rho_A$ for any state $\rho_A$ in $\mathfrak{S}$. In this sense, conservation laws may be viewed as a form of information preserving structures \cite{blume2010information}.

Consider also a simple example of a two-qubit system $AA'$, where only $A$ carries spin angular momentum, so the symmetry generators are $J^x_A\otimes \iden_{A'}$, $J^y_A\otimes \iden_{A'}$ and $J^z_A\otimes \iden_{A'}$. Any channel of the form $\E_{AA'} = \I_A\otimes \E_{A'}$ is symmetric, with $\I_A$ the identity channel on system $A$ and $\E_{A'}$ an arbitrary quantum channel on system $A'$. Moreover, $\E_{AA'}$ satisfies $\Delta(\E_{AA'}) = 0$, so that the associated conservation laws hold despite the fact that $\E_{AA'}$ can be arbitarily far from unitary dynamics. This example illustrates that probing conservation laws for a physical realisation of symmetric dynamics will not always be sufficient to decide whether there are decoherence effects present. In other words, robustness of conservation laws does not occur for all types of systems. Nevertheless, there are regimes that guarantee robustness for conservation laws. In such cases, approximate conservation laws hold if and only if the dynamics is close to a unitary symmetric evolution. For example, whenever $\B(\h_A)$ contains a single trivial subspace then there is no symmetric channel other than identity for which conservation laws hold (which is the case, e.g., when $\h_A$ carries an irreducible representation of SU(2)).

\emph{What does it mean for conservation laws to be robust under decoherence?} If for all channels $\E$ obeying a given symmetry principle, it holds that $\Delta(\E) \approx 0$ if and only if $u(\E) \approx 1$, we say that the associated conservation laws are robust. This can be established by finding upper and lower bounds on the deviation $\Delta(\E)$ that coincide when $u(\E) \rightarrow 1$. In Sec.~\ref{sec:upperbounds} we show in Theorem~\ref{thm:upperbounds} that for all types of symmetries described by connected compact Lie groups, one can find such an upper bound (and the result extends to different input and output systems).
\begin{result}
	Given any connected compact Lie group, for a symmetric channel $\E$ approximating a symmetric unitary the associated conservation laws will hold approximately. In other words, there exists an upper bound on the deviation from conservation law in terms of unitarity:
	\begin{equation}
		\Delta(\E) \leq M(1-u(\E))
	\end{equation}
	for some constant $M>0$ \cc{that is independent of $\E$, and depends only on the dimensions of the systems involved and the symmetry generators.}
\end{result}

In order to obtain lower bounds, however, additional assumptions are required. It is clear from the previous discussion that conservation laws can hold beyond unitary dynamics, and in those situations we cannot expect to obtain lower bounds on the deviation in terms of unitarity. However, there exist symmetries for which conservation laws only hold for symmetric unitary dynamics, and then robustness is achieved. This happens in the case of spin-$j$ system with symmetry generators given by higher-dimensional spin angular momenta generating an irreducible representation of SU(2). We prove the following result in Theorem~\ref{thm:su2_bounds} of Sec.~\ref{sec:lowerbounds}.
\begin{result}
	For a spin-$j$ system, spin angular momentum conservation laws are robust to noise described by a symmetric channel $\E$ and the following bounds hold:
	\begin{subequations}
		\begin{align}
			\sqrt{\Delta(\E)}\geq&\frac{\sqrt{2}j^{1/2}}{(2j+1)^2}(1-u(\E)),\\
			\sqrt{\Delta(\E)}\leq&\frac{3\sqrt{2} j^{3/2}}{2j+1} (1-u(\E)).
		\end{align}		
	\end{subequations}
\end{result}

More generally, we prove in Theorem~\ref{thm:lowerbounds} that whenever the quantum system carries a representation $U_A$ of a Lie group $G$ for which $U_A\otimes U_A^{*}$ has a multiplicity-free decomposition, then the associated conservation laws are robust under any open system dynamics given by the symmetric channel \mbox{$\E:\B(\H_A)\rightarrow\B(\H_A)$}.

Finally, in Sec.~\ref{sec:U1bounds} we obtain specific upper bounds on the deviation from a conservation law for energy that generates a U(1) symmetry constraint, in terms of the unitarity of the U(1)-symmetric channel. We also explain why a lower bound cannot hold because of the many multiplicities that appear in the decomposition of $\B(\h_A)$. This analysis relies on the structure of convex set of U(1)-covariant channels, which we expand on in Sec.~\ref{sec:u1_structure}.


\section{Notation and preliminaries}
\label{sec:notation}
	
\subsection{Quantum channels and their representations}
	
A state of a finite-dimensional quantum system $A$ is described by a density operator $\rho_A\in\B(\H_A)$, with $\B(\H_A)$ denoting the space of bounded operators on a $d_A$-dimensional Hilbert space $\H_A$, that also satisfies \mbox{$\rho_A\geq 0$} and \mbox{$\tra{\rho_A}=1$}. The space $\B(\H_A)$ is itself a Hilbert space with the Hilbert-Schmidt inner product $\langle X,Y\rangle=\tra{X^\dagger Y}$. General evolution between $d_A$-dimensional and $d_B$-dimensional quantum systems is described by a quantum channel $\E$ given by a linear superoperator \mbox{$\E:\B(\h_A)\rightarrow\B(\h_B)$} that is completely positive and trace-preserving (CPTP). More broadly, we will also consider CP maps, i.e., linear superoperators that are only completely positive (CP), but not trace-preserving (TP). A quantum channel $\E^\dagger$ is called the \emph{adjoint} of $\E$ if for all $X\in\B(\H_A)$ and $Y\in\B(\H_B)$ we have
\begin{equation}
	\label{eq:adjoint}
	\tra{\E(X)Y}=\tra{X \E^\dagger(Y)}.
\end{equation}
Closed dynamics is described by a unitary channel \mbox{$\V(\cdot)=V(\cdot) V\hc$}, where $V$ is a unitary operator. 
	
The \emph{Liouville representation} of $X\in\B(\h_A)$ is defined by a unique column vector $\vect{X}\in\mathbb{C}^{d_{A}^2}$ (as opposed to vectors in $\H_A$ denoted by $\ket{\cdot}$) with entries given by the inner product $\tr(T_{k}\hc X)$, where $\{T_k\}_{k=1}^{d_A^2}$ is a fixed orthonormal basis of $\B(\h_A)$. By analogously denoting a fixed orthonormal basis of $\B(\h_{B})$ by \mbox{$\{S_k\}_{k=1}^{d_B^2}$}, the Liouville representation of the superoperator $\E:\B(\h_A)\rightarrow\B(\h_B)$ is a $d_B^2$ by $d_A^2$ matrix $L(\E)$ defined uniquely via the relation:
\begin{equation}
	L(\E)\vect{X}=\vect{\E(X)}
\end{equation}
for any $X\in \B(\h_A)$. It is then straightforward to show that the entries of $L(\E)$ are given by
\begin{equation}
	\label{eq:liouville}
	\!\!\!\!L(\E)_{jk}\!=\!\langle\!\bra{S_j}L(\E)\vect{T_{k}}\!=\!\langle\!\langle S_{j} \vect{\E(T_{k}) }\!=\!{\Tr}(S_{j}\hc\E(T_k)).
\end{equation}
Note that, in the Liouville representation, the composition of quantum channels becomes matrix multiplication, i.e., $L(\E\circ \F)=L(\E)L(\F)$.
	
One can also represent a quantum channel $\E$ via its \emph{Jamio{\l}kowski state} \mbox{$\J(\E)\in\B(\H_B)\otimes\B(\H_A)$} defined by
\begin{equation}
	\label{eq:jamiolkowski}	
	\J(\E):=\E\otimes\I_A \ketbra{\Omega}{\Omega},\quad \ket{\Omega}=\frac{1}{\sqrt{d_A}}\sum_{j=1}^{d_A} \ket{jj},
\end{equation}
where $\I_A$ denotes the identity channel acting on $\B(\H_A)$. The condition for complete positivity of $\E$ is equivalent to the positivity of $\J(\E)$, while the trace-preserving property of $\E$ correspond to \mbox{$\trb{B}{\J(\E)}=\iden_A/d_A$}. We note that we may pass from the Liouville representation to the Jamio{\l}kowski representation via
\begin{equation}
	\label{eq:reshuffling}
	L(\E)^R = \J(\E),
\end{equation}
where $R$ is the \emph{reshuffling} operation defined as the linear operation for which \mbox{$\ketbra{ab}{cd}^R = \ketbra{ac}{bd}$} for all computational basis states.
	
Finally, any quantum channel $\E$ admits a \emph{Stinespring representation} in terms of an isometry \mbox{$V:\h_A\rightarrow\h_{B}\otimes \h_{E}$} with $\h_{E}$ describing the environment system such that
\begin{equation}
	\label{eq:stinespring}
	\E(X)= \Tr_{E} (V X V\hc)
\end{equation}
for all $X\in \B(\h_A)$. The isometry $V$ that defines the quantum channel $\E$ is unique up to a local isometry on the environment. Note that, using the above, the adjoint channel $\E^\dagger$ is given by
\begin{equation}
	\label{eq:stinespring_adjoint}
	\E^\dagger(Y)=V^\dagger(Y\otimes \iden_E)V
\end{equation}
for all $Y\in \B(\h_{B})$.
	
Stinespring representation allows one to introduce the concept of a \emph{complementary channel}: a quantum channel $\tilde{\E}$ is complementary to $\E$, defined by Eq.~\eqref{eq:stinespring}, if its action is given by
\begin{equation}
	\label{eq:complementary}
	\tilde{\E}(X) = \Tr_{B}(V X V\hc),
\end{equation}
We also note that the adjoint of the complementary channel, which we denote by $\tilde{\E}\hc$, is given by
\begin{equation}
	\label{eq:adjoint_complementary}
	\tilde{\E}\hc(X) = V\hc (\iden_{B}\otimes X)V
\end{equation}
for all $X\in \B(\h_E)$.
	
	
\subsection{Symmetries and $G$-covariant channels}
\label{sec:introsymmops}
	
Consider a group $G$ that acts on $\h_A$ and $\H_{B}$ via unitary representations \mbox{$g\rightarrow U_A(g)$} and \mbox{$g\rightarrow U_{B}(g)$}, so that the group action on quantum states is given by unitary channels \mbox{$\U_A^g(\cdot):=U_{A}(g)(\cdot) U_{A}^\dagger(g)$} and \mbox{$\U_{B}^g(\cdot)=U_{B}(g)(\cdot) U_{B}^\dagger(g)$}. Recall that every finite-dimensional unitary representation on a Hilbert space is the direct sum of irreducible representations, or irreps. We say that a quantum system $A$ is an \emph{irreducible system} if $\H_A$ carries an irrep of~$G$, i.e., if $\H_A$ has no non-trivial subspace closed under the action of $U_A(g)$. 
	
We say that a quantum channel $\E:\B(\h_A)\rightarrow\B(\h_B)$ is \emph{$G$-covariant} (or simply that it is a \emph{symmetric} channel when the group $G$ is fixed) if it satisfies
\begin{equation}
	\label{eq:covariance}
	\forall g\in G:~\U_{B}^{g\dagger}\circ\E\circ \U_{A}^{g}=\E.
\end{equation}

To explain how the covariant constraint affects different representations of quantum channels, we rely on the following well-known result \cite{hall2013lie}.

\begin{lemma}[Schur's lemma]
	\label{lem:schur}
	Let $U(g)$ be an irreducible representation of a group $G$ on a Hilbert space $\H$. Then, any operator $X\in\B(\H)$ satisfying $[X,U(g)]=0$ for all $g$ is a scalar multiple of identity on $\H$. Moreover, if $V(g)$ is another inequivalent representation of $G$, then \mbox{$U(g)YV^\dagger(g)=Y$} for all $g$ implies $Y=0$. 
\end{lemma}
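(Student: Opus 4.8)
The statement to prove is Schur's lemma, a classical result in representation theory. Let me sketch how I would prove it.

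\textbf{Plan of proof.}

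The plan is to prove Schur's lemma in two parts, matching the two claims in the statement, using the standard linear-algebraic argument based on the fact that eigenspaces and kernels/images of intertwining operators are invariant subspaces. Throughout I work over $\mathbb{C}$, which is essential since the argument relies on the existence of eigenvalues.

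For the \textbf{first part}, suppose $X\in\B(\H)$ satisfies $[X,U(g)]=0$ for all $g\in G$, i.e.\ $X$ commutes with every $U(g)$. Since $\H$ is a finite-dimensional complex Hilbert space, $X$ has at least one eigenvalue $\lambda\in\mathbb{C}$. First I would consider the eigenspace \mbox{$\H_\lambda:=\ker(X-\lambda\iden)$}, which is nonzero by construction. The key step is to show that $\H_\lambda$ is invariant under the representation: for any $v\in\H_\lambda$ and any $g$, the commutation relation gives \mbox{$(X-\lambda\iden)U(g)v = U(g)(X-\lambda\iden)v = 0$}, so $U(g)v\in\H_\lambda$. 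Thus $\H_\lambda$ is a nonzero subspace closed under the action of $U(g)$. Since $U(g)$ is irreducible, $\H$ has no non-trivial invariant subspace, forcing $\H_\lambda=\H$, and hence $X=\lambda\iden$ is a scalar multiple of identity.

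For the \textbf{second part}, suppose $U(g)$ and $V(g)$ are inequivalent irreducible representations on Hilbert spaces $\H_U$ and $\H_V$ respectively, and $Y$ satisfies the intertwining relation $U(g)YV^\dagger(g)=Y$ for all $g$, which (since $V^\dagger(g)=V(g^{-1})$ for a unitary representation) is equivalent to $U(g)Y=YV(g)$ for all $g$. The plan is to show that $\ker Y$ is an invariant subspace of $V$ and $\operatorname{ran} Y$ is an invariant subspace of $U$. For the kernel: if $Yw=0$, then $Y V(g)w = U(g)Yw = 0$, so $V(g)w\in\ker Y$; by irreducibility of $V$, $\ker Y$ is either $\{0\}$ or all of $\H_V$. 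For the range: any element is $Yw$, and $U(g)Yw = YV(g)w\in\operatorname{ran} Y$, so by irreducibility of $U$, $\operatorname{ran} Y$ is either $\{0\}$ or all of $\H_U$. If $Y\neq 0$, then $\ker Y\neq\H_V$ and $\operatorname{ran} Y\neq\{0\}$, forcing $\ker Y=\{0\}$ and $\operatorname{ran} Y=\H_U$, so $Y$ is a bijective intertwiner, i.e.\ an isomorphism exhibiting $U$ and $V$ as equivalent representations. This contradicts the assumed inequivalence, and therefore $Y=0$.

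\textbf{Main obstacle.} The argument is entirely standard, so there is no genuine technical obstacle; the only subtlety worth care is the interpretation of the intertwining condition. In the first part one must invoke that the base field is $\mathbb{C}$ so that an eigenvalue is guaranteed to exist (over $\mathbb{R}$ the eigenspace argument can fail). In the second part, the most delicate point is correctly rewriting $U(g)YV^\dagger(g)=Y$ as the intertwiner relation $U(g)Y=YV(g)$ using unitarity of the representation, and then being careful that the conclusion from a nonzero $Y$ is precisely an equivalence of representations, which contradicts the inequivalence hypothesis and closes the argument by contraposition.
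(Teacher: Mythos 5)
Your proof is correct and is the standard argument; note that the paper itself does not prove this lemma at all, but simply cites it as a well-known result (Ref.~\cite{hall2013lie}), so there is no in-paper proof to compare against---your eigenspace argument for the first part and kernel/image-invariance argument for the second are precisely the textbook proof that the citation points to. The one point worth making explicit is that in the second part you (correctly) read ``another inequivalent representation'' as ``another inequivalent \emph{irreducible} representation''; this hypothesis is genuinely needed for the statement to hold, since otherwise taking $V = U \oplus W$ and $Y$ the projection onto the $U$ summand yields a nonzero intertwiner satisfying $U(g)YV^\dagger(g)=Y$ between inequivalent representations. This is also the reading consistent with how the lemma is used in the paper, where it is always applied block-by-block to irreducible components of a decomposed representation.
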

	
Let us start with the structure of the Liouville representation of $G$-covariant channels.
\begin{theorem}
	\label{thm:liouville}
	Let $U_A(g)$ and $U_B(g)$ be the unitary representations of $G$ on $\H_A$ and $\H_B$. Then, the Liouville representation of a $G$-covariant channel \mbox{$\E:\B(\H_A)\rightarrow\B(\H_B)$} is given by
	\begin{equation}
		\label{eq:liouville_decomp}
		L(\E)=\bigoplus_{\lambda}\mathbf{\iden}^{\lambda}\otimes L^{\lambda}(\E),
	\end{equation}
	where $\lambda$ ranges over all irreps that appear in both irrep decompositions of tensor representations \mbox{$U_{A}(g)\otimes U^*_{A}(g)$} and \mbox{$U_{B}(g)\otimes U^*_{B}(g)$}, $\iden^{\lambda}$ are the identity matrices acting within the irrep subspaces, and $L^{\lambda}$ denote non-trivial $m_B^{\lambda}\times m_A^\lambda$ block matrices acting on the multiplicity spaces.	
\end{theorem}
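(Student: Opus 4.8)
The plan is to convert the covariance constraint \eqref{eq:covariance} into a single intertwining relation for the Liouville matrix and then read off its structure from Schur's lemma. Because the Liouville representation is multiplicative under composition, $L(\E\circ\F)=L(\E)L(\F)$, the covariance condition $\U_B^{g\dagger}\circ\E\circ\U_A^g=\E$ becomes $L(\U_B^g)^{-1}L(\E)L(\U_A^g)=L(\E)$, i.e. $L(\U_B^g)L(\E)=L(\E)L(\U_A^g)$ for all $g\in G$. The first step is to recognise the representation that the conjugation superoperators $\U_A^g$ carry on operator space: under vectorization $X\mapsto\vect{X}$, the map $X\mapsto U_A(g)XU_A^\dagger(g)$ corresponds to the tensor representation $U_A(g)\otimes U_A^*(g)$ acting on $\H_A\otimes\H_A^*\cong\mathbb{C}^{d_A^2}$, so $L(\U_A^g)\cong U_A(g)\otimes U_A^*(g)$ and likewise $L(\U_B^g)\cong U_B(g)\otimes U_B^*(g)$. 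Hence $L(\E)$ is precisely an intertwiner between the representation $U_A\otimes U_A^*$ on the input space and $U_B\otimes U_B^*$ on the output space.

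Next I would use complete reducibility of the finite-dimensional representations of $G$ to write $U_A\otimes U_A^*\cong\bigoplus_\lambda u^\lambda\otimes\iden_{m_A^\lambda}$ and $U_B\otimes U_B^*\cong\bigoplus_\lambda u^\lambda\otimes\iden_{m_B^\lambda}$, where $u^\lambda$ is the irrep labelled by $\lambda$ occurring with multiplicities $m_A^\lambda$ and $m_B^\lambda$. Partition $L(\E)$ into blocks $L^{\lambda\mu}$ sending the $\mu$-isotypic component of the input to the $\lambda$-isotypic component of the output. The intertwining relation restricts on each block to $(u^\lambda(g)\otimes\iden_{m_B^\lambda})L^{\lambda\mu}=L^{\lambda\mu}(u^\mu(g)\otimes\iden_{m_A^\mu})$ for all $g$. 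Applying the second part of Lemma~\ref{lem:schur} (inequivalent irreps) forces $L^{\lambda\mu}=0$ whenever $\lambda\neq\mu$, so $L(\E)$ is block-diagonal in the irrep label. On each diagonal block the first part of Lemma~\ref{lem:schur} fixes the action on the irrep space to be a multiple of the identity, leaving an arbitrary $m_B^\lambda\times m_A^\lambda$ matrix on the multiplicity space; that is, $L^{\lambda\lambda}=\iden^\lambda\otimes L^\lambda(\E)$. Summing over $\lambda$ gives \eqref{eq:liouville_decomp}, and only those $\lambda$ with $m_A^\lambda,m_B^\lambda\neq0$ contribute, since otherwise the block maps to or from a zero-dimensional space.

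The step requiring the most care, and what I expect to be the main obstacle, is the passage from the elementary Schur's lemma (which concerns a single pair of irreps) to its multiplicity-graded form used on the diagonal blocks: that any operator intertwining $u^\lambda\otimes\iden_{m_A^\lambda}$ with $u^\lambda\otimes\iden_{m_B^\lambda}$ must take the form $\iden^\lambda\otimes M$ for an arbitrary $m_B^\lambda\times m_A^\lambda$ matrix $M=L^\lambda(\E)$. I would establish this by viewing such an intertwiner as an $m_B^\lambda\times m_A^\lambda$ array of maps $V_\lambda\to V_\lambda$, each commuting with $u^\lambda(g)$ and therefore a scalar by the first part of Lemma~\ref{lem:schur}; collecting these scalars into a matrix yields $M$ and the tensor factorisation $\iden^\lambda\otimes M$. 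A minor additional point is fixing the vectorization convention so that the decomposition places the identity on the irrep subspace and the nontrivial block $L^\lambda(\E)$ on the multiplicity space, matching the stated form; the CP and TP properties of $\E$ are not needed here, as the structure \eqref{eq:liouville_decomp} follows from covariance and linearity alone.
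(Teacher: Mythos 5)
Your proposal is correct and follows essentially the same route as the paper's proof: rewrite covariance as an intertwining relation for $L(\E)$ between the tensor representations $U_A\otimes U_A^*$ and $U_B\otimes U_B^*$, decompose these into irreps, and invoke Schur's lemma (Lemma~\ref{lem:schur}) to kill the off-diagonal blocks and reduce the diagonal ones to $\iden^\lambda\otimes L^\lambda(\E)$. The only difference is that you spell out the multiplicity-graded version of Schur's lemma explicitly (treating each block as an array of scalar intertwiners), a step the paper compresses into a single sentence; this is a useful elaboration but not a different argument.
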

\begin{proof}
	First, using the Liouville representation, the covariance condition is equivalent to
	\begin{equation}
		\label{eq:covariance_liouville}
		\forall g\in G:~L(\U_{B}^{g\dagger})L(\E)L(\U_A^g)=L(\E).
	\end{equation}
	Note that $L(\U_{A}^g)=U_{A}(g)\otimes U_{A}^{*}(g)$ is itself a (tensor) representation of $G$, and an analogous statement holds for $L(\U_{B}^g)$. Therefore, we can decompose them into irreps as 
	\begin{subequations}
		\begin{align}
			\label{eq:tensor_decomp_A}
			U_{A}(g)\otimes U^*_{A}(g)&= \bigoplus_{\lambda}V^{\lambda}(g)\otimes \iden_{m_A^{\lambda}},\\
			\label{eq:tensor_decomp_B}
			U_{B}(g)\otimes U^*_{B}(g)&= \bigoplus_{\lambda}V^{\lambda}(g)\otimes \iden_{m_B^{\lambda}},
		\end{align}
	\end{subequations}
	where $\lambda$ ranges over all irreps that appear in each decomposition, and the group acts trivially on the multiplicity spaces of dimensions $m_A^\lambda$ and $m_B^\lambda$. Now, since the covariance condition means that $L(\E)$ commutes with group representations having the above decompositions, the Schur's lemma implies that $L(\E)$ acts non-trivially only on the multiplicity spaces, leading to the decomposition given in Eq.~\eqref{eq:liouville_decomp}.
\end{proof}
	
Next, let us proceed to the Jamio{\l}kowski representation of a covariant channel $\E$. 
\begin{theorem}
	\label{thm:jamiolkowski}
	Let $U_A(g)$ and $U_B(g)$ be the unitary representations of $G$ on $\H_A$ and $\H_B$. Then, the Jamio{\l}kowski representation of a $G$-covariant channel \mbox{$\E:\B(\H_A)\rightarrow\B(\H_B)$} is given by
	\begin{equation}
		\label{eq:jamiolkowski_decomp}
		\J(\E)=\bigoplus_{\lambda}\mathbf{\iden}^{\lambda}\otimes \J^{\lambda}(\E),
	\end{equation}
	where, $\lambda$ ranges over all irreps that appear in the irrep decomposition of tensor representation \mbox{$U_{B}(g)\otimes U_{A}^{*}(g)$}, $\iden^{\lambda}$ are the identity matrices acting within the irrep subspaces, and $\J^{\lambda}$ denote non-trivial square matrices of size $m_{BA}^{\lambda}\times m_{BA}^{\lambda}$ that act on the multiplicity spaces.
\end{theorem}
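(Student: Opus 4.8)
My plan is to follow exactly the template used for Theorem~\ref{thm:liouville}: first identify a single group representation that $\J(\E)$ must commute with, and then invoke Schur's lemma (Lemma~\ref{lem:schur}) to force the claimed block structure. The natural candidate here is the tensor representation $W(g):=U_B(g)\otimes U_A^*(g)$ acting on $\H_B\otimes\H_A$, since this is precisely the representation whose irrep decomposition appears in the statement. Thus the entire theorem reduces to the single claim that $[\J(\E),W(g)]=0$ for all $g\in G$, after which the structure is automatic.

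To establish this commutation I would exploit the symmetry of the maximally entangled vector $\ket{\Omega}$. The one fact I need is the standard invariance $(U_A(g)\otimes U_A^*(g))\ket{\Omega}=\ket{\Omega}$, which follows from the ``ricochet'' identity $(M\otimes\iden)\ket{\Omega}=(\iden\otimes M^T)\ket{\Omega}$ applied with $M=U_A(g)$, together with $(U_A^*)^T=U_A^\dagger$. Granting this, I would compute the conjugation $W(g)\J(\E)W(g)^\dagger$ by reading each factor of $W(g)$ as a channel acting on $\ketbra{\Omega}{\Omega}$: conjugation by $U_B(g)$ on the first factor turns $\E$ into $\U_B^g\circ\E$, while conjugation by $U_A^*(g)$ on the second factor is the unitary channel $\U_A^{*g}:X\mapsto U_A^*(g)\,X\,U_A^T(g)$. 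The covariance condition~\eqref{eq:covariance} then lets me replace $\U_B^g\circ\E$ by $\E\circ\U_A^g$, after which the $\U_A^g$ on the first factor and $\U_A^{*g}$ on the second factor combine into $(\E\otimes\I_A)\circ(\U_A^g\otimes\U_A^{*g})$ acting on $\ketbra{\Omega}{\Omega}$. The invariance of $\ket{\Omega}$ then collapses $(\U_A^g\otimes\U_A^{*g})(\ketbra{\Omega}{\Omega})=\ketbra{\Omega}{\Omega}$, leaving exactly $(\E\otimes\I_A)(\ketbra{\Omega}{\Omega})=\J(\E)$, which is the desired commutation.

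With $[\J(\E),W(g)]=0$ in hand, the remainder is routine. Decomposing $U_B(g)\otimes U_A^*(g)=\bigoplus_\lambda V^\lambda(g)\otimes\iden_{m_{BA}^\lambda}$ into irreps and applying Schur's lemma blockwise — the scalar-multiple clause within each isotypic sector and the vanishing clause across inequivalent irreps — forces $\J(\E)=\bigoplus_\lambda\iden^\lambda\otimes\J^\lambda(\E)$ with each $\J^\lambda(\E)$ an arbitrary $m_{BA}^\lambda\times m_{BA}^\lambda$ matrix on the multiplicity space, as claimed. I expect the only genuine subtlety to lie in the transpose/conjugate bookkeeping of the middle step: the cancellation works precisely because the $B$-side covariance and the conjugate $A^*$-representation are arranged to fix the \emph{same} vector $\ket{\Omega}$, and a stray transpose or conjugate there would destroy it. As a consistency check I would note that the index set here (irreps of $U_B\otimes U_A^*$) genuinely differs from the one in Theorem~\ref{thm:liouville} (common irreps of $U_A\otimes U_A^*$ and $U_B\otimes U_B^*$), which is exactly what the reshuffling relation~\eqref{eq:reshuffling} should interchange; an alternative but more laborious route would deduce the result directly from Theorem~\ref{thm:liouville} by tracking how $R$ acts on the two commutant structures.
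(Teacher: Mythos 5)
Your proposal is correct and takes essentially the same route as the paper: both arguments reduce the theorem to the single commutation relation $[\J(\E),U_B(g)\otimes U_A^{*}(g)]=0$, obtained from the covariance condition together with the ricochet/invariance property of $\ket{\Omega}$, and then conclude by applying Schur's lemma to the irrep decomposition of $U_B\otimes U_A^{*}$. The only difference is bookkeeping — the paper pushes the $A$-side unitary through $\J(\E\circ\U_A^g)$ via $U\otimes\iden\ket{\Omega}=\iden\otimes U^{*\dagger}\ket{\Omega}$, whereas you conjugate $\J(\E)$ directly by $U_B(g)\otimes U_A^{*}(g)$ — so the two are the same proof carried out in a different order.
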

\begin{proof}
	The covariance condition means that for all $g\in G$ we have
	\begin{equation}
		\label{eq:covariance_jamiol_part_1}
		(U^\dagger_{B}(g)\otimes \iden_A)\J(\E\circ \U_{A}^{g})(U_{B}(g)\otimes \iden_A)=\J(\E).
	\end{equation}
	By employing the fact that for any unitary $U$ we have \mbox{$U\otimes\iden \ket{\Omega}=\iden\otimes U^{*\dagger} \ket{\Omega}$}, we get
	\begin{equation}
		\label{eq:covariance_jamiol_part_2}
		\J(\E\circ \U_{A}^{g})=(\iden_B\otimes U_A^{*\dagger}(g))\J(\E)(\iden_B\otimes U^*_{A}(g)).
	\end{equation}
	Combining the above two equations we find that covariance of $\E$ is equivalent to $\J(\E)$ satisfying the following commutation relation:
	\begin{equation}
		\label{eq:covariance_jamiol}
		\forall g\in G:~[\J(\E),U_{B}(g)\otimes U^*_{A}(g)]=0.
	\end{equation}
	As in the proof of Theorem~\ref{thm:liouville}, we can decompose the tensor representation appearing in the above commutator into irreps,
	\begin{equation}
		\label{eq:tensor_decomp_AB}
		U_{B}(g)\otimes U_{A}^{*}(g)= \bigoplus_{\lambda}V^{\lambda}(g)\otimes \iden_{m_{BA}^{\lambda}}.
	\end{equation}
	Once again, by using the Schur's lemma, we arrive at the block-diagonal decomposition of $\J(\E)$ given in Eq.~\eqref{eq:jamiolkowski_decomp}.
\end{proof}
	
Finally, there is also a very particular form of the Stinespring representation of a $G$-covariant channel given by the following theorem. 
\begin{theorem}
	\label{thm:stinespring}
	Given a $G$-covariant channel $\E$, there exists an environment system $E$, with a Hilbert space $\h_{E}$ and a unitary representation \mbox{$U_E(g)$}, together with a $G$-covariant isometry \mbox{$V:\h_A\rightarrow \h_{B}\otimes \h_{E}$}, such that:
	\begin{equation}
		\label{eq:covariance_stinespring}
		\E(X) = \trb{E}{V X V^\dagger}
	\end{equation}
	for all $X\in \B(\h_A)$.	
\end{theorem}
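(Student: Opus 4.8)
The plan is to start from an arbitrary \emph{minimal} Stinespring dilation of $\E$ and exploit the rigidity of such dilations, together with the covariance condition~\eqref{eq:covariance}, to manufacture the required environmental representation $U_E$. Concretely, I would fix a minimal Stinespring isometry $V_0:\h_A\rightarrow\h_B\otimes\h_{E}$ realising $\E$ as in Eq.~\eqref{eq:stinespring}, so that $\E(X)=\trb{E}{V_0 X V_0^\dagger}$ and $\h_E$ has the smallest possible dimension, equal to the Kraus rank of $\E$. The covariance of $\E$ rewrites as $\E(U_A(g) X U_A^\dagger(g)) = U_B(g)\E(X) U_B^\dagger(g)$, and inserting the dilation of $\E$ into both sides shows that, for each fixed $g$, the operator
\begin{equation}
	W_g := (U_B^\dagger(g)\otimes \iden_E)\, V_0\, U_A(g)
\end{equation}
is itself a Stinespring isometry for the \emph{same} channel $\E$; one checks directly that $W_g^\dagger W_g = \iden_A$ and that $\trb{E}{W_g X W_g^\dagger}=\E(X)$.

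Next I would invoke the uniqueness theorem for minimal dilations: two minimal Stinespring isometries of a fixed channel differ only by a unitary acting on the environment. Since $V_0$ is minimal and $W_g$ realises the same $\E$ on the same minimal environment, there is a \emph{unique} unitary $U_E(g)\in\B(\h_E)$ with $W_g = (\iden_B\otimes U_E(g))\, V_0$, that is,
\begin{equation}
	V_0\, U_A(g) = (U_B(g)\otimes U_E(g))\, V_0 \qquad \forall g\in G,
\end{equation}
which is precisely the statement that $V_0$ is a $G$-covariant isometry, \emph{provided} the map $g\mapsto U_E(g)$ is a representation. Setting $V:=V_0$ then gives Eq.~\eqref{eq:covariance_stinespring} at once.

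The key remaining step is to show that $U_E$ is a genuine unitary representation. Here I would use uniqueness a second time: evaluating the intertwining relation along a product $g_1 g_2$ in two ways -- directly, and by iterating the relation for $g_2$ and then $g_1$ -- produces the two isometries $(\iden_B\otimes U_E(g_1 g_2))V_0$ and $(\iden_B\otimes U_E(g_1)U_E(g_2))V_0$, both equal to $W_{g_1 g_2}$. Because the unitary relating a minimal dilation to itself is unique and the range of $V_0$ generates the minimal environment, this forces $U_E(g_1 g_2)=U_E(g_1)U_E(g_2)$ on all of $\h_E$, with no residual projective phase. Thus $U_E$ is a bona fide unitary representation, with strong continuity in $g$ inherited from the explicit formula for $W_g$ and the continuity of $U_A$ and $U_B$.

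I expect the main obstacle to be the careful bookkeeping in this last step: ensuring that the unitary furnished by the dilation-uniqueness theorem is (i) uniquely defined on the \emph{whole} minimal environment, so that no projective cocycle survives, and (ii) continuous rather than merely measurable in $g$. Both hinge on minimality of the chosen dilation -- without it the intertwiner is only a partial isometry and the homomorphism property can fail -- so stating and applying the minimal-dilation uniqueness precisely is the crux of the argument, while everything else follows directly from the covariance identity~\eqref{eq:covariance}.
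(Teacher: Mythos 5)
Your proposal is correct: the intertwining operator $W_g$ is indeed a minimal dilation of the same channel, the uniqueness of the environment unitary is exact (not merely up to phase) precisely because minimality forces agreement on a spanning set of $\h_E$, and this kills any projective cocycle. The paper itself gives no proof but defers to Ref.~\cite{keyl1999optimal}, and the argument there is essentially the same minimal-dilation-plus-uniqueness construction you outline, so your route coincides with the standard one the paper relies on.
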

\noindent The proof of the above result can be found in Ref.~\cite{keyl1999optimal}.
	
	
\subsection{Irreducible tensor operators}
	
The set of operators $\{T^{\lambda,\alpha}_{k}\}_{\lambda,\alpha,k}$ in $\B(\h_A)$ are called \emph{irreducible tensor operators (ITOs)} if they transform irreducibly under the group action,
\begin{equation}
	\label{eq:ITOs}
	\U_A^g(T^{\lambda,\alpha}_{k})=\sum_{k'}v^{\lambda}_{k'k}(g)T^{\lambda,\alpha}_{k'},
\end{equation}
where $\lambda$ labels irreducible representations of $G$ with matrix elements $v^{\lambda}_{kk'}$, and $\alpha$ denotes multiplicities. From the above property it can be deduced via Schur's orthogonality theorem that the set of ITOs must be orthonormal,
\begin{equation}
	\label{eq:ITOs_ortho}
	\Tr((T^{\lambda',\alpha'}_{k'})\hc T^{\lambda,\alpha}_{k})\propto\delta_{\lambda\lambda'}\delta_{\alpha\alpha'}\delta_{kk'}.
\end{equation}
Throughout the paper we will denote the normalised ITOs for the input system, living in $\B(\H_A)$, by $T^{\lambda,\alpha}_{k}$, and the normalised ITOs for the output system, living in $\B(\H_{B})$, by $S^{\lambda,\alpha}_{k}$. 
	
These yield symmetry-adapted bases for $\B(\h_A)$ and $\B(\h_B)$ that are particularly useful for the studies of $G$-covariant channels. More precisely, by employing the block diagonal structure of the Liouville representation for such channels stated in Theorem~\ref{thm:liouville}, and using the defining property of ITOs, we have
\begin{equation}
	\label{eq:action_on_ITOs}
	L(\E) \vect{T^{\lambda,\alpha}_k}= \sum_{\beta} L^\lambda_{\beta\alpha}(\E) \vect{S^{\lambda,\beta}_k}.
\end{equation}
Moreover, since ITOs are orthonormal, any density matrix in $\B(\h_A)$ (and analogously for $\B(\h_B)$) can be written as
\begin{equation}
	\label{eq:ITO_decomp}
	\rho_A=\frac{\iden_A}{d_A}+\sum_{\lambda,\alpha} \mathbf{r}^{\lambda,\alpha}\cdot \mathbf{T}^{\lambda,\alpha},
\end{equation}
where we denoted the vector of ITOs transforming under a $\lambda$-irrep by $\mathbf{T}^{\lambda,\alpha}=(T^{\lambda,\alpha}_{1},...,T^{\lambda,\alpha}_{d_\lambda})$, with $d_\lambda$ being the dimension of the $\lambda$-irrep. 
	
	
\subsection{Continuous symmetries and conserved charges}
	
Continuous symmetries of the system $A$ are related to compact Lie groups. The representation of such a group $G$ can be generated by \emph{infinitesimal generators} $\{J_{A}^k\}_{k=1}^n$. For simply connected Lie groups, representations of the group are in a one-to-one correspondence with representations of the Lie algebra $\mathfrak{g}$ via the exponentiation map. More precisely, we have
\begin{equation}
	\label{eq:lie_action}
	U_A(\vv{g})=e^{i\vv{J}_A\cdot \vv{g}}
\end{equation}
with $g_k\in\mathbb{R}$ continuously parametrizing the group action. In such a Lie algebraic setting, by considering infinitesimal group action, $g_k\rightarrow 0$, one can show that the covariance of a linear map $\E:\B(\h_A)\rightarrow\B(\h_{B})$, specified by Eq.~\eqref{eq:covariance}, is equivalent to 
\begin{equation}
	\label{eq:lie_commutation}
	[\E(X),J_{B}^{k}] = \E([X,J_A^k])
\end{equation}
for all $k\in\{1,\dots,n\}$ and $X\in\B(\h_A)$, with \mbox{$[X,Y]$} denoting a commutator. 
	
By taking the Liouville representation of the operators on both sides of the above equality and employing the identity \mbox{$\vect{XYZ}=X\otimes Z^{*\dagger} \vect{Y}$}, one can alternatively express the covariance condition as
\begin{equation}
	\label{eq:covariance_lie}
	L(\E)(\iden_A\otimes J_{A}^{k*}-J_{A}^{k}\otimes\iden_A)=(\iden_B\otimes J_{B}^{k*}-J_{B}^{k}\otimes\iden_B)L(\E),
\end{equation}
for all $k$. In particular, for a unitary $G$-covariant channel $\V:\B(\H_A)\rightarrow\B(\H_A)$, the condition becomes simply $[V,J_A^k]=0$. As a result, for all $k$ and for all quantum states $\rho_A\in\B(\H_A)$ we have
\begin{equation}
	\label{eq:conservation}
	\tra{\V(\rho_A) J_A^k}=\tra{\rho_A J_A^k},
\end{equation}
i.e., the generators of the symmetry, $\{J_A^k\}_{k=1}^n$, give the conserved (Noether) charges.

\begin{table*}
	\centering
	\cc{\begin{tabular}{ |c|c|c|c| } 
		\hline
		&Closed unitary evolution $U$& Open channel evolution $\E$ & Level \\  \hline
		1&$[U, J_k] = 0$ & $[\E, ad[J_k]]] = 0$& Defining symmetry\\  \hline
		2&$U\hc J_k U= J_k$ & $\E^{\dagger}(J_k) = J_k$ & Dynamical charge conservation  \\  \hline
		3&${\rm{Tr}}(\rho J_k) = {\rm{Tr}}(U\rho U^{\dagger} J_k)  \  \  \forall \rho$ & ${\rm{Tr}}(\rho J_k) = {\rm{Tr}}(\E(\rho) J_k ) \ \forall \rho $ & Charge conservation law \\ \hline
		4&$\Delta(U) = 0$ & $\Delta(\E) = 0$ & No average deviation from conservation
		\\	\hline
	\end{tabular}}
	\caption{ \cc{Definitions of symmetric dynamics and charge conservation for closed and open systems that are used throughout. For closed system $1\iff2\iff3\iff4$ while for open systems $2\iff 3 \iff 4$, but there is no such equivalence with respect to 1 (i.e., defining symmetry). A symmetric dynamics is one that commutes with the action of the generators. In here, $ad[J_k]$ represents the adjoint action given by Eq.~\eqref{eq:lie_commutation}.}}
	\label{table}
\end{table*}

\cc{To be more precise, we can only talk about ``symmetry'' when we have a set of generators  (or representations) that determine exactly what that symmetry principle is. Traditionally, both in quantum and classical mechanics, charge operators are generators of particular symmetry. Mathematically, charge operators act on the system forming a representation of a particular Lie algebra. For unitary dynamics $U$ conservation of charges happens if and only if $U$ commutes with the charge operators. Equivalently, viewed in the Heisenberg picture, charge operators are fixed points of the unitary evolution. The problem is that while for closed systems all these formulations are the same and often interchangable in the literature, this is no longer the case for open systems. This calls out for a precision of language, and so we require that:
	\begin{enumerate}
		\setcounter{enumi}{-1}
		\item Charge operators are generators that define a symmetry group action.
		\item Dynamics commutes with the generators to define a symmetry principle.
		\item Generators are fixed points of the dynamics in the Heisenberg picture and define (dynamical) charge conservation.
		\item Expectation values of generators remain constant under the dynamical evolution of every input state and define charge conservation.
		\item $\Delta (\E) = 0$ defines no average total deviation from a conservation law.
	\end{enumerate}
One should note that these distinctions have also been made for dissipative dynamics described by Lindbladian master equations~\cite{albert2014symmetries}, with different terminology in other works where the symmetry described here was called weak symmetry in Refs.~\cite{zhang2020stationary, buvca2012note}. As we see from Table~\ref{table}, the equivalence of charge conservation in either the Heisenberg or Schr\"{o}dinger picture with no average deviation from a conservation law motivates our focus on this quantity. Therefore, unless one starts talking about \emph{particular} states for which the expectation value of the generators remains unchanged under dynamics, then there is no pressing need to differentiate between formulations 2, 3 and~4. Whether one would like to talk about charge conservation for particular states that is a different question altogether, one that cannot be equivalently related to the state-independent definitions above. }


\section{Deviations from closed dynamics and from conservation laws}
\label{sec:deviations}
	
The main aim of this paper is to quantitatively investigate the deviation from conservation laws as the symmetric dynamics deviates from being closed. In order to achieve this, we obviously need to understand the structure of covariant quantum channels that model symmetric open dynamics, and we will pursue this task from Sec.~\ref{sec:structure} onwards. However, there is also one more crucial ingredient needed for our analysis: namely, we need quantitative measures of how much a given dynamics deviates from being closed, and how much it deviates from satisfying the conservation law. In this section we introduce such measures and provide their basic properties.
	
		
\subsection{Quantifying the deviation from closed dynamics}
	
In order to quantify how much the dynamics generated by a given quantum channel $\E$ deviates from the closed unitary dynamics we employ the notion of \emph{unitarity}. It was originally introduced in Ref.~\cite{wallman2015estimating} as a way to quantify how well a quantum channel preserves purity on average. We extend these results to allow for distinct input and output system dimensions for a quantum channel $\E:\B(\h_A) \rightarrow\B(\h_B)$.
\begin{definition}
	Unitarity of a quantum channel \mbox{$\E:\B(\h_A) \rightarrow\B(\h_B)$} is defined as the average output purity with the identity component subtracted:
	\begin{equation}
		\label{eq:unitarity}
		u(\E) := \frac{d_A}{d_A-1}\int \Tr\left(\E\left(\psi-\frac{\iden_{A}}{d_A}\right)^2\right)\, d\,\psi,
	\end{equation}
	the integral is taken over all pure states \mbox{$\psi=\ketbra{\psi}{\psi}\in \B(\h_A)$} distributed according to the Haar measure.
\end{definition}
	
As we prove in Appendix~\ref{appendix:unitarity}, the above extension of unitarity satisfies the original condition \mbox{$u(\E)\leq 1$} with equality if and only if the operation is an isometry (as opposed to a unitary in the original formulation). This means \mbox{$u(\E) = 1$} is equivalent to the existence of an isometry $V:\h_A\rightarrow\h_B$ such that \mbox{$\E(\rho) = V\rho V\hc$}. Furthermore, as shown by the authors of Ref.~\cite{wallman2015estimating}, unitarity can be efficiently estimated using a process similar to randomised benchmarking, and can be calculated using the Jamio{\l}kowski representation of $\E$. This characterisation through $\J(\E)$ carries over to the extended version we discuss here and, moreover, we find a novel characterisation of $u(\E)$ in terms of the output purity of $\E$ and its complementary channel $\tilde{\E}$. These results are summarised in the following lemma (see Appendix~\ref{appendix:unitarity} for the proof).
\begin{lemma}
	\label{lem:unitaritychoi}
	Unitarity of a channel $\E:\B(\h_A) \rightarrow \B(\h_B)$ can be equivalently expressed by the following relations:
	\begin{align}
		\label{eq:unitarity_jamiolkowski}
		\!\!\!u(\E)&\!=\!\frac{d_A}{d_A^2-1}\left(d_A\gamma(\J(\phantom{\tilde \E}\!\!\!\E))\!-\!\gamma(\E(\iden_A/d_A))\right)\!,\! \\			
		\label{eq:unitarity_complementary}
		\!\!\!u(\E) &\!=\! \frac{d_A}{d_A^2-1}\left(d_A\Tr(\tilde{\E}(\iden_A/d_A)^2) \!-\! \Tr(\E(\iden_A/d_A)^2)\right)\!,\!
	\end{align}
	with $\gamma(\rho) = \Tr(\rho^2)$ denoting the purity of a state $\rho$.
\end{lemma}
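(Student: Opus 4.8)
The plan is to compute the Haar integral in Definition~\eqref{eq:unitarity} directly, by expanding the square and invoking the standard first- and second-moment formulas for the uniform distribution on pure states, and then to recognise the resulting terms as the Jamio{\l}kowski purity and the complementary-channel output purity.

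First I would expand the integrand. Writing $\tilde\psi = \psi - \iden_A/d_A$ and using linearity of $\E$, one has $\E(\tilde\psi) = \E(\psi) - \E(\iden_A)/d_A$, so that
\begin{align}
\Tr\big(\E(\tilde\psi)^2\big) =\ & \Tr(\E(\psi)^2) - \frac{2}{d_A}\Tr\big(\E(\psi)\E(\iden_A)\big) \nonumber \\
& + \frac{1}{d_A^2}\Tr(\E(\iden_A)^2).
\end{align}
The three terms are then integrated separately. The second and third reduce immediately to multiples of $\Tr(\E(\iden_A)^2)$ via the first moment $\int \psi\, d\psi = \iden_A/d_A$. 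The first term I would handle with the swap trick $\Tr(\E(\psi)^2) = \Tr\big[(\E\otimes\E)(\psi\otimes\psi)\,\mathbb{F}_B\big]$, where $\mathbb{F}_B$ is the swap on the two output copies, combined with the second moment $\int \psi\otimes\psi\, d\psi = (\iden + \mathbb{F}_A)/[d_A(d_A+1)]$, with $\mathbb{F}_A$ the swap on the two input copies.

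The key step is to identify the piece carrying $\mathbb{F}_A$. Evaluating $\Tr\big[(\E\otimes\E)(\mathbb{F}_A)\,\mathbb{F}_B\big] = \sum_{jk}\Tr\big(\E(\ketbra{j}{k})\E(\ketbra{k}{j})\big)$ and comparing with the explicit form $\J(\E) = \frac{1}{d_A}\sum_{jk}\E(\ketbra{j}{k})\otimes \ketbra{j}{k}$ shows this sum equals $d_A^2\,\gamma(\J(\E))$, while the $\iden$ piece of the second moment contributes $\Tr(\E(\iden_A)^2)$. Collecting all contributions and simplifying the rational prefactors, using $\gamma(\E(\iden_A/d_A)) = \Tr(\E(\iden_A)^2)/d_A^2$, yields Eq.~\eqref{eq:unitarity_jamiolkowski}. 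For Eq.~\eqref{eq:unitarity_complementary} it then suffices to prove the identity $\gamma(\J(\E)) = \Tr(\tilde\E(\iden_A/d_A)^2)$. I would establish this through the Stinespring isometry $V:\h_A\to\h_B\otimes\h_E$ of Eq.~\eqref{eq:stinespring} and the purification $\ket{\Psi} = (V\otimes\iden_R)\ket{\Omega}$ on $\h_B\otimes\h_E\otimes\h_R$ with reference $R\cong A$. Tracing out $E$ gives $\J(\E)$ on $BR$, while tracing out both $B$ and $R$ gives $\tilde\E(\iden_A/d_A)$ on $E$; since $\ket{\Psi}$ is pure, the reduced states on the complementary partitions $BR$ and $E$ share the same nonzero spectrum and hence the same purity, which is precisely the claimed identity. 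Substituting into Eq.~\eqref{eq:unitarity_jamiolkowski} gives Eq.~\eqref{eq:unitarity_complementary}.

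The main obstacle I expect is the bookkeeping in the key step: one must keep the input swap $\mathbb{F}_A$ and output swap $\mathbb{F}_B$ cleanly separated and match indices correctly when relating $\Tr[(\E\otimes\E)(\mathbb{F}_A)\,\mathbb{F}_B]$ to $\gamma(\J(\E))$. Everything else is routine algebra with the moment formulas together with the purity-matching argument for the complementary bipartition of $\ket{\Psi}$.
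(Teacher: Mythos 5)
Your proposal is correct, and it splits into two parts of different character relative to the paper. For Eq.~\eqref{eq:unitarity_jamiolkowski} you follow essentially the same route as the paper's Appendix~\ref{appendix:unitarity}: expand $\E(\psi-\iden_A/d_A)$, integrate with the first moment $\int\psi\,d\psi=\iden_A/d_A$ and the second moment $\int\psi^{\otimes 2}d\psi=(\iden+\mathbb{F}_A)/[d_A(d_A+1)]$, and use the swap trick to identify $\Tr[(\E\otimes\E)(\mathbb{F}_A)\,\mathbb{F}_B]=\sum_{jk}\Tr\big(\E(\ketbra{j}{k})\E(\ketbra{k}{j})\big)=d_A^2\,\gamma(\J(\E))$; this is precisely the paper's computation and your prefactor bookkeeping does close correctly. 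For Eq.~\eqref{eq:unitarity_complementary}, however, you take a genuinely different and arguably cleaner path. The paper proves $\gamma(\J(\E))=\Tr(\tilde\E(\iden_A/d_A)^2)$ by a chain of algebraic trace manipulations: it writes $d_A^2\Tr(\J(\E)^2)=\sum_{n,m}\Tr\big(\E^\dagger(\ketbra{e_m}{e_n})\E^\dagger(\ketbra{e_n}{e_m})\big)$, substitutes the Stinespring form $\E^\dagger(Y)=V^\dagger(Y\otimes\iden_E)V$, and resums indices until $\tilde\E$ and its adjoint appear. You instead observe that $\J(\E)=\Tr_E\ketbra{\Psi}{\Psi}$ and $\tilde\E(\iden_A/d_A)=\Tr_{BR}\ketbra{\Psi}{\Psi}$ for the single pure state $\ket{\Psi}=(V\otimes\iden_R)\ket{\Omega}$ on $\h_B\otimes\h_E\otimes\h_R$, so the two operators are reduced states across the complementary bipartition $BR|E$ of a pure state and therefore share their nonzero spectrum, hence their purity. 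Your argument buys brevity and conceptual transparency -- the identity becomes an instance of the Schmidt decomposition rather than an index computation, and it is manifestly independent of the choice of basis -- while the paper's version buys self-containedness: it never invokes equality of spectra for purifications, only elementary trace identities. Both are valid; yours would be a legitimate simplification of the paper's part (ii).
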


Finally, let us remark that Eq.~\eqref{eq:unitarity_jamiolkowski} suggests defining the notion of \emph{conditional purity} for a bipartite system,
\begin{equation}
	\gamma_{B | A}(\rho_{AB}) :=\gamma\left(\rho_{AB}\right)-\frac{1}{d_A} \gamma\left(\rho_{A}\right).
\end{equation}
Then, unitarity of a channel is simply expressed by the scaled conditional purity of its Jamio{\l}kowski state:
\begin{equation}
	u(\E)= \frac{d_A^2}{d_A^2-1} \gamma_{B | A}(\J(\E)).
\end{equation}	

	
\subsection{Quantifying the deviation from conservation laws}
	
Typically, the expectation values of symmetry generators, $\{J^k\}_{k=1}^n$, are not constant under non-unitary $G$-covariant dynamics. In order to quantify this deviation from conservation laws we need to introduce appropriate measures. For any quantum operation $\E$ we define the \emph{directional deviation} $\Delta_k$ for the expectation value of the $J^{k}$ generator with respect to the state $\rho_A$ as
\begin{equation}
	\label{eq:dev_dir_1}
	\Delta_{k}(\rho_A,\E):=\tra{\E(\rho_A)J_B^k-\rho_A J_A^{k}}.
\end{equation}
Note that by introducing \kk{the finite deviation operator}
\begin{equation}
	\label{eq:delta_J}
	\delta J^k_A:=\E^\dagger(J_B^{k})-J_A^k,
\end{equation}
with $\E^\dagger$ denoting the adjoint of $\E$ that describes its action in the Heisenberg picture, we can rewrite Eq.~\eqref{eq:dev_dir_1} as
\begin{equation}
	\label{eq:dev_dir_2}
	\Delta_{k}(\rho_A,\E)=\Tr(\rho_A \delta J^k_A).
\end{equation}
	
As we are equally interested in the deviation from a conservation law for all conserved charges, we define the \emph{total deviation} $\Delta_\mathrm{tot}$ as the $l_2$ norm of directional deviations for all generators:
\begin{equation}	
	\label{eq:total_deviation}
	\Delta_\mathrm{tot}(\rho_A,\E):=\sum_{k=1}^n |\Delta_{k}(\rho_A,\E)|^2.
\end{equation} 
Finally, since we aim at quantifying how much a channel deviates from conservation law, independently of the input state, we introduce the \emph{average total deviation} $\Delta(\E)$:
\begin{equation}
	\label{eq:dev_avg}
	\Delta(\E): = \int d\psi \Delta_{\mathrm{tot}}(\psi,\E) = \sum_{k=1}^n\int d\psi  |\!\bra{\psi}\delta J^k_A\ket{\psi}\!|^2,
\end{equation}
where we integrate with respect to the induced Haar measure over all pure states $\psi\in\H_A$.
	
The above expression for the average total deviation $\Delta$ can clearly be rewritten in the following form
\begin{equation}
	\Delta(\E) = \sum_{k=1}^n \int d\psi \tra{(\psi \otimes \psi) (\delta J_A^k \otimes \delta J_A^k)}.
\end{equation}
Next, we can employ the identity \cite{harrow2013church}
\begin{equation}
	\int d\psi \ \psi^{\otimes N} = \frac{(d_A-1)!}{(d_A+N-1)!} \sum_\pi P_{\pi},
\end{equation}
where $\pi$ is any permutation on $N$ symbols and $P_{\pi}$ is the corresponding Hilbert space unitary. In our case \mbox{$N=2$}, so we only have the identity $\I$ and the flip unitary operation $\F$. Thus,
\begin{align}
	\label{eq:dev_su2_partxxxxx}
	\!\!\!\Delta(\E) &= \frac{1}{d_A(d_A+1)}\sum_{k=1}^n \tra{(\I+\F)(\delta J_A^k \otimes \delta J_A^k)}\nonumber\\
	&= \frac{1}{d_A(d_A+1)} \sum_{k=1}^n\left(\tra{\delta J_A^k}^2 + \tra{(\delta J_A^k)^2}\right). 	
\end{align}
	
	
\section{Convex structure of symmetric channels}
\label{sec:structure}
	
We now proceed to investigate the convex structure of the set of symmetric channels \mbox{$\E:\B(\h_A)\rightarrow\B(\h_B)$}, with a particular focus on its extremal points. We start with a specific example of SU(2)-covariant channels, the convex structure of which was investigated before in Ref.~\cite{nuwairan2013su2}. In this case, we provide a full characterisation of the extremal symmetric channels between irreducible systems, i.e., with Hilbert spaces of the input and output systems, $\h_A$ and $\h_B$, corresponding to spin-$j_A$ and spin-$j_B$ systems with $d_A=2j_A+1$ and $d_B=2j_B+1$.  \cc{We will refer to SU(2) symmetric channels between irreducible systems as \emph{SU(2)-irreducibly symmetric (covariant) channels}, in order to differentiate from the more general SU(2) covariant channels which need not have the extra irreduciblity assumption}. The technical results derived here will be then employed in Sec.~\ref{sec:reversal} to study optimal covariant channels for spin-inversion and spin amplification. Next, we switch to a generic case of a compact group~$G$. Here, we describe a useful decomposition of symmetric channels, which will be crucial in Sec.~\ref{sec:trade_off} to analyse the trade-off between deviations from conservations laws and deviations from closed symmetric dynamics. We also explain how, under the assumption of multiplicity-free decomposition, this leads to a complete characterisation of the extremal points of $G$-covariant channels: the corresponding Jamio{\l}kowski states are then given by normalised projectors onto irreducible subspaces. Finally, we investigate the U(1) group, which is the extreme example of a group that does not have a multiplicity-free decomposition (i.e., since U(1)-irreps are one-dimensional, all the non-trivial dynamics happens within the multiplicity spaces). In this particular case, which is physically relevant due to its connection with conservation law for energy, we find an incomplete set of extremal channels, which is however large enough to generate arbitrary action on the multiplicities of the trivial irrep $\lambda=0$ (which physically encodes the action of the channel on energy eigenstates).
	
	
\subsection{Extremal SU(2)-covariant channels between irreducible systems}
\label{sec:su2_structure}	
	
The Lie group related to rotations of a system $A$ in physical three-dimensional space is the SU(2) group. It has three generators, $\{J_A^x,J_A^y,J_A^z\}$, corresponding to angular momentum operators along three perpendicular axes, which generate general rotations. The unitary representation of such a rotation on the Hilbert space $\H_A$ is given by
\begin{equation}
	\label{eq:su2_action}
	U_A(\vv{g})=e^{i\vv{J}_A\cdot\vv{g}},
\end{equation}
with $g_k\in[0,2\pi]$ parametrising the rotation angles. 
	
Irreducible representations of SU(2) group can be classified according to total angular momentum $j$, which is either an integer or half-integer. The $j_A$-irrep is $(2j_A+1)$-dimensional and the corresponding subspace of $\H_A$ is spanned by $\{\ket{j_A,m}\}_{m=-j_A}^{j_A}$, which are the simultaneous eigenstates of total angular momentum, \mbox{$J_A^2=(J_A^x)^2+(J_A^y)^2+(J_A^z)^2$}, and $J_A^z$, with eigenvalues $j_A$ and $m$, respectively. Here, we focus on systems whose Hilbert space $\H_A$ carries a $j_A$-irrep, i.e., $\H_A$ is spanned by $d_A=2j_A+1$ vectors $\ket{j_A,m}$ that transform as the $j_A$-irrep (also meaning that there is no subspace of $\H_A$ that is left invariant under the action of $U_A(\vv{g})$). Physically this corresponds to a simple spin-$j_A$ system rather than to the one composed of many spin-$j$ systems. 
	
The set of SU(2)-covariant channels between a system whose Hilbert space carries $j_A$-irrep and a system whose Hilbert space carries $j_B$-irrep has a particularly simple structure. This is because the representations \mbox{$U_A(g)\otimes U_A^{*}(g)$} and \mbox{$U_B(g)\otimes U_B^{*}(g)$} on \mbox{$\H_A\otimes \H_A$} and \mbox{$\H_B\otimes \H_B$} have a multiplicity-free decomposition into irreps. More precisely, the tensor representation \mbox{$U_A(g)\otimes U_A^*(g)$} can be decomposed into $l$-irreps with $l$ varying between $0$ and $2j_A$. In other words,
\begin{equation}
	\label{eq:su2_irrep_decomp_A}
	\H_A\otimes\H_A=\bigoplus_{l=0}^{2j_A} \H^l,
\end{equation}
where $\H^l$ is a $(2l+1)$-dimensional Hilbert space carrying irrep $l$. Analogous statement holds for the output system~$B$. This means that the symmetry-adapted basis of ITOs for the input and output systems have no multiplicities and are given by 
\begin{subequations}
	\begin{align}
		\label{eq:su2_irrep_ITOs_1}
		\!\!\!\{T_m^l\}_{m,l}	&\quad \mathrm{with~}m\in\{-l,\dots,l\},~l\in\{0,\dots,2j_A\},\\
		\label{eq:su2_irrep_ITOs_2}
		\!\!\!\{S_m^l\}_{m,l}	&\quad \mathrm{with~}m\in\{-l,\dots,l\},~l\in\{0,\dots,2j_B\}.		
	\end{align}
\end{subequations}
We note that we can choose $T^0_0$ and $S^0_0$, corresponding to the trivial irrep $l=0$, to be given by $\iden_A/\sqrt{d_A}$ and $\iden_B/\sqrt{d_B}$, and $T^1_m$ to be related to the angular momentum operators in the following way
\begin{equation}
	\label{eq:su2_ITO}
	T_0^1=\frac{J^z_A}{(\|\vv{J}_A\|/\sqrt{3})},\quad T^1_{\pm 1}=\frac{J_A^x\pm i J_A^y}{\sqrt{2}(\|\vv{J}_A\|/\sqrt{3})} ,
\end{equation}
where
\begin{equation}	
	\label{eq:spin_length}
	\|\vv{J}_A\|:=\kk{\tra{3(J_A^z)^2}^{\frac{1}{2}}}=\sqrt{j_A(j_A+1)(2j_A+1)},
\end{equation}
and with analogous expressions for the output system $B$ with $S^1_m$.
	
Moreover, as the multiplicity spaces are 1-dimensional, the operators \mbox{$L^\lambda(\E)$} from Eq.~\eqref{eq:liouville_decomp} of Theorem~\ref{thm:liouville} become scalars $f_\lambda(\E)$. Therefore, the block diagonal decomposition of the Liouville representation of an SU(2)-covariant channel between irreducible systems has a simple block structure given by
\begin{equation}
	\label{eq:liouville_su2_irrep}
	L(\E)=\bigoplus_{l=0}^{2\min (j_A,j_B)} f_{l}(\E)\iden^{l}.
\end{equation}
Employing the symmetry adapted basis of ITOs through Eq.~\eqref{eq:action_on_ITOs}, we can equivalently express the above by
\begin{equation}
	\label{eq:su2_irrep_transform_ITOs}
	\E(T^{l}_{m})=f_{l}(\E)S^{l}_{m}.
\end{equation}
In other words, the covariant channel $\E$ transforms irreducible systems by simply scaling ITOs with irrep-dependent magnitudes encoded in the scaling vector $\vv{f}(\E)$. As a result, the initial state $\rho_A$, given by Eq.~\eqref{eq:ITO_decomp} (without the sum over multiplicities~$\alpha$), is transformed into 
\begin{equation}
	\label{eq:su2_irrep_transform_rho}
	\E(\rho_A)=\frac{\iden_B}{d_B}+\sum_{l=1}^{2\min (j_A,j_B)} f_{l}(\E)\mathbf{r}^{l}\cdot\mathbf{S^{l}},
\end{equation}
where we have used the fact that ITOs $T^0_0$ and $S^0_0$ are given by identities and, due to \cc{the trace preserving condition}, $f_0(\E)=1$. 
	
At this point we know that the action of an SU(2)-covariant channel $\E$ between irreducible systems is fully described by a scaling vector $\vv{f}(\E)$ through Eq.~\eqref{eq:su2_irrep_transform_rho}, but to understand the relation between deviations from conservation laws and unitarity of $\E$, we need to find the constraints on $\vv{f}(\E)$. In particular, we will be interested in possible values of $f_1(\E)$, since this number quantifies how much the angular momentum of the system changes under the action of $\E$. To achieve this, we will look at the Jamio{\l}kowski state $\J(\E)$, enforce its positivity (to ensure CP condition), and \mbox{$\trb{B}{\J(\E)}=\iden_A/d_A$} (to ensure TP condition), thus finding constraints on $\vv{f}(\E)$ which ensure that it corresponds to a valid quantum channel. 
	
Using Theorem~\ref{thm:jamiolkowski}, we find that the Jamio{\l}kowski state is also block-diagonal, and the structure of the blocks is again very simple. This is because the tensor representation \mbox{$U_B(g)\otimes U_A^*(g)$} can be decomposed into $L$-irreps with \mbox{$L\in\{|j_A-j_B|,\dots,j_A+j_B\}$} and no multiplicities. In other words,
\begin{equation}
	\label{eq:su2_irrep_decomp_AB}
	\H_B\otimes\H_A=\bigoplus_{L=|j_A-j_B|}^{j_A+j_B} \H^L,
\end{equation}
where $\H^L$ is a $(2L+1)$-dimensional Hilbert space carrying irrep $L$. As the multiplicity spaces are 1-dimensional, the operators \mbox{$\J^\lambda(\E)$} from Eq.~\eqref{eq:jamiolkowski_decomp} become scalars, and thus we have
\begin{equation}
	\label{eq:jamiolkowski_su2_irrep}
	\!\!\!	\J(\E)=\sum_{L=|j_A-j_B|}^{j_A+j_B} p_L(\E) \J(\E^{L}),\quad\! \J(\E^{L})=\frac{\iden^L}{2L+1}.	\!
\end{equation}
Crucially, each $\J(\E^{L})$ corresponds to a valid Jamio{\l}kowski state: it is clearly positive semi-definite, and the trace-preserving condition can be shown as follows. First, observe that for all $g$ we have
\begin{equation}
	U_A(g) \trb{B}{\J(\E^L)} U_A^\dagger(g)=\trb{B}{\J(\E^L)}.
\end{equation}
Then, since \mbox{$\trb{B}{\J(\E^L)}\in\B(\H_A)$} commutes with an irrep $U_A(g)$ for all $g$, we can use Schur's Lemma~\ref{lem:schur} to conclude that \mbox{$\trb{B}{\J(\E^L)}$} must be proportional to identity. Finally, normalisation of $\J(\E^L)$ ensures that \mbox{$\trb{B}{\J(\E^L)}=\iden_A/d_A$}. Moreover, since the supports of $\J(\E^{L})$ are disjoint, $\E^L$ correspond to extremal channels,
\begin{equation}
	\label{eq:decomposition_cptp}
	\E=\sum_{L=|j_A-j_B|}^{j_A+j_B} p_L(\E) \E^{L}.
\end{equation}
Clearly,
\begin{equation}
	\vv{f}(\E)=\sum_{L=|j_A-j_B|}^{j_A+j_B} p_L(\E) \vv{f}(\E^{L}),
\end{equation}
so that in order to find constraints on $\vv{f}(\E)$, we only need to find the values of $\vv{f}(\E^L)$ for all $L$. More precisely, the set of allowed $\vv{f}(\E)$ is then given by a convex set with extremal points given by $\vv{f}(\E^L)$.
	
We will find $\vv{f}(\E^L)$ by deriving the explicit action of $\E^L$ on the basis elements \mbox{$\{\ketbra{j_A,m}{j_A,n}\}$} with \mbox{$m,n\in\{-j_A,\dots,j_A\}$}. First, note that $\iden^L$ appearing in the expression for $\J(\E^L)$ is given by
\begin{equation}
	\iden^L=\sum_{k=-L}^L \ketbra{L,k}{L,k}.
\end{equation}
Next, using Clebsch-Gordan expansion for the above total angular momentum states $\ket{L,k}$ in terms of the angular momentum states of $\H_A$ and $\H_B$, we write
\begin{equation}
	\hspace{-0.28cm}\ket{L,k}\!=\!\sum_{m=-j_B}^{j_B}\sum_{n=-j_A}^{j_A} \!\!\!\braket{j_B,m;j_A,n}{L,k}   \ket{j_B,m;j_A,n}.
\end{equation}
Now, employing the identity
\begin{equation}
	\E(X)=d_A\trb{A}{\J(\E)(\iden_B\otimes X^{*\dagger})}
\end{equation}
that holds for all $X\in\B(\H_A)$, as well as the following two properties of Clebsch-Gordan coefficients,
\begin{subequations}
	\begin{align}
		&\braket{j_B,m;j_A,n}{L,k}\propto \delta_{m+n,k},\\
		&\braket{j_B,m;j_A,n}{L,k}=\nonumber\\
		&\hspace{-0.3cm}\quad=(-1)^{j_A-L+m}\sqrt{\frac{2L+1}{2j_A+1}}\braket{j_B,-m;L,k}{j_A,n},
	\end{align}
\end{subequations}
we arrive at
\begin{align}
	\label{eq:su2_extremal}
	&\E^{L}(\ketbra{j_A,n}{j_A,m})=\nonumber\\
	&\quad=\sum_{k=-L}^{L}\left(\, \clebsch{j_B}{n-k}{L}{k}{j_A}{n}\langle j_A,m|j_B,m-k;L,k\rangle\nonumber\right.\\
	&\left.\quad\quad\quad\quad\quad\quad ~\ket{j_B,n-k}\!\bra{j_B,m-k} \, \right).
\end{align}
Note that the action of an extremal channel $\E^{L}$ can be physically interpreted as first splitting the original system with total angular momentum $j_A$ into two subsystems with total angular momenta $j_B$ and $L$ (using Clebsch-Gordan coefficients), and then discarding the second subsystem. These extremal channels have been examined in detail in previous literature under the name of EPOSIC channels~\cite{nuwairan2013su2}. 
	
Finally, using Eq.~\eqref{eq:su2_irrep_transform_ITOs} and noting that there exists $m',n'$ and $k$ such that \mbox{$\bra{j_B,n'}S^{l}_{k}\ket{j_B,m'}\neq 0$}, we can write
\begin{equation}
	f_{l}(\E^L)=\frac{\bra{j_B,n'}\E^{L}(T^{l}_{k})\ket{j_B,m'}}{\bra{j_B,n'}S^{l}_{k}\ket{j_B,m'}}.
\end{equation} 
We emphasise that that the quantity above is independent of $m',n'$ and $k$. Now, by expanding $T^{l}_{k}$ in the basis \mbox{$m,n\in\{-j_A,\dots,j_A\}$}, using Eq.~\eqref{eq:su2_extremal}, and employing Wigner-Eckart theorem, we can derive the following expression for $f_{l}(\E^L)$:
\begin{align}
	\label{eq:coefficients}
	\!\!\!f_{l}(\E^L)=\frac{\bra{j_A}|T^{l}|\ket{j_A}}{\bra{j_B}|S^{l}|\ket{j_B}}\sum_{k=-L}^{L}&\frac{\clebsch{j_A}{j_B+k}{l}{0}{j_A}{j_B+k}}{\clebsch{j_B}{j_B}{l}{0}{j_B}{j_B}}\nonumber\\
	&\!\!\!\!\cdot\clebsch{j_B}{j_B}{L}{k}{j_A}{j_B+k}^2,\!
	\end{align}
where $\bra{j_A}|T^{l}|\ket{j_A}$ and $\bra{j_B}|S^{l}|\ket{j_B}$ are reduced matrix elements independent of $n',m'$ or $k$. It simplifies significantly when $j_A=j_B=j$:
\begin{align}
	\label{eq:f_components}
	f_{l}(\E^L)&=\sum_{k=-L}^{L}&\hspace{-0.3cm}\frac{\clebsch{j}{j+k}{l}{0}{j}{j+k}}{\clebsch{j}{j}{l}{0}{j}{j}}\clebsch{j}{j}{L}{k}{j}{j+k}^{2}.
\end{align}
We provide the step-by-step derivation of the above expressions in Appendix~\ref{app:su2}, where we also show how to obtain the explicit formula for $f_{1}(\E^L)$, 
\begin{align}
	\label{eq:explicit_f1}
	f_{1}(\E^L)=&\left(\frac{j_A(j_A+1)+ j_B(j_B+1)-L(L+1) }{2j_B(j_B+1)} \right)\cdot \nonumber\\
	&\cdot\sqrt{\frac{j_B(j_B+1)(2j_A+1)}{j_A(j_A+1)(2j_B+1)}},
\end{align}
which will be crucial for our analysis of spin-inversion and spin-amplification.
	
Let us conclude this section by re-iterating the main result in the form of the following theorem. 
\begin{theorem}
	An SU(2)-covariant channel $\E$ between two irreducible systems, carrying irreps $j_A$ and $j_B$, is fully specified by a probability distribution $\vv{p}(\E)$ of size {\cc{$2\,{\rm{\min}}(j_A,j_B) +1$}}. Its action on $X\in\B(\H_A)$ is then given by
	\begin{align}
		\E(X)&=\sum_{L=|j_A-j_B|}^{j_A+j_B} p_L(\E) \E^L(X)\nonumber\\
		&=\sum_{L=|j_A-j_B|}^{j_A+j_B}\sum_{l=-j_A}^{j_A}\sum_{k=-l}^{l} p_L(\E) f_l(\E^L) x^l_k S^l_k,
	\end{align}
	where $x^l_k=\tra{T^{l\dagger}_k X}$ and $f_l(\E^L)$ are specified by Eq.~\eqref{eq:coefficients}.
\end{theorem}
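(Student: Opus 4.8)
The plan is to assemble the theorem from the structural results already established in this section, in three stages: first the simplex structure of the Jamio\l{}kowski representation, then the counting of its vertices, and finally the explicit action on an arbitrary operator. The statement is essentially a consolidation of Eqs.~\eqref{eq:jamiolkowski_su2_irrep}--\eqref{eq:coefficients}, so the work lies in justifying each link rather than in a new computation.

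First I would invoke Theorem~\ref{thm:jamiolkowski} to write the Jamio\l{}kowski state of an arbitrary SU(2)-covariant channel between irreducible systems in block-diagonal form. The crucial input is that $U_B(g)\otimes U_A^*(g)$ decomposes multiplicity-free into $L$-irreps with $L$ ranging over $\{|j_A-j_B|,\dots,j_A+j_B\}$, as in Eq.~\eqref{eq:su2_irrep_decomp_AB}. Because every multiplicity space is one-dimensional, each block $\J^\lambda(\E)$ collapses to a single nonnegative scalar $p_L(\E)$, giving $\J(\E)=\sum_L p_L(\E)\J(\E^L)$ with $\J(\E^L)=\iden^L/(2L+1)$ as in Eq.~\eqref{eq:jamiolkowski_su2_irrep}. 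I would then verify, exactly via Schur's Lemma~\ref{lem:schur} and normalisation as already argued, that each $\J(\E^L)$ is a valid (positive, trace-preserving) Jamio\l{}kowski state, so that $\E^L$ is a bona fide channel.

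Second I would establish that this decomposition is a simplex and count its vertices. Since the supports of the $\J(\E^L)$ are the mutually orthogonal irrep subspaces $\H^L$, the states $\{\J(\E^L)\}_L$ have disjoint supports; positivity of $\J(\E)$ then forces $p_L(\E)\geq 0$, while the trace-preserving condition $\trb{B}{\J(\E)}=\iden_A/d_A$ forces $\sum_L p_L(\E)=1$. Disjointness of supports means no $\J(\E^L)$ is a convex combination of the others, so the $\E^L$ are precisely the extremal points and $\vv{p}(\E)$ is genuinely a probability distribution. Counting the admissible values of $L$ gives $(j_A+j_B)-|j_A-j_B|+1=2\min(j_A,j_B)+1$, the claimed size of $\vv{p}(\E)$. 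Expanding an arbitrary $X\in\B(\H_A)$ in the multiplicity-free ITO basis as $X=\sum_{l,k} x^l_k T^l_k$ with $x^l_k=\tra{T^{l\dagger}_k X}$ (Eq.~\eqref{eq:ITO_decomp}), linearity of each extremal channel together with the scaling rule $\E^L(T^l_k)=f_l(\E^L)S^l_k$ (Eq.~\eqref{eq:su2_irrep_transform_ITOs}, which follows from the Liouville block structure of Theorem~\ref{thm:liouville}) yields $\E^L(X)=\sum_{l,k} f_l(\E^L)x^l_k S^l_k$; substituting into $\E=\sum_L p_L(\E)\E^L$ produces the double-sum expression in the statement.

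The only genuinely nontrivial ingredient is the closed form for $f_l(\E^L)$, whose derivation rests on the Clebsch--Gordan expansion of the extremal action in Eq.~\eqref{eq:su2_extremal} and on the Wigner--Eckart theorem to strip off the $k$-dependent geometric factor and isolate the reduced matrix elements, arriving at Eq.~\eqref{eq:coefficients}. I expect this to be the main obstacle, and I would relegate its step-by-step verification to Appendix~\ref{app:su2}; everything else in the theorem is a bookkeeping assembly of the simplex structure and the ITO scaling rule already in hand.
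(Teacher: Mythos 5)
Your proposal is correct and follows essentially the same route as the paper: the theorem there is stated as a consolidation of Sec.~\ref{sec:su2_structure}, whose derivation proceeds exactly as you outline — Theorem~\ref{thm:jamiolkowski} plus the multiplicity-free decomposition of $U_B\otimes U_A^*$ collapsing the Jamio{\l}kowski blocks to scalars $p_L(\E)$, Schur's lemma giving the trace-preserving property of each $\J(\E^L)$, disjoint supports giving extremality and the simplex structure, the ITO scaling rule $\E^L(T^l_k)=f_l(\E^L)S^l_k$ from the Liouville block structure, and the Clebsch--Gordan/Wigner--Eckart computation of $f_l(\E^L)$ deferred to Appendix~\ref{app:su2}. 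No gaps; your vertex count $(j_A+j_B)-|j_A-j_B|+1=2\min(j_A,j_B)+1$ matches the claimed size of $\vv{p}(\E)$.
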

	
	
\subsection{General decomposition of G-covariant channels}
\label{sec:general_structure}
	
Let $U_A$ and $U_B$ be unitary representations of a compact group $G$ acting on $\h_A$ and $\h_B$, respectively. We are interested in quantum channels \mbox{$\E:\B(\h_A)\rightarrow\B(\h_B)$} that are symmetric under these actions. As explained in Sec.~\ref{sec:introsymmops}, the corresponding Jamio{\l}kowski state $\J(\E)$ will commute with the tensor product representation $U_{B}\otimes U_A^{*}$, which decomposes the Hilbert space $\h_B\otimes \h_A$ into
\begin{equation}
	\label{eq:isomorphism}
	\h_B\otimes \h_A \cong \bigoplus_{\lambda\in \Lambda} \h^{\lambda}\otimes \mathbb{C}^{m_{\lambda}}.	
\end{equation}
Here, $\Lambda$ is a subset of all non-equivalent irreducible representations labelled generically by $\lambda$ that appear with multiplicities $m_{\lambda}$ (denoting the dimension of the multiplicity space). 
	
From Theorem~\ref{thm:jamiolkowski} we know that under such a decomposition the Jamio{\l}kowski state of a symmetric channel has a block-diagonal structure:
\begin{equation}
	\J(\E) = \bigoplus_{\lambda\in \Lambda} \frac{\iden^{\lambda}}{d_{\lambda}}\otimes \J^{\lambda}(\E).
\end{equation}
Note that in the above \mbox{$\J^{\lambda}(\E)$} are bounded operators on a $m_{\lambda}$ dimensional complex space, and $\iden^{\lambda}$ acts as identity on the $\lambda$-irrep representation space $\h^{\lambda}$. Let us now define 
\begin{equation}
	\rho^{\lambda}(\E):= \frac{\J^{\lambda} (\E)}{p_{\lambda}(\E)},\quad p_{\lambda}(\E)=\Tr(\J^{\lambda}(\E)).
\end{equation}
Since $\E$ is completely positive, we have $\J(\E) \geq 0$ and thus $\rho^{\lambda}(\E) \geq 0 $. Moreover, the trace-preserving property of $\E$ implies that $\sum_{\lambda\in \Lambda} p_{\lambda}(\E) = 1$. Therefore, $p_{\lambda}(\E)$ is a probability distribution and $\rho^{\lambda}(\E)$ is a valid quantum state on $GL(\mathbb{C}^{m_{\lambda}})$. One should keep in mind, however, that there will be additional constraints on $\rho^{\lambda}(\E)$ coming from the trace-preserving condition.
	
We can thus write
\begin{equation}
	\J(\E) = \bigoplus_{\lambda\in \Lambda} p_{\lambda}(\E) \  \frac{\iden^{\lambda}}{d_{\lambda}}\otimes \rho^{\lambda}(\E).
	\label{eq:choigeneraldecomp}
\end{equation}
Now, recall that any state $\rho^{\lambda}(\E)\in GL(\mathbb{C}^{m_{\lambda}})$ can be viewed as a probability distribution over all pure state such that:
\begin{equation}
	\rho^{\lambda}(\E) = \int d\psi^{\lambda} \ r_{\E}(\psi^{\lambda}) \ketbra{\psi^{\lambda}}{\psi^{\lambda}} \ 
\end{equation}
where $|\psi^{\lambda}\> \in \mathbb{C}^{m_{\lambda}}$, integration is over all such pure states (according to the Haar measure) with \mbox{$\int d\psi^{\lambda}  \, r_{\E}(\psi^{\lambda}) = 1$} and $r_{\E}(\psi^{\lambda})\geq 0$. We can then define the following operators,
\begin{equation}
	\label{eq:pure_jamiolkowski}
	\J^{\lambda}_{\psi^{\lambda}}: = \frac{\iden^{\lambda}}{d_{\lambda}}\otimes \ketbra{\psi^{\lambda}}{\psi^{\lambda}},
\end{equation}
which should be viewed as elements of $\B(\h_B\otimes\h_A)$ that are positive and have trace one. Therefore, any symmetric Jamio{\l}kowski state $\J(\E)$ can be written as follows
\begin{equation}
	\J(\E) = \sum_{\lambda\in \Lambda}  p_{\lambda}(\E) \int d\psi^{\lambda} \ r_{\E}(\psi^{\lambda}) \J^{\lambda}_{\psi^{\lambda}}.
\end{equation}
This directly leads to the following decomposition of any $G$-covariant channel:
\begin{equation}
	\E = \sum_{\lambda\in \Lambda} p_{\lambda}(\E) \int d\psi^{\lambda} \ r_{\E} (\psi^{\lambda}) \E^{\lambda}_{\psi^{\lambda}}.
\end{equation}
Here, $\E^{\lambda}_{\psi^{\lambda}}$ are CP maps corresponding to Jamio{\l}kowski states \mbox{$\J^{\lambda}_{\psi^{\lambda}}$}. Note, however, that although the above resembles a convex decomposition over extremal channels $\E^{\lambda}_{\psi^{\lambda}}$, these are not necessarily trace-preserving. Therefore, the set of extremal $G$-covariant quantum channels may be much more complicated, e.g., with \mbox{$\ketbra{\psi^{\lambda}}{\psi^{\lambda}}$} in Eq.~\eqref{eq:pure_jamiolkowski} replaced by a mixed state.
	
More can be said about the structure of extremal channels under additional assumptions. The particular case we consider here is given by these symmetries for which representations $U_A$ and $U_B$ of a compact group $G$ (acting on the input and output Hilbert spaces $\h_A$ and $\h_B$) are such that $\h_{B}\otimes \h_A$, with the tensor product representation $U_B\otimes U_A^{*}$, has a multiplicity-free decomposition,
\begin{equation}
	\h_B\otimes\h_A = \bigoplus_{\lambda\in \Lambda} \h^{\lambda}.
\end{equation}
Moreover, we will also require that $U_A$ is an irrep. One example of a group satisfying these assumptions is the SU(2) symmetry with the input system being irreducible, which we studied in detail in Sec.~\ref{sec:su2_structure}. For completeness, we remark that previous works~\cite{stembridge2003multiplicity} have fully characterised under what conditions tensor products of irreducible representations have a multiplicity-free decomposition for all connected semisimple complex Lie groups. In particular, if $G$ is a simple Lie group (e.g., $SL(d)$), then either $U_B$ or $U_A^{*}$ must correspond to an irrep with the highest weight being a multiple of the fundamental representation. For example, for the group $SU(3)$ with the fundamental irrep labelled by $\mathbf{3}$, we have a multiplicity-free decomposition $\mathbf{3}\otimes \bar{\mathbf{3}}  = \mathbf{8}\oplus \mathbf{1}$. This stands to show that the assumptions can still include a large class of symmetries beyond the canonical SU(2) example, e.g., Ref.~\cite{mozrzymas2017structure} studies covariant channels with respect to finite groups with multiplicity-free decomposition. 
	
For groups satisfying these conditions, Eq.~\eqref{eq:choigeneraldecomp} simplifies significantly and takes the following form
\begin{equation}
	\label{eq:directdecompChoi}
	\!\!\!	\J(\E)=\bigoplus_{\lambda\in \Lambda} p_\lambda(\E) \J(\E^{L}),\quad\! \J(\E^{\lambda})=\frac{\iden^\lambda}{d_\lambda}.	\!
\end{equation}
Here, by the same argument as in Sec.~\ref{sec:su2_structure}, $p_\lambda(\E)$ is a probability distribution and each $\J(\E^{\lambda})$ is a positive operator satisfying \mbox{$\trb{B}{\J(\E^\lambda)}=\iden_A/d_A$}. Therefore, each $\J(\E^{\lambda})$ will uniquely correspond to a CPTP map $\E^{\lambda}:\B(\h_A)\rightarrow\B(\h_B)$ and, since $\J(\E^{\lambda})$ act on orthogonal subspaces, they will be linearly independent operators. Equivalently, this ensures that $\E^{\lambda}$ are extremal points of the set of $G$-covariant channels. We can thus characterise $\E^{\lambda}$ in terms of Jamio{\l}kowski states, Kraus operators and Stinespring dilation through the following theorem.
\begin{theorem}
	\label{thm:extremalchannelmultipfree}
	Let $G$ be a compact group with representations $U_A$ and $U_B$ acting on Hilbert spaces $\h_A$ and $\h_B$. Suppose that $U_B\otimes U_A^{*}$ is a multiplicity-free tensor product representation with non-equivalent irreps labelled by elements of a set $\Lambda$, and that $U_A$ is an irrep. Then, the convex set of $G$-covariant quantum channels $\E:\B(\h_A)\rightarrow\B(\h_B)$ has $|\Lambda|$ distinct isolated extremal points given by channels $\E^{\lambda}$ for $\lambda\in \Lambda$. Each $\E^{\lambda}$ can be characterised by the following: 	
	\begin{enumerate}
		\item A unique Jamio{\l}kowski state 
		\begin{equation}
			\J(\E^{\lambda}) = \frac{\iden^{\lambda}}{d_\lambda}.
		\end{equation} 
		\item Kraus decomposition $\{E^{\lambda}_{k}\}_{k=1}^{d_\lambda}$ such that:
		\begin{equation}
			\E^{\lambda}(\rho) = \sum_{k} E^{\lambda}_{k} \rho (E^{\lambda}_k)\hc,
		\end{equation}
		with $E^{\lambda}$ forming a $\lambda$-irreducible tensor operator transforming as \mbox{$U_{B}(g) E^{\lambda} U_{A}(g)\hc = \sum_{k'} v^{\lambda}_{k'k}(g) E^{\lambda}_{k}$}, where $v^{\lambda}_{k'k}$ are matrix coefficients of the $\lambda$-irrep.\\
		\item A symmetric isometry $W^{\lambda}:\h_A \rightarrow\h_B\otimes\h^{\lambda}$ such that
		\begin{equation}
			\E^{\lambda}(\rho) =\Tr_{\h^{\lambda}}( W^{\lambda}\rho (W^{\lambda})\hc).
		\end{equation}
		Also, the minimal Stinespring dilation dimension for $\E^\lambda$ is given by $d_\lambda$.
	\end{enumerate}
\end{theorem}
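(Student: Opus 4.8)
The plan is to build the entire statement out of the block decomposition of the Jamio\l kowski state already recorded in Eq.~\eqref{eq:directdecompChoi}, namely $\J(\E)=\bigoplus_{\lambda}p_\lambda(\E)\,\iden^\lambda/d_\lambda$, and then transport that structure into the Kraus and Stinespring pictures using the vectorisation correspondence. Item~1 of the theorem is essentially this equation restricted to a single block, so the real content is (i) turning the disjointness of the supports into a genuine extremality and counting statement, and (ii) extracting the Kraus operators and the covariant isometry and verifying their transformation properties.

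For extremality I would argue as follows. Under the multiplicity-free assumption each block operator $\J^\lambda(\E)$ is a scalar, so the only covariant \emph{and} trace-preserving Jamio\l kowski state supported entirely within the irrep block $\h^\lambda$ is $\iden^\lambda/d_\lambda$ itself. Hence if $\E^\lambda=\tfrac12(\E_1+\E_2)$ for covariant channels $\E_1,\E_2$, then $\J(\E^\lambda)=\iden^\lambda/d_\lambda=\tfrac12(\J(\E_1)+\J(\E_2))$; positivity of $\J(\E_1),\J(\E_2)$ forces both to be supported on $\h^\lambda$, and trace normalisation then forces $\J(\E_1)=\J(\E_2)=\iden^\lambda/d_\lambda$. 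Injectivity of the Jamio\l kowski map gives $\E_1=\E_2=\E^\lambda$. Since the supports of $\{\J(\E^\lambda)\}_{\lambda\in\Lambda}$ are mutually orthogonal, the $\J(\E^\lambda)$ are linearly independent, the coefficients $p_\lambda(\E)$ in Eq.~\eqref{eq:directdecompChoi} are uniquely determined, and the covariant channels form a simplex whose $|\Lambda|$ vertices are exactly the isolated extreme points $\E^\lambda$.

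For item~2 I would spectrally decompose $\J(\E^\lambda)=\tfrac{1}{d_\lambda}\sum_{k=1}^{d_\lambda}\ketbra{\lambda,k}{\lambda,k}$, where $\{\ket{\lambda,k}\}$ is an orthonormal basis of $\h^\lambda\subset\h_B\otimes\h_A$ carrying the $\lambda$-irrep of $U_B\otimes U_A^*$. Each eigenvector defines a Kraus operator $E^\lambda_k$ via $\ket{\lambda,k}=(E^\lambda_k\otimes\iden_A)\ket{\Omega}$, and there are exactly $d_\lambda=\mathrm{rank}\,\J(\E^\lambda)$ of them, which is the minimal Kraus (hence Stinespring) rank. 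The ITO property is the crux: using the maximally-entangled identity $(\iden\otimes U_A^*(g))\ket{\Omega}=(U_A^\dagger(g)\otimes\iden)\ket{\Omega}$ employed in the proof of Theorem~\ref{thm:jamiolkowski}, the action of $U_B(g)\otimes U_A^*(g)$ on $\ket{\lambda,k}$ corresponds precisely to the conjugation $E^\lambda_k\mapsto U_B(g)E^\lambda_k U_A^\dagger(g)$ on the operator. Since the $\ket{\lambda,k}$ transform irreducibly, this yields $U_B(g)E^\lambda_k U_A^\dagger(g)=\sum_{k'}v^\lambda_{k'k}(g)E^\lambda_{k'}$, so the Kraus operators form a $\lambda$-ITO.

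For item~3 I would set $W^\lambda:=\sum_k E^\lambda_k\otimes\ket{k}$, giving $\E^\lambda(\rho)=\Tr_{\h^\lambda}(W^\lambda\rho(W^\lambda)\hc)$ by construction; the isometry condition $(W^\lambda)\hc W^\lambda=\sum_k (E^\lambda_k)\hc E^\lambda_k=\iden_A$ is exactly the trace-preserving condition $\trb{B}{\J(\E^\lambda)}=\iden_A/d_A$. Covariance then follows by inserting the ITO relation into $W^\lambda U_A(g)$ and matching it with $(U_B(g)\otimes V^\lambda(g))W^\lambda$, which forces $V^\lambda$ to be the complex-conjugate representation $(v^\lambda)^*$ on the environment; this is a short index computation using unitarity of $v^\lambda(g)$, and it also identifies the minimal dilation dimension as $d_\lambda$. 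I expect the only genuinely delicate part to be the bookkeeping in these covariance identities---getting the conjugation direction in the ITO relation and the conjugate environment representation correct---whereas extremality, the Jamio\l kowski form, and the minimality of the dimension all drop out quickly from the block structure already in hand.
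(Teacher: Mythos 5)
Your proposal is correct and follows essentially the same route as the paper's proof: item 1 and the simplex/extremality structure come from the block decomposition of Eq.~\eqref{eq:directdecompChoi} (your explicit convex-splitting argument is a fleshed-out version of the paper's orthogonal-supports remark), item 2 comes from vectorising the spectral decomposition of $\J(\E^{\lambda})$ and using the maximally-entangled-state identity to turn irreducible transformation of the eigenvectors into the ITO relation for the Kraus operators, and item 3 from stacking those Kraus operators into $W^{\lambda}=\sum_k E^{\lambda}_k\otimes\ket{k}$, exactly as in Appendix~\ref{app:convexstruct}. The only quibbles are cosmetic: the eigenvectors must be rescaled, $\ket{\lambda,k}=\sqrt{d_\lambda}\,(E^{\lambda}_k\otimes\iden_A)\ket{\Omega}$, for the completeness relation $\sum_k (E^{\lambda}_k)\hc E^{\lambda}_k=\iden_A$ to come out exactly, and your identification of the environment representation as the conjugate $(v^{\lambda})^{*}$ is in fact more careful than the paper's own treatment of that index bookkeeping.
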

\noindent The details on how to obtain characterisations 2. and~3. from 1. can be found in Appendix~\ref{app:convexstruct}. 
	
	
\subsection{Decomposition of U(1)-covariant channels}
\label{sec:u1_structure}
	
We now proceed to the simplest example of a compact group that does not satisfy the multiplicity-free condition~--~the U(1) group. As we will see in a moment, channels symmetric with respect to U(1) group do not satisfy this condition in the strongest possible way: they act trivially on the irrep spaces (since those are one-dimensional) and are fully defined by their action within the multiplicity spaces. In that sense, the example investigated in this section is the exact opposite of SU(2)-irreducibly-covariant channels studied in Sec.~\ref{sec:su2_structure}, where the action within multiplicity spaces was trivial and channels were defined by their action within irrep spaces.
	
The U(1) group has a single generator $J_A^1$,
\begin{equation}
	\label{eq:u1_action}
	U_A(g)=e^{iJ_A^1 g},
\end{equation}
where $g\in[0,2\pi]$. For a finite-dimensional system\footnote{More precisely: for a system exhibiting cyclic dynamics, i.e., such that there exists time $t_0$ for which $e^{iH_At_0}=\iden_A$.} described by a Hilbert space $\H_A$, the U(1) group can be related to time-translations by choosing the generator to be given by the system Hamiltonian $H_A$,
\begin{equation}
	\label{eq:hamiltonian}
	H_A=\sum_{n} E_A^n \ketbra{E_A^{n}}{E_A^{n}},
\end{equation}
with $E_A^n$ denoting different energy levels, and where we restricted ourselves to non-degenerate Hamiltonians for the clarity of discussion. Indeed, substituting \mbox{$J_A^1\rightarrow H_A$} and \mbox{$g\rightarrow -t$}, we see that the group action,
\begin{equation}
	\label{eq:time_trans}
	U_A(t)=e^{-iH_At},
\end{equation}
evolves the system in time by $t$. The representation of the group on $\H_B$ is defined in an analogous way with the Hamiltonian $H_B$. Recall that, by Noether's theorem, closed unitary dynamics symmetric under time-translations, generated by $H_A$, conserves energy represented by Hamiltonian $H_A$.
	
As U(1) is an Abelian group, its irreducible representations are 1-dimensional, meaning that the symmetry adapted basis composed of ITOs satisfies
\begin{equation}
	\label{eq:u1_action_ITOs}
	\U_A^t(T^{\lambda,\alpha})=v^\lambda(t) T^{\lambda,\alpha}= e^{-i \lambda t} T^{\lambda,\alpha}.
\end{equation}
It follows that we can choose
\begin{equation}
	\label{eq:u1_ITOs}
	T^{\lambda,\alpha}=\ket{E_A^{n}}\!\bra{E_A^{n'}},\quad	S^{\lambda,\alpha}=\ket{E_B^{n}}\!\bra{E_B^{n'}}
\end{equation}
with $\lambda=E_{A/B}^n-E_{A/B}^{n'}$ and $\alpha$ enumerating multiplicities arising from the degeneracy of the Bohr spectrum of $H_A$ and $H_B$, i.e., various pairs $n,n'$ satisfying the same \mbox{$\lambda=E_{A/B}^n-E_{A/B}^{n'}$}. 
	
We consider a U(1)-covariant channel \mbox{$\E$}, with the representations of the U(1) group on the input and output spaces, $\H_A$ and $\H_B$, being given by $U_A(t)$ and $U_B(t)$, i.e., with the Hamiltonians of the input and output systems being $H_A$ and $H_B$. Employing Theorem~\ref{thm:liouville} we then get that the Liouville representation of $\E$ is block-diagonal,
\begin{equation}
	\label{eq:u1_liouville}
	L(\E)=\bigoplus_{\lambda} L^{\lambda}(\E),
\end{equation}
and from Eq.~\eqref{eq:action_on_ITOs} we find that
\begin{equation}
	\label{eq:u1_liouville_multi}
	L^{\lambda}_{\beta\alpha}(\E)= \bra{E_B^{m}} \E(\ket{E_A^{n}}\!\bra{E_A^{n'}}) \ket{E_B^{m'}},
\end{equation}
with $\lambda=E_A^n-E_A^{n'}=E_B^m-E_B^{m'}$ and $\alpha,\beta$ enumerating degeneracies, i.e., various pairs of $n,n'$ and $m,m'$ with the same energy difference $\lambda$.
	
We see that the block $\lambda=0$ describes the evolution of populations (in the energy eigenbases), while the remaining blocks describe the evolution of coherence terms between energy levels differing by $\lambda$. Therefore, $L^{\lambda=0}(\E)$ contains full information needed to study deviations from energy conservation induced by $\E$, while $L^{\lambda\neq 0}(\E)$ define how coherent $\E$ is, i.e., how close it is to a closed unitary dynamics. We note that the relation between $L^{\lambda=0}(\E)$ and $L^{\lambda\neq 0}(\E)$ has played a crucial role in the previous studies on optimal processing of coherence under thermodynamic~\cite{lostaglio2015quantum} and Markovian~\cite{lostaglio2017markovian} constraints. Here, we will use this relation to constrain the unitarity of a general U(1)-covariant channel inducing energy flows (deviating from energy conservation) described by a given stochastic matrix. Since $L^{\lambda=0}(\E)$ is crucial for our studies, we will use a shorthand notation $P^\E$ for it,
\begin{equation}
	\label{eq:population_transfer}
	P^{\E}_{mn}:=\bra{E_B^{m}} \E(\ketbra{E_A^{n}}{E_A^{n}}) \ket{E_B^{m}},
\end{equation}
and note that it is a $d_B\times d_A$ stochastic matrix, \mbox{$P^{\E}_{mn}\geq 0$} and \mbox{$\sum_m P^{\E}_{mn}=1$}. 
	
As our aim is to study the relation between deviations from conservation laws and unitarity of U(1)-covariant channels, we need to understand what are the constraints on $L^{\lambda}(\E)$. To answer this question we will look at the Jamio{\l}kowski state $\J(\E)$ and, by enforcing its positivity and \mbox{$\trb{B}{\J(\E)}=\iden_A/d_A$}, we will find constraints on matrices $L^{\lambda}$ ensuring that they correspond to a valid CPTP map. From Theorem~\ref{thm:jamiolkowski} we get that the Jamio{\l}kowski state is also block-diagonal,
\begin{equation}
	\label{eq:u1_jamiolkowski}
	\J(\E)=\bigoplus_{\lambda} \J^{\lambda}(\E).
\end{equation}
Moreover, the support of each $\J^{\lambda}(\E)$ is spanned by vectors \mbox{$\ket{E_B^n,E_A^{n'}}$} that transform as irrep $\lambda$ under \mbox{$U_{B}(t)\otimes U_{A}^{*}(t)$}, i.e., they satisfy \mbox{$E_B^n-E_A^{n'}=\lambda$}. More precisely, we have
\begin{align}
	\label{eq:u1_jamiolkowski_multi}
	&\J^{\lambda}(\E)=\frac{1}{d_A}\sum_{m,n} \bra{E_B^m}\E(\ketbra{E_B^m-\lambda}{E_B^n-\lambda})\ket{E_B^n}\nonumber\\
	&\quad\quad\quad\quad\quad\quad\quad~ \ketbra{E_B^m,E_B^m-\lambda}{E_B^n,E_B^n-\lambda},
\end{align}
where $\ket{E_B^m-\lambda}$ is a shorthand notation for $\ket{E_A^{m'}}$ with $m'$ such that \mbox{$E_A^{m'}=E_B^m-\lambda$}, and the summation is performed only over the indices $m,n$ for which $E_B^m-\lambda$ and $E_B^n-\lambda$ correspond to valid energies of $H_A$. 
	
The positivity of $\J(\E)$ is now equivalent to the positivity of $\J^{\lambda}(\E)$ for all $\lambda$, while the partial trace condition is fulfilled automatically as long as $P^\E$ is a stochastic matrix. Importantly, the diagonal of $\J^{\lambda}(\E)$ is given by $P^\E_{m,m-\lambda}$ ($m-\lambda$ is such $m'$ that satisfies \mbox{$E^{m'}_A=E^m_B-\lambda$}), while the off-diagonal terms describe transformation of coherences. One can now construct extremal U(1)-covariant channels by simply coherifying any stochastic matrix $\Gamma$ to a quantum channel with the constraint of preserving the block-diagonal structure~\cite{korzekwa2018coherifying}. More precisely, for every stochastic matrix $\Gamma$ and a set of phases $\{\phi_{\lambda,m}\}$, one can construct an extremal U(1)-covariant channel $\E^{\Gamma,\vv{\phi}}$ with the Jamio{\l}kowski state given by
\begin{equation}
	\label{eq:extremal_u1}
	\J(\E^{\Gamma,\vv{\phi}})=\sum_\lambda \ketbra{\psi^{\Gamma,\vv{\phi}}_\lambda}{\psi_\lambda^{\Gamma,\vv{\phi}}},
\end{equation}	
with
\begin{equation}
	\ket{\psi_\lambda^{\Gamma,\vv{\phi}}}:=\frac{1}{\sqrt{d_A}}\sum_{m} e^{i\phi_{\lambda,m}}\sqrt{\Gamma_{m,m-\lambda}}\ket{E_B^m,E_B^m-\lambda},
\end{equation}
where $\Gamma$ describes $P^{\E^{\Gamma,\vv{\phi}}}$ and the same notation applies to its elements $\Gamma_{mn}$.	It is a straightforward calculation to show that the corresponding map is CPTP. Moreover, since its Jamio{\l}kowski state is proportional to a projector on each block, it is extremal. 
	
We want to note, however, that the above construction in general does not produce all extremal U(1)-covariant channels. As a counterexample, consider the following Jamio{\l}kowski state
\begin{equation}
	\J(\E^{\Gamma',\vv{\phi}'}_*)=\sum_{\lambda\neq \lambda_*} 	\ketbra{\psi^{\Gamma',\vv{\phi}'}_\lambda}{\psi_\lambda^{\Gamma',\vv{\phi}'}}+\rho_{\lambda_*}.
\end{equation}	
Since the above is extremal on each block $\lambda\neq \lambda_*$, the possibility of decomposing it as a convex combination of Jamio{\l}kowski states from Eq.~\eqref{eq:extremal_u1} is equivalent to the possibility of decomposing $\rho_{\lambda_*}$ as
\begin{equation}
	\rho_{\lambda_*}=\int d\vv{\phi} \ \mu(\vv{\phi}) \ketbra{\psi^{\Gamma',\vv{\phi}}_\lambda}{\psi_\lambda^{\Gamma',\vv{\phi}}}.
\end{equation}
In other words, it would need to hold that every density matrix of size $d$ can be decomposed into a convex combination of pure states with the same diagonal. This, however, is not true in general (it holds for $d=2$, but counterexamples can be found already for $d=4$).
	
		
\section{Spin-inversion and amplification}
\label{sec:reversal}

\subsection{Setting}

The scenario investigated in this section is as follows. We consider input and output systems, described by Hilbert spaces $\H_A$ and $\H_B$, to be spin-$j_A$ and spin-$j_B$ systems. We denote the spin angular momenta operators (with respect to a Cartesian coordinate frame) by 
\begin{equation}
	\mathbf{J}_{A}=(J_{A}^{x},J_{A}^{y}, J_{A}^{z}),
\end{equation}
and analogously for $\mathbf{J}_{B}$. These are traceless and for every \mbox{$k\in\{x,y,z\}$} satisfy
\begin{equation}
	\tr( (J_{A}^{k})^2) =\frac{\|\vv{J_A}\|^2}{3},
\end{equation}
where $\|\vv{J}_A\|$ was defined in Eq.~\eqref{eq:spin_length}. Analogous conditions hold for system $B$. We recall that these spin operators are generators for the SU(2) irreducible representations on $\h_A$ and $\h_B$, and they span the adjoint irrep (i.e the three dimensional $1$-irrep) in the decomposition of the operator spaces $\B(\h_A)$ and $\B(\h_B)$. In other words, $J^k_A$ and $J^k_B$ are (unnormalised) ITOs $T_k^1$ and $S_k^1$, see Eq.~\eqref{eq:su2_ITO}. Now, for the input and output state, $\rho\in\B(\H_A)$ and $\E(\rho)\in\B(\h_B)$, we can define spin polarisation vectors, $\mathbf{P}({\rho})$ and $\mathbf{P}({\E(\rho)})$, to be given by expectation values of the spin operator along different Cartesian axes:
\begin{equation}
	\label{eq:spin_polarization}
	P_{k}(\rho) := \Tr( J_{A}^k\rho),
\end{equation}
and similarly for the system $B$ with $\rho$ replaced by $\E(\rho)$.

Our aim is to investigate operations that isotropically invert or amplify the spin operator, so that under their action the polarisation vector scales with either some negative factor $\kappa_-$, or a positive factor $\kappa_{+}>1$. In particular, we want to determine channels $\S_{-}$ and $\S_{+}$, representing the optimal spin-inversion and spin-amplification, which are those that achieve the largest values of $|\kappa_{-}|$ and $\kappa_{+}$:
\begin{align}
	\label{eq:spin_rev_amp}
	\vv{P}(\S_\pm(\rho))&= \kappa_\pm \vv{P}(\rho), \quad \kappa_-<0, \quad \kappa_{+}> 1.
\end{align}
Equivalently, $\S_{\pm}$ may be defined in terms of their action on the generators:	
\begin{align}
	\label{eq:spin_rev_amp_2}
	\S_\pm(J^k_A)&= \kappa_\pm J^k_B \cdot \frac{\|\vv{J}_A\|^2}{\|\vv{J}_B\|^2}.	
\end{align}

First, we will take the above equations as really defining $\S_{\pm}$, without specifying their action outside of the subspace spanned by the generators $J_{A}^{k}$. This will, in principle, correspond to a large class of operations that we need to optimise over. However, since $\S_{\pm}$ acts isotropically on all states, in the next section we will show that without loss of generality one may restrict considerations to SU(2)-covariant channels. This will allow us to employ results of Sec.~\ref{sec:su2_structure} to determine optimal inversion and amplification factors $\kappa_\pm$, and to relate $\kappa_-$ to the maximal allowed deviation from conservation law under covariant dynamics. Finally, we will focus on the decoherence induced by the optimal inversion channel by comparing the action of this channel with the action induced by time-reversal symmetry.


\subsection{Optimal transformations of spin polarisation}
\label{sec:reversal_B}

We want to analyse channels $\E:\B(\h_A)\rightarrow \B(\h_B)$ that send $\vv{P}(\rho)$ to $\kappa \vv{P}(\rho)$ for all $\rho$ and some independent real constant $\kappa$, while performing arbitrary transformation on the other irreducible subspaces (ITOs). As we will now show, for every such $\E$ there exists an SU(2)-covariant channel that has the exact same action on the polarisation vector. 
By assumption,
\begin{equation}
	\forall \rho: \quad\Tr(J_{B}^k \E(\rho)) = \kappa \Tr(J_{A}^k \rho).
\end{equation}
Now, with $\U_{A}$ denoting the SU(2) representations on $\B(\H_A)$, recall that the angular momentum operators transform under rotations as
\begin{equation}
	\U_A^{g}(J_A^{k}) = \sum_{k'} v^{1}_{k'k}(g) J_{A}^{k'},
\end{equation}
where $v^{1}_{k'k}(g)$ are matrix entries of the $1$-irrep. Analogous statement holds for system $B$. Therefore, it follows that
\begin{align}
	\Tr(\U^g_{B} (J_B^k) \E(\rho)) &= \sum_{k'} v^{1}_{k'k}(g)\Tr(J_B^{k'}\E(\rho))\non 
	& = \kappa\sum_{k'} v^{1}_{k'k}(g)\Tr(J_A^{k'}\rho)\non 
	& =\kappa \tr(\U_{A}^g(J_A^k) \rho).
\end{align}
Using the cyclic property of the trace and the fact that the above must hold for all $\rho$, so in particular for $\U_{A}^g(\rho)$, we arrive at
\begin{equation}
	\Tr(J_{B}^k\ \U^{g\dagger}_{B} \circ \E\circ \U^g_{A}(\rho)) = \kappa \Tr(J_A^k \rho),
\end{equation}
or equivalently:
\begin{equation}
	\vv{P}(\U^{g\dagger}_{B} \circ \E\circ \U^g_{A}(\rho))=\kappa\vv{P}(\rho).
\end{equation}
We note that the above could also be simply deduced from the fact that $\vv{P}$ transforms under SU(2) as a three-dimensional vector in real space, so that for all $g\in \mathrm{SU(2)}$ and all $\rho$ we have
\begin{equation}
	\vv{P}(\rho)=\vv{P}(\U_g(\rho)) \quad\Longrightarrow\quad \vv{P}(\U_g^\dagger(\rho))=\vv{P}(\rho).
\end{equation}

Next, by taking the group average and noting that $\vv{P}$ is linear (since it is defined through trace in Eq.~\eqref{eq:spin_polarization}), we obtain
\begin{align}
	\vv{P}(\G[\E](\rho))&= \kappa  \vv{P}(\rho),
\end{align}
where $\G[\E]$ is the \emph{twirling} of $\E$ over all rotations,
\begin{equation}
	\G[\E]:= \int dg \ \U^{g\dagger}_{B}\circ\E\circ \U^g_{A}.
\end{equation}
The twirled channel $\G[\E]$ is SU(2)-covariant (by construction), and it has the same scaling factor $\kappa$ as $\E$. Therefore, one may assume without loss of generality that the optimal spin-inversion and amplification operations are symmetric under SU(2).

In Sec.~\ref{sec:su2_structure} we have fully characterised SU(2)-covariant quantum channels for irreducible systems and we will now employ these results. First, recall that a symmetric channel \mbox{$\E:\B(\h_A)\rightarrow\B(\h_B)$} acts on any ITO \mbox{$\{T^{l}_{k}\}_{l,k}$} by a scaling factor depending only on the particular irrep (and the channel itself) such that 
\begin{equation}
	\E(T^{l}_{k} ) = f_{l}(\E) S^{l}_k.
\end{equation}
Taking into account the particular normalisation of the spin operators it follows that:
\begin{equation}
	\E(J_A^{k}) =  f_{1}(\E) J_{B}^k\cdot \frac{\|\vv{J_A}\|}{\|\vv{J_B}\|}.
	\label{eq:spintransf}
\end{equation}
Moreover, recall that every such SU(2)-covariant $\E$ decomposes into extremal channels according to Eq.~\eqref{eq:decomposition_cptp}, which results in
\begin{equation}
	f_1(\E) = \sum_{L=|j_A-j_B|}^{j_A+j_B} p_L f_{1}(\E^{L}),
	\label{eq:f1}
\end{equation}
where $f_{1}(\E^L)$ are the scaling factors explicitly given by Eq.~\eqref{eq:explicit_f1}. 

Now, we can compare the transformation of angular momentum operators under a general covariant channel, Eqs.~\eqref{eq:spintransf}-\eqref{eq:f1}, with the transformation under spin-inversion and spin-amplification, Eq.~\eqref{eq:spin_rev_amp_2}. We see that every SU(2)-covariant channel can act as a spin-inversion or amplification with
\begin{align}
	\kappa_\pm =\frac{\|\vv{J_B}\|}{\|\vv{J_A}\|}\cdot \sum_{L=|j_A-j_B|}^{j_A+j_B} p_L f_{1}(\E^{L}),
\end{align}
as long as $\kappa_+>1$ or $\kappa_-<0$. Our aim is thus to maximise and minimise the above expression over all probability distributions $p_{L}$. Since we are optimising over a convex region, the optima will be attained by one of the extremal points so that
\begin{equation}
	\kappa_\pm^{\mathrm{opt}} = \frac{\|\vv{J_B}\|}{\|\vv{J_A}\|}\cdot f_{1}(\E^{L_\pm})
\end{equation}
for some $L_{\pm}$. These can be easily found, as we derived explicit expressions $f_{1}(\E^L)$. 

We thus arrive at:
\begin{theorem}
	The maximal spin polarisation inversion, $\vv{P}(\rho)\rightarrow \kappa_- \vv{P}(\rho)$ with $\kappa_-<0$, is achieved by an SU(2)-irreducibly extremal channel $\E^{(j_A+j_B)}$. The inversion factor $\kappa_-$ is given by
	\begin{equation}
		\label{eq:spinreversalmax}
		\kappa_- =  - \frac{j_B(2j_B+1)}{(j_A+1)(2j_A+1)}.
		\end{equation}
	\label{thm:spinreversalmax}
\end{theorem}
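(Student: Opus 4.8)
The plan is to exploit the complete description of SU(2)-irreducibly covariant channels obtained in Sec.~\ref{sec:su2_structure}, combined with the reduction just carried out above, namely that twirling lets us assume the optimiser is SU(2)-covariant with no loss of generality. Once the search is restricted to covariant channels, every candidate decomposes over the extremal EPOSIC channels as in Eq.~\eqref{eq:decomposition_cptp}, and by Eq.~\eqref{eq:f1} together with the identity $\kappa=(\|\vv{J_B}\|/\|\vv{J_A}\|)\,f_1(\E)$ the inversion factor reads
\begin{equation}
	\kappa(\E)=\frac{\|\vv{J_B}\|}{\|\vv{J_A}\|}\sum_{L=|j_A-j_B|}^{j_A+j_B} p_L\, f_1(\E^{L}).
\end{equation}
This is an affine function of the probability vector $\vv{p}$ ranging over a simplex, so both its maximum and minimum are attained at vertices. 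Hence the whole problem collapses to comparing the finitely many numbers $\kappa(\E^{L})$ and selecting the smallest, which already guarantees that $\S_-$ can be taken to be a single extremal channel $\E^{L_-}$.

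The second step is to locate the minimising vertex. Inspecting the closed form~\eqref{eq:explicit_f1} for $f_1(\E^{L})$, the only $L$-dependence enters through the term $-L(L+1)$ in the numerator, multiplied by an $L$-independent, strictly positive prefactor. Consequently $\kappa(\E^{L})$ is a strictly decreasing function of $L(L+1)$, and therefore of $L$, across the admissible range $L\in\{|j_A-j_B|,\dots,j_A+j_B\}$. The deepest inversion — the most negative value of $\kappa$ — is thus attained at the largest allowed total angular momentum $L=j_A+j_B$, which identifies the optimal inversion channel as $\S_-=\E^{(j_A+j_B)}$, precisely the extremal point demanding the largest environment, as claimed.

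The final step is to evaluate $\kappa(\E^{j_A+j_B})$ explicitly. Here I would substitute $L=j_A+j_B$ into Eq.~\eqref{eq:explicit_f1}: the angular-momentum bracket collapses neatly, since $j_A(j_A+1)+j_B(j_B+1)-(j_A+j_B)(j_A+j_B+1)=-2j_Aj_B$, and then fold in the norm ratio $\|\vv{J_B}\|/\|\vv{J_A}\|$ together with the reduced-matrix-element square-root factor of $f_1$, collecting the surviving $j(j+1)$ and $(2j+1)$ factors and checking that the sign is indeed negative. This yields the value of $\kappa_-$ in Eq.~\eqref{eq:spinreversalmax}. The convexity and monotonicity arguments are essentially immediate; the only genuine obstacle is the careful bookkeeping of the Clebsch--Gordan and reduced-matrix-element normalisations so that all the $j(j+1)$ and $(2j+1)$ factors combine correctly. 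As consistency checks I would set $j_A=j_B=j$, where the norm ratio is unity and the result must reduce to the equal-spin value $\kappa_-=-j/(j+1)$ (the universal-NOT factor $-1/3$ at $j=1/2$), and verify that the opposite vertex $L=|j_A-j_B|$ reproduces the amplification factors $\kappa_+$ of Theorem~\ref{thm:spinamplify}.
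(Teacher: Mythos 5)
Your route is exactly the paper's: twirl to reduce to SU(2)-covariant channels, use the simplex decomposition \eqref{eq:decomposition_cptp} so that the affine functional $\kappa(\E)$ is extremised at a vertex, and observe from Eq.~\eqref{eq:explicit_f1} that the $L$-dependence sits entirely in $-L(L+1)$ times a positive, $L$-independent prefactor, so the minimiser is $L=j_A+j_B$. That part is correct and identifies the optimal channel $\E^{(j_A+j_B)}$ precisely as the paper does.

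The step that fails is the last one: the substitution you describe does \emph{not} yield the value of $\kappa_-$ in Eq.~\eqref{eq:spinreversalmax}. Carrying out the bookkeeping you defer, the two square roots cancel entirely,
\begin{equation*}
\sqrt{\frac{j_B(j_B+1)(2j_A+1)}{j_A(j_A+1)(2j_B+1)}}\cdot\sqrt{\frac{j_B(j_B+1)(2j_B+1)}{j_A(j_A+1)(2j_A+1)}}=\frac{j_B(j_B+1)}{j_A(j_A+1)},
\end{equation*}
so that every vertex has
\begin{equation*}
\kappa(\E^{L})=\frac{\|\vv{J_B}\|}{\|\vv{J_A}\|}f_1(\E^{L})=\frac{j_A(j_A+1)+j_B(j_B+1)-L(L+1)}{2j_A(j_A+1)},
\end{equation*}
and at $L=j_A+j_B$ the numerator is $-2j_Aj_B$, giving $\kappa_-=-j_B/(j_A+1)$. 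This differs from the printed Eq.~\eqref{eq:spinreversalmax} by a factor $(2j_B+1)/(2j_A+1)$ and coincides with it only when $j_A=j_B$. Your own proposed consistency checks would have flagged this rather than confirmed the statement: the clean vertex formula above evaluated at $L=|j_A-j_B|$ reproduces Theorem~\ref{thm:spinamplify} exactly, and it satisfies the conservation relation Eq.~\eqref{conserve} (the complement of $\E^{(j_A+j_B)}$ carries factor $(j_A+j_B+1)/(j_A+1)$, and the two sum to one), whereas the printed formula violates both. For instance, for $j_A=1/2$, $j_B=1$ Eq.~\eqref{eq:spinreversalmax} asserts perfect inversion $\kappa_-=-1$, whose complementary channel would have to amplify by a factor $2$, exceeding the maximum $5/3$ permitted by Theorem~\ref{thm:spinamplify}; the correct value is $-2/3$. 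So your derivation, carried out honestly, establishes the optimal channel correctly but proves $\kappa_-=-j_B/(j_A+1)$; as a proof of the statement as printed, the final identification is a gap you cannot close — the displayed formula appears to be erroneous for $j_A\neq j_B$, being the lone expression inconsistent with Eq.~\eqref{eq:explicit_f1}, Theorem~\ref{thm:spinamplify}, and Eq.~\eqref{conserve}, all of which your calculation respects. You should either state this discrepancy explicitly or restrict the claimed formula to the case $j_A=j_B$, where it reduces to $-j/(j+1)$.
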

It follows that the maximal spin-inversion is achieved by the extremal channel that requires the largest environment to be realised. Indeed, for every extremal channel~$\E^{L}$, its minimal Stinespring dilation (and thus the minimal number of Kraus operators) has dimension $2L+1$. Consequently, this means that the larger the environment, the more we can invert the spin. Note that in the classical macroscopic limit of an input and output systems given by massive spins, $j_A=j_B\rightarrow \infty$, we get $\kappa_-\rightarrow -1$, corresponding to perfect spin-inversion. While for finite-dimensional systems quantum theory does not allow for perfect spin-inversion, $\mathbf{P}\rightarrow - \mathbf{P}$, the above result yields fundamental limit on maximal spin-inversion.

Moreover, the optimal spin-inversion coincides with the channel leading to the largest allowed deviation from the conservation law under the constraint of SU(2) symmetry. To see this note that the total deviation resulting from the action of $\E$ on a given input state $\rho$ (defined in Eq.~\eqref{eq:total_deviation}), can be expressed by
\begin{equation}
	\Delta_{\mathrm{tot}}(\rho,\E)  =  \| \mathbf{P}(\E(\rho)) -\mathbf{P}(\rho) \|^{2}.
\end{equation}
Using the fact that covariant dynamics can only scale ITOs, we get
\begin{equation}
	\Delta_{\mathrm{tot}}(\rho,\E) = |\kappa-1|^2\ \|\mathbf{P}(\rho) \|^2,
\end{equation} 
and thus the deviation is maximised for smallest negative~$\kappa$, which is specified by Eq.~\eqref{eq:spinreversalmax}. From the equation above it is clear that also average total deviation, $\Delta(\E)$, will be maximised by the optimal spin-inversion channel. Of course, since we deal with symmetric channels, this deviation can come only for the price of decoherence (as the conserved charge can only come from an incoherent environment). In the next section, we will quantify this decoherence by comparing the action of the optimal spin-inversion channel with the transformation induced by time-reversal symmetry; while in Sec.~\ref{sec:trade_off} we will analyse in detail the trade-off between deviations from conservation laws and decoherence for general SU(2)-covariant operations.

Finally, we can obtain an analogous bounding result for spin amplification, captured by the following theorem.
\begin{theorem}
	\label{thm:spinamplify}
	The maximal spin polarisation amplification, $\vv{P}(\rho)\rightarrow \kappa_+ \vv{P}(\rho)$ with $\kappa_+>1$, is achieved by an SU(2)-irreducibly symmetric extremal channel $\E^{(|j_A-j_B|)}$. The amplification factor $\kappa_+$ is given by:
	\begin{equation}
		\kappa_+=\left\{\begin{array}{cc}
			\frac{j_B}{j_A} &\quad \mathrm{for~}j_A\geq j_B,\\~\\
			\frac{j_B+1}{j_A+1}&\quad \mathrm{for~}j_A< j_B.
		\end{array}\right.
	\end{equation}	
\end{theorem}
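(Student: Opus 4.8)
The plan is to mirror the structure of the proof of Theorem~\ref{thm:spinreversalmax} exactly, since the setup is identical. By the twirling argument established immediately above Theorem~\ref{thm:spinreversalmax}, we may restrict without loss of generality to SU(2)-covariant channels between irreducible systems, and by Eq.~\eqref{eq:decomposition_cptp} any such channel decomposes as a convex combination of the extremal channels $\E^L$ with $L$ ranging over $\{|j_A-j_B|,\dots,j_A+j_B\}$. The amplification factor is therefore the affine functional $\kappa_+ = \frac{\|\vv{J_B}\|}{\|\vv{J_A}\|}\sum_L p_L f_1(\E^L)$, and since we are maximising a linear function over the simplex of probability distributions $\{p_L\}$, the optimum is attained at a single vertex, $\kappa_+^{\mathrm{opt}} = \frac{\|\vv{J_B}\|}{\|\vv{J_A}\|} f_1(\E^{L})$ for some $L$.

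First I would substitute the explicit values: Eq.~\eqref{eq:explicit_f1} for $f_1(\E^L)$ together with Eq.~\eqref{eq:spin_length} for the two spin lengths. The key simplification is that the product of the square-root prefactors collapses: multiplying $\frac{\|\vv{J_B}\|}{\|\vv{J_A}\|}$ by the radical appearing in Eq.~\eqref{eq:explicit_f1} reduces to $\frac{j_B(j_B+1)}{j_A(j_A+1)}$, which then cancels against the denominator $2j_B(j_B+1)$ of the leading bracket, leaving the clean affine expression
\begin{equation}
	\kappa_+(L) = \frac{j_A(j_A+1) + j_B(j_B+1) - L(L+1)}{2j_A(j_A+1)}.
\end{equation}
Since $L(L+1)$ is strictly increasing for $L\geq 0$, this is strictly decreasing in $L$, so the maximum over the allowed range is attained at the smallest value $L = |j_A-j_B|$, which identifies $\E^{(|j_A-j_B|)}$ as the optimal channel.

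It then remains to evaluate $\kappa_+(|j_A-j_B|)$ in the two regimes. For $j_A\geq j_B$ I would substitute $L = j_A-j_B$ and expand $L(L+1)$; the numerator collapses to $2j_B(j_A+1)$, yielding $\kappa_+ = j_B/j_A$. For $j_A<j_B$ I would substitute $L = j_B-j_A$, in which case the numerator collapses to $2j_A(j_B+1)$, yielding $\kappa_+ = (j_B+1)/(j_A+1)$. These two algebraic collapses are the only nontrivial computation in the argument and are verified by direct expansion.

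I do not anticipate a genuine obstacle, since all the structural work was already done in establishing Eq.~\eqref{eq:explicit_f1} and the extremal decomposition of Sec.~\ref{sec:su2_structure}; the main step is simply the algebraic collapse described above. The one point worth flagging is interpretive rather than technical: for $j_A\geq j_B$ the optimal factor $\kappa_+ = j_B/j_A\leq 1$, so genuine amplification (in the sense $\kappa_+>1$) occurs only in the regime $j_A<j_B$, consistent with the intuition that the spin polarisation can only be enlarged when the output system is strictly larger than the input.
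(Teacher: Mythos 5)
Your proposal is correct and follows essentially the same route as the paper: twirling to reduce to SU(2)-covariant channels, the extremal decomposition of Eq.~\eqref{eq:decomposition_cptp}, optimisation over the simplex attained at a vertex, and evaluation of $f_1(\E^{L})$ at $L=|j_A-j_B|$ via Eq.~\eqref{eq:explicit_f1}; your early cancellation of the square-root prefactors into the clean affine expression $\kappa_+(L)=\bigl(j_A(j_A+1)+j_B(j_B+1)-L(L+1)\bigr)/\bigl(2j_A(j_A+1)\bigr)$ is just a tidier ordering of the same algebra the paper performs in Appendix~\ref{app:su2}, and both case evaluations check out. Your closing observation that genuine amplification ($\kappa_+>1$) only occurs for $j_A<j_B$ is also consistent with the paper's discussion.
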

We remark that upper bounds on $\kappa_+$ have been previously reported in Ref.~\cite{marvian2014modes}, where the authors used resource monotones based on modes of asymmetry to show that \mbox{$\kappa_+\leq f(j_B)/f(j_A)$} with
\begin{align}
	f(j):=\left\{\begin{array}{lll}
		j+1/2~&:&~\mathrm{integer~}j,\\
		j(j+1)/(j+1/2)~&:&~\mathrm{half~integer~}j.
	\end{array}\right.
\end{align}
Note that, according to Theorem~\ref{thm:spinamplify} that provides the optimal amplification channel explicitly, these bounds are loose, i.e., the upper bound cannot be achieved by any SU(2)-covariant channel. In this sense our result can be seen as an ultimate improvement over the previously known bounds.


\subsection{Optimal spin-inversion and time-reversal symmetry}
\label{sec:inversion_reversal}

So far we have considered the action of a channel on spin polarisation vector as the defining property of the spin-inversion channel. We have thus focused on the maximal deviation from the conservation law, but ignored the decoherence induced by such a channel, which is described by the action of the channel on the remaining ITOs. Here, we will quantify this decoherence by comparing the action of the optimal spin-inversion channel to the action of a passive symmetry that naturally realises spin-inversion -- the time-reversal symmetry~$\T$.

Under the action of \mbox{$\mathcal{T}:\B(\H_A)\rightarrow\B(\H_A)$} the spin of a single particle flips sign and, generally, an odd number of particles will experience a sign change, while an even number will not. This manifests itself at the level of ITOs, which are mapped according to whether they correspond to even or odd dimensional irreducible representations of the rotation group:
\begin{equation}
	\mathcal{T}(T^{\lambda}_k) = (-1)^{\lambda}  T^{\lambda}_{-k}.
\end{equation} 
This fully captures the action of time-reversal on general mixed states of spin-$j_A$ systems described by the Hilbert space $\h_A$. In particular, the spin degrees of freedom under time-reversal will acquire a minus sign:
\begin{equation}
	\mathcal{T} (J_A^{k}) = - J_A^{k}.
\end{equation}
Therefore, for a single particle, time-reversal symmetry induces perfect spin reversal, as for any $\rho\in \B(\h_A)$ the spin polarisation vector satisfies \mbox{$\vv{P}(\T(\rho)) = -\vv{P}(\rho)$}. Moreover, $\T$ does not induce any decoherence, since it leaves the eigenvalues of $\rho$ unchanged. It is thus meaningful to compare the optimal physical spin-inversion channel \mbox{$\E^{(2j_A)}$} from Theorem~\ref{thm:spinreversalmax} with the perfect unphysical spin-inversion operation realised by time-reversal symmetry $\T$. We will see that \mbox{$\E^{(2j_A)}$}, although it inverts spin polarisation almost perfectly in the limit of large $j_A$, is always far away from realising $\T$, and thus induces unavoidable decoherence as expected.

In order to measure the distance between \mbox{$\E^{(2j_A)}$} and $\T$ let us introduce the concept of  a spin-coherent state. It is simply given by a rotation of $|j_A,j_A\>$, the state with maximal angular momentum along the $z$ axis. Suppose that the group element $g\in \mathrm{SU(2)}$ is characterised by the Euler angles $\theta,\phi$, corresponding to a spatial direction $\hat{n}$. Then the spin-coherent state associated to this direction is given by:
\begin{equation}
	|\hat{n}_A\> = U_A(g) |j_A,j_A\> = \sum_{k=-j_A}^{j_A} u_{kj_A}(g) |j_A,k\>.
\end{equation}
The behaviour of spin coherent states under time-reversal symmetry is particularly simple and reads
\begin{equation}
	\T(\ketbra{\hat{n}_A}{\hat{n}_A}) = \ketbra{-\hat{n}_A}{-\hat{n}_A}.
\end{equation}

In order to quantify how much the optimal spin-inversion channel \mbox{$\E^{(2j_A)}$} resembles the passive symmetry transformation $\mathcal{T}$ we will employ the notion of quantum fidelity,
\begin{equation}
	F(\rho,\sigma) := \Tr\left(\sqrt{\sqrt{\rho}\sigma\sqrt{\rho}})\right)^2.
\end{equation}
Namely, we will calculate the fidelity $\bar{F}$ between the outputs of the two channels averaged over all input spin-coherent states. Notice that the fidelity between two states is a unitarily invariant measure, so that
\begin{equation}
	\!\!\!	F(\E^{(2j_A)}(\rho),\T(\rho)) = F(\U_A^g(\E^{(2j_A)}(\rho)), \U_A^g(\T(\rho)) ),\!
\end{equation}
and, since both $\E^{(2j_A)}$ and $\T$ are SU(2)-covariant, it follows that the considered fidelity remains the same for all spin coherent input states. Therefore, it suffices to analyse the fidelity for the input state $\ket{j_A,j_A}$, i.e.,
\begin{align}
	\bar{F} = \<j_A,-j_A|\E^{(2j_A)}(\ketbra{j_A,j_A}{j_A,j_A})|j_A,-j_A\>.
\end{align}
Now, we can use the explicit form of \mbox{$\E^{(2j_A)}$} given in Eq.~\eqref{eq:su2_extremal} to arrive at
\begin{equation}
	\bar{F}=  |\<j_A,-j_A;2j_A ,2j_A| j_A,j_A\>|^2.
\end{equation}
Finally, employing Clebsch-Gordan coefficients identities we obtain
\begin{equation}
	\bar{F}=\frac{1+2j_A}{1+4j_A} = 1 - \frac{2j_A}{1+4j_A}.
\end{equation}

The above fidelity is monotonically decreasing as a function of $j_A$ and in the limit $j_A\rightarrow \infty$ it converges to $1/2$. Therefore, despite the fact that for macroscopic spins it is possible to almost perfectly invert their polarisation vector, the channel that achieves this is far from realising time-reversal symmetry. We remark that the above calculation only assumes that the action of $\T$ on the spin-coherent state $|j_A,j_A\>$ gives $|j_A,-j_A\>$ and that it is rotationally invariant. Therefore, the same result will hold for a general perfect and unphysical spin-inversion operation which satisfies these two constraints (without committing to the full exact form that the time-reversal operator takes). Moreover, note that the rotational invariance and linearity\footnote{\cc{Note that we can invoke linearity here as we restricted to the convex hull of spin-coherent states (i.e a classical state space), and on this subspace the anti-linear property of the map $\T$ is not relevant since the scalar field is real.}} ensures that the expression for $\bar{F}$ remains unchanged for any state in the convex hull of spin-coherent states.
	

\section{Trade-off relations between conservation laws and decoherence}
\label{sec:trade_off}
	
Building up on the results developed so far, we now address the core questions of interest: how much can open symmetric dynamics deviate from conservation laws? Do small perturbations from closed symmetric dynamics result in small corrections to the conservation laws? When does the converse also hold?
		
Our aim is therefore to analyse when each of the following two qualitative statements holds given an a priori symmetry principle:
\begin{itemize}
	\item If $\E$ is close to a symmetric unitary then the average total deviation from conservation law, $\Delta(\E)$, is small.
	\item If the average total deviation $\Delta(\E)$ is small then $\E$ is close to a symmetric unitary.
\end{itemize}
	
Whenever both of the above properties hold for any dynamics with the appropriate symmetry, we say that the conservation laws are robust with respect to decoherence. Quantitatively, we can analyse such robustness by deriving bounds on the average deviation induced by a channel in terms of its distance from a symmetric unitary process. In what follows, we first derive general upper bounds on the deviation in terms of the diamond distance (for arbitrary dimension of input and output spaces, $d_A$ and $d_B$) and unitarity (for $d_B\leq d_A$), showing that the first property holds in general. Then, we argue why a lower bound does not need to exist for a general group $G$, and so the second property does not need to hold. Nevertheless, we show that for symmetries with multiplicity-free decomposition, the lower bound can also be derived for $d_A=d_B$, and thus conservation laws are robust under decoherence in such cases. Finally, we analyse in detail the two special examples investigated in Sec.~\ref{sec:structure}: SU(2)-irreducibly-covariant channels and U(1)-covariant channels.


\subsection{Upper bounds on deviating charges for $G$-covariant open dynamics}
\label{sec:upperbounds}
	
Before we present our main result upper bounding the average total deviation $\Delta(\E)$ as a function of departure from unitarity $(1-u(\E))$, we want to present a simple argument showing that open dynamics that is close to symmetric unitary (isometry) must approximately conserve relevant charges. Consider $\rho\in\B(\H_A)$ and a $G$-covariant channel \mbox{$\E:\B(\h_A)\rightarrow\B(\h_B)$} with the symmetry generated by $\{J_{A}^{k}\}_{k=1}^n$ for the input system and $\{J_{B}^{k}\}_{k=1}^n$ for the output system. Now, take any isometry $W:\h_A\rightarrow \h_B$ that is symmetric, i.e., $WJ_{A}^k = J_{B}^k W$. Since the conservation laws hold under a dynamics generated by $W$, we have
\begin{align}
	\Delta_k (\rho_A, \E) &= \Tr(\E(\rho_A)J_{B}^k - \rho_A J_{A}^k)\\ \nonumber
	& = \Tr((\E(\rho_A) - W\rho_A W\hc) J_{B}^k). 
\end{align}
Using H\"{o}lder's inequality for the Hilbert-Schmidt inner product,
\begin{equation}
	\Tr(A\hc B) \leq \|B\|_{1} \|A\|_{\infty},
\end{equation}
with \mbox{$\|B\|_1:= \Tr(\sqrt{B\hc B})$} and \mbox{$\|A\|_{\infty}$} the operator norm, we obtain the following bound:
\begin{equation}
	\Delta_k(\rho_A,\E)\leq \|(\E- \mathcal{W})(\rho_A)\|_1\, \|J_{B}^{k}\|_{\infty}, 
\end{equation}
where $\mathcal{W} (\cdot) = W (\cdot) W\hc$. Thus, the total deviation for a given input state $\rho_A$ is bounded by
\begin{equation}
	\Delta_{\mathrm{tot}}(\rho_A,\E)\leq \|(\E- \mathcal{W})(\rho_A)\|^2_1\, \sum_{k=1}^n\|J_{B}^{k}\|^2_{\infty}. 
	\label{eq:1normbound}
\end{equation}
Finally, we can get a state-independent bound by employing a diamond norm,
\begin{equation}
	\|\mathcal{C}\|_{\diamond}^{2} :=\max_{\rho_{AA'}} \|\mathcal{C}_{A}\otimes \mathcal{I}_{A'}(\rho_{AA'})\|_{1},
\end{equation}
so that we arrive at the bound for the average total deviation
\begin{equation}
	\label{eq:diamondupper}
	\Delta(\E) \leq \|\E - \mathcal{W}\|_{\diamond}^2 \sum_{k} \|J_{B}^{k}\|_{\infty}^{2}.	
\end{equation}
Operationally the above can be interpreted as follows: the more indistinguishable a given covariant channel becomes from any symmetric isometry, the smaller the deviations from conservation laws.
		
Obviously, the above simple analysis has significant drawbacks. Not only is the diamond norm particularly difficult to calculate, but also Eq.~\eqref{eq:diamondupper} involves either an unknown symmetric isometry $\mathcal{W}$, or a minimisation of the quantity \mbox{$\|\E-\mathcal{W}\|_{\diamond}$} over all such isometries $\mathcal{W}$. The latter will generally be difficult to estimate from the properties of the channel $\E$ alone, leading to very loose upper bounds on the average total deviation. For these reasons, in the following theorem we provide an explicit inequality that captures robustness of conservation laws in terms of the unitarity of a symmetric channel.
\begin{theorem}
	\label{thm:upperbounds}	
	Let $G$ be a connected compact Lie group with unitary representations $U_A$ and $U_B$ acting on Hilbert spaces $\h_A$ and $\h_B$, and generated by traceless generators $\{J_{A}^{k}\}_{k=1}^{n}$ and $\{J_{B}^{k}\}_{k=1}^{n}$. For every $G$-covariant quantum channel \mbox{$\E:\B(\h_A)\rightarrow \B(\h_B)$} with $d_A\geq d_B$ the following holds:
	\begin{align}
		\Delta(\E) \leq 2 n d_A (d_A-1) \max_{k} &(\|J_{B}^k\|_1 + \|J_{A}^k\|_1)^2\nonumber\\
		& \qquad\times(1-u(\E)).\label{eq:upper}		
	\end{align}
	Moreover, the above also holds for $d_A<d_B$ whenever \mbox{$\Tr(\E(\iden/d_{A})^2)\geq \frac{1}{d_A}$}.
\end{theorem}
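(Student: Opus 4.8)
The plan is to bound $\Delta(\E)$ in two movements: first reduce it to the Hilbert--Schmidt size of the finite deviation operators $\delta J_A^k=\E\hc(J_B^k)-J_A^k$ of Eq.~\eqref{eq:delta_J}, and then control that size by the unitarity deficit $1-u(\E)$ using the common block structure that both quantities inherit from covariance. Starting from the closed form of the average total deviation in Eq.~\eqref{eq:dev_su2_partxxxxx} and applying Cauchy--Schwarz in the form $\tra{\delta J_A^k}^2=|\langle \iden_A,\delta J_A^k\rangle|^2\leq d_A\,\tra{(\delta J_A^k)^2}$, I obtain the state-independent reduction
\begin{equation}
	\Delta(\E)\;\leq\;\frac{1}{d_A}\sum_{k=1}^n \tra{(\delta J_A^k)^2},
\end{equation}
so the task becomes bounding $\sum_k\|\delta J_A^k\|_2^2$. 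The meaning of this quantity is transparent from the covariant Stinespring dilation of Theorem~\ref{thm:stinespring}: differentiating the covariance of the isometry $V$ gives $VJ_A^k=(J_B^k\otimes\iden_E+\iden_B\otimes J_E^k)V$, whence, using $V\hc V=\iden_A$ and Eqs.~\eqref{eq:stinespring_adjoint}--\eqref{eq:adjoint_complementary}, $\delta J_A^k=-V\hc(\iden_B\otimes J_E^k)V=-\tilde\E\hc(J_E^k)$. Thus the deviation is exactly the charge leaking into the environment, the differential form of the conservation law~\eqref{conserve}; it vanishes precisely when $\E$ is a symmetric isometry, for which $u(\E)=1$.

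To make this quantitative I would pass to the symmetry-adapted (ITO) basis, in which both $\Delta(\E)$ and $u(\E)$ become diagonal in the irrep label. The generators span the adjoint irrep $\lambda_{\mathrm{ad}}$, so $J_A^k=\|J_A^k\|_2\,T^{\lambda_{\mathrm{ad}}}_k$ and $J_B^k=\|J_B^k\|_2\,S^{\lambda_{\mathrm{ad}}}_k$ for a fixed multiplicity copy, and by Eq.~\eqref{eq:action_on_ITOs} the operator $\delta J_A^k$ lives entirely in the adjoint sector; its Hilbert--Schmidt length is governed by how far the adjoint block $L^{\lambda_{\mathrm{ad}}}(\E)$ of Theorem~\ref{thm:liouville} sits from the value it is forced to take for a symmetric isometry. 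On the other side, the computation underlying Lemma~\ref{lem:unitaritychoi}, carried out in the same basis (using $\int d\psi\, r_T\overline{r_{T'}}=\tfrac{1}{d_A(d_A+1)}\delta_{TT'}$ over traceless ITOs), yields
\begin{equation}
	1-u(\E)=\frac{1}{d_A^2-1}\sum_{T}\left(1-\|\E(T)\|_2^2\right),
\end{equation}
the sum running over an orthonormal basis of traceless ITOs $T\in\B(\h_A)$, of which there are exactly $d_A^2-1$. The adjoint contributions to this sum are precisely the quantities that dominate $\sum_k\|\delta J_A^k\|_2^2$, and after converting the Hilbert--Schmidt coefficients via $\|J_{A/B}^k\|_2\leq\|J_{A/B}^k\|_1$ and collecting the factors $n$ and $d_A(d_A-1)$, this produces the stated constant $2nd_A(d_A-1)\max_k(\|J_B^k\|_1+\|J_A^k\|_1)^2$.

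The crux --- and the place where the hypothesis $d_A\geq d_B$ is essential --- is showing that the adjoint deficit is itself dominated by the full deficit $(d_A^2-1)(1-u(\E))$, i.e.\ that $\E$ does not lengthen a normalised traceless ITO, $\|\E(T)\|_2\leq 1$. This is genuinely nontrivial: partial trace is not Hilbert--Schmidt contractive, so the bound does not follow from the dilation alone, and the property can outright fail when $d_B>d_A$, where the spin-amplification channels of Theorem~\ref{thm:spinamplify} carry an adjoint coefficient $f_1>1$, so the adjoint block expands while $u(\E)$ stays near one --- exactly the regime in which no upper bound of the claimed form can hold. I would therefore establish the blockwise contraction under $d_A\geq d_B$ (so each $m_B^\lambda\times m_A^\lambda$ block has columns of norm at most one), and in the complementary regime $d_A<d_B$ replace it by the explicit surrogate hypothesis $\tra{\E(\iden_A/d_A)^2}\geq 1/d_A$, which is precisely the condition securing the nonnegativity needed for the same chain of inequalities to close. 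Proving this contraction cleanly, rather than merely the averaged statement $u(\E)\leq 1$ of the appendix, is the main technical obstacle.
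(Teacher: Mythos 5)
Your two preparatory steps are fine: the Cauchy--Schwarz reduction $\Delta(\E)\leq\frac{1}{d_A}\sum_k\Tr((\delta J_A^k)^2)$ is valid (the paper's proof in fact shows $\Tr(\delta J_A^k)=0$ exactly), the identity $1-u(\E)=\frac{1}{d_A^2-1}\sum_T\left(1-\|\E(T)\|_2^2\right)$ over a traceless ITO basis is correct, and $\delta J_A^k=-\tilde{\E}\hc(J_E^k)$ is a nice observation. The argument collapses, however, at both remaining steps. First, the ``crux'' lemma --- columnwise contraction $\|\E(T)\|_2\leq 1$ under $d_A\geq d_B$ --- is simply false: take $G=\mathrm{SU}(2)$, $\h_A=\h_B\otimes\h_C$ with $U_A=U_B\otimes U_C$, and $\E=\Tr_C$, which is covariant with $d_A=d_Bd_C>d_B$; for the normalised traceless ITO $T=S\otimes\iden_C/\sqrt{d_C}$ (with $S$ any normalised traceless ITO on $\h_B$) one has $\|\E(T)\|_2=\sqrt{d_C}>1$. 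Only the \emph{average} of $\|\E(T)\|_2^2$ over all $d_A^2-1$ basis ITOs is bounded by $1$; individual sectors can expand at the expense of others (here the ITOs $\iden_B/\sqrt{d_B}\otimes S_C$ are annihilated), so the inequality ``adjoint deficit $\leq$ full deficit'' cannot be run termwise, whatever the relation between $d_A$ and $d_B$. Incidentally, your diagnosis of the $d_B>d_A$ obstruction is also off: in the normalised-ITO convention the spin-amplification channels have $f_1<1$ --- the growth of $\vv{P}$ comes from the ratio $\|\vv{J}_B\|/\|\vv{J}_A\|$ in Eq.~\eqref{eq:spintransf} --- and the only role of $d_A\geq d_B$ in the paper is to guarantee $\Tr(\E(\iden_A/d_A)^2)\geq 1/d_B\geq 1/d_A$, which is exactly why that purity bound is the surrogate hypothesis when $d_A<d_B$.

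Second, and more fundamentally, the step ``the adjoint contributions dominate $\sum_k\|\delta J_A^k\|_2^2$'' cannot be established by any argument that looks at the adjoint sector alone, because the deviation is quadratic in $(1-f)$ while the sector deficit is $(1-f)(1+f)$, which degenerates as $f\to-1$. For an SU(2)-irreducibly covariant channel on a spin-$j$ system one has $\delta J_A^k=(f_1(\E)-1)J_A^k$ and adjoint deficit $3(1-f_1(\E)^2)$, so their ratio is $\frac{(1-f_1)}{3(1+f_1)}\|\vv{J}_A\|^2$, which is unbounded as $f_1\to-1$ and already $\Theta(j^4)$ at the optimal inversion channel $f_1=-j/(j+1)$. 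The only reason any dimension-dependent constant exists is that covariant CPTP maps obey $f_1\geq -j/(j+1)$, i.e.\ Theorem~\ref{thm:spinreversalmax} --- a consequence of positivity of the \emph{whole} Jamio{\l}kowski operator, hence of all sectors jointly; a sector-local argument, which at best knows $|f_1|\leq 1$, can never recover it. This cross-sector coupling is precisely what the paper's proof supplies, without needing contraction or extremal scaling factors: starting from Eq.~\eqref{eq:choigeneraldecomp} it isolates the weight $p_0$ of the trivial irrep of $U_B\otimes U_A^{*}$ in $\J(\E)$, observes that the $\lambda=0$ component conserves all charges exactly (its Kraus operators commute with the generators), and bounds both quantities through this single mediator: $\Delta(\E)\leq\frac{n d_A^3}{d_A+1}(1-p_0)^2\max_k(\|J_B^k\|_1+\|J_A^k\|_1)^2$ by H\"older and triangle inequalities, and $1-u(\E)\geq\frac{d_A^2}{2(d_A^2-1)}(1-p_0)^2$ from the block purity of $\J(\E)$ together with $d_\lambda\geq 2$ for all nontrivial irreps of a connected group. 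Repairing your argument requires injecting sector-coupling information of exactly this kind, at which point you would essentially be reconstructing the paper's proof; as it stands, the proposal does not go through.
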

To prove the above theorem one starts from Eq.~\eqref{eq:choigeneraldecomp} that yields the general decomposition of a $G$-covariant map into a convex mixture of CP maps $\E^\lambda$ with probabilities $p_\lambda(\E)$. Employing this decomposition and Lemma~\ref{lem:unitaritychoi}, one can then lower bound the deviation from closed dynamics, $(1-u(E))$, with a dimensional constant times $(1-p_{\lambda=0}(\E))^2$. Next, one notes that $\E^{\lambda=0}$ conserves charges (generators), and thus using standard inequalities (e.g., the triangle inequality) the deviation can be upper bounded by a dimension-dependent constant times $(1-p_{\lambda=0}(\E))^2$. Finally, one combines both inequalities to bound $\Delta(\E)$ with $(1-u(\E))$ as in Eq.~\eqref{eq:upper}. The details of necessary calculations can be found in Appendix~\ref{app:upperbounds}.
		
		
\subsection{Lower bounds on deviating charges for $G$-covariant open dynamics}
\label{sec:lowerbounds}

We would now like to find a lower bound on the average total deviation in terms of unitarity. First, however, we need to note that decoherence does not need to lead to the deviation from conservation law. In other words, there may be open (non-unitary) symmetric dynamics that nevertheless conserves charges (generators) for all input states. To illustrate this, let us start with the following semi-trivial example of a non-unitary symmetric dynamics $\E$ for which all conservation laws relevant for the symmetry hold. Consider a two-qubit system where the first qubit transforms under the 1/2-irrep of SU(2) and the second transforms trivially. The conserved charges generating the symmetry are the spin operators on the first system. Let \mbox{$\E_{AB}(\rho\otimes\sigma):=\rho\otimes \E_{B}(\sigma)$}. This is covariant under the symmetry, the conservation laws hold for all states, however it is not a unitary operation as we are free to choose any CPTP $\E_{B}$ on system $B$.
	
More generally, there may exist whole families of non-trivial symmetric channels that are not unitary, but preserve conserved charges for all input states. For example, it is relatively simple to find such a family among unital covariant channels. Theorem~(4.25) from Ref.~\cite{watrous2018theory} tells us that for a unital CPTP map \mbox{$\E:\B(\H_A)\rightarrow\B(\H_A)$} with Kraus operators $\{K_i\}_i$ we have $\E(X) = X$ if and only if $[X,K_i]=0$ for all $i$. Extensions of this result to rotating fixed points may also be found in Ref.~\cite{albert2019asymptotics}. Recall also that any symmetric channel $\E$ admits a Kraus decomposition consisting of ITOs $\{E^{\lambda,\alpha}_m\}_{\lambda,m,\alpha}$, where $\lambda$ labels irreducible representations in $\B(\h_A)$ of multiplicity $\alpha$ and vector component~$m$. Then, it follows that $\E\hc(J_A^k) = \sum_{\lambda,m,\alpha} (E^{\lambda,\alpha}_m)\hc J_A^k E^{\lambda,\alpha}_m $ for all symmetry generators $J^k_A$. Now, since we assumed that $\E$ is unital CPTP map, also $\E^\dagger$ is a unital CPTP map. Thus, we can use the result quoted above and conclude that $\E\hc(J_A^k) = J_A^{k}$ if and only if $[E^{\lambda,\alpha}_m, J_A^k] = 0$ for all $\lambda,m,\alpha$ and $k$. However, $E^{\lambda,\alpha}_m$ transform as ITOs and only $\lambda=0$, corresponding to the trivial representation, commutes with the generators. Therefore, for unital symmetric channels $\E$, conservation laws hold if and only if $\E$ takes the general form:
\begin{equation}
	\E(\cdot) = \sum_{\alpha} E^{\lambda=0,\alpha} (\cdot) (E^{\lambda=0,\alpha})\hc,
\end{equation}
where each Kraus operator $E^{\lambda=0,\alpha}$ commutes with the group action. In general, it may also be possible for conservation laws to hold for non-unital operations, but a full characterisation of the dynamics for which this happens remains open.

As the examples above conserve charges despite decoherence by acting on the multiplicity spaces of the trivial representation $\lambda=0$, one could hope that for groups with multiplicity-free decomposition such a situation will be impossible (and so the conservation law would be robust to decoherence). However, this is not the case. To see this, recall that in Sec.~\ref{sec:reversal} we found the extremal SU(2)-irreducibly covariant channel $\E^{|j_A-j_B|}$ that allowed for spin amplification whenever $d_B>d_A$. At the same time, we showed that there also exists an optimal spin reversal channel $\E^{j_A+j_B}$. Thus, one can always find a parameter $q\in[0,1]$ such that \mbox{$q\E^{|j_A-j_B|}+(1-q)\E^{j_A+j_B}$} preserves all spin components, while at the same time being far from unitary evolution.
	
The above discussion illustrates that probing conservation laws for a physical realisation of a symmetric dynamics is usually not sufficient to decide if there are decoherence effects present. In other words, robustness of conservation laws does not occur for all types of symmetries. Nevertheless, there are particular conditions that guarantee a certain robustness of conservation laws. In such cases, approximate conservation laws hold if and only if the dynamics is close to a unitary symmetric evolution. In particular, for channels with equal input and output dimensions $d_A=d_B$, whenever $\B(\h_A)$ contains a single trivial subspace then there is no symmetric channel other than identity for which conservation laws hold. This is the case for example when $\h_A$ carries an irreducible representation of SU(2). More generally, however, we have the following theorem that provides lower bounds on the deviation from conservation laws in terms of the unitarity. 
\begin{theorem}
	\label{thm:lowerbounds}
	Let $G$ be a connected compact Lie group with unitary representation $U_{A}$ acting on a Hilbert space $\h_A$, and generated by traceless generators $\{J^k_{A}\}_{k=1}^{n}$. Moreover, assume that $\B(\h_A)$ has a multiplicity-free decomposition in terms of irreducible representations. Then, for every $G$-covariant quantum channel \mbox{$\E:\B(\h_A) \rightarrow\B(\h_A)$} the following holds:
	\begin{equation}
		\sqrt{\Delta(\E)}\geq K \|\vv{J}_{A}\| (1-u(\E)) \frac{(d_A-1)(d_A+1)^{1/2}}{2d_A^{5/2}}.
	\end{equation}
	where $K$ is a constant independent of $\E$, defined by 
	\begin{equation}
		K:= \min_{\lambda\neq 0 \in \Lambda} |1-f(\lambda)|,
	\end{equation}
	with $f(\lambda)$ being constant coefficients such that the extremal isolated channel $\E^{\lambda}$ satisfies \mbox{$\E^{\lambda\dagger}(J_{A}^k) = f(\lambda) J_{A}^k$}.
\end{theorem}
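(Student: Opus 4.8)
The plan is to reduce the statement to a single scalar inequality in the probability weights $\{p_\lambda\}$ that parametrise a multiplicity-free covariant channel. The hypothesis that $\B(\h_A)$ is multiplicity-free forces the trivial irrep of $U_A\otimes U_A^*$ to occur exactly once, which already implies $U_A$ is irreducible, so Theorem~\ref{thm:extremalchannelmultipfree} applies with $B=A$ and every such $\E$ decomposes as $\E=\sum_{\lambda\in\Lambda}p_\lambda\E^\lambda$, with each extremal $\E^\lambda$ acting on the ITO basis by a pure rescaling $\E^\lambda(T^{\mu}_k)=f_\mu(\E^\lambda)T^\mu_k$. I would first record two structural facts: the trivial block $\lambda=0$ (with $d_0=1$ and $\J(\E^0)=\ketbra{\Omega}{\Omega}$) is precisely the identity channel, so $f(0)=1$; and the generators span the adjoint irrep $\mu$, so that $\E^{\lambda\dagger}(J_A^k)=f(\lambda)J_A^k$, with $f(\lambda)$ exactly the coefficient named in the theorem.

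Next I would evaluate both sides explicitly. Since $\E^\dagger=\sum_\lambda p_\lambda\E^{\lambda\dagger}$ acts on every generator by the scalar $F:=\sum_\lambda p_\lambda f(\lambda)$, the finite deviation operator is $\delta J_A^k=(F-1)J_A^k$. Substituting into the Haar-averaged formula Eq.~\eqref{eq:dev_su2_partxxxxx} and using tracelessness of the generators annihilates the $\Tr(\delta J_A^k)^2$ term, yielding $\Delta(\E)=(F-1)^2\|\vv{J}_A\|^2/(d_A(d_A+1))$ with $\|\vv{J}_A\|^2=\sum_k\Tr((J_A^k)^2)$, hence $\sqrt{\Delta(\E)}=|F-1|\,\|\vv{J}_A\|/\sqrt{d_A(d_A+1)}$. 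For the unitarity I would feed the block-diagonal Jamio{\l}kowski state $\J(\E)=\bigoplus_\lambda p_\lambda\,\iden^\lambda/d_\lambda$ into Eq.~\eqref{eq:unitarity_jamiolkowski}; using unitality $\E(\iden_A/d_A)=\iden_A/d_A$ and $\gamma(\J(\E))=\sum_\lambda p_\lambda^2/d_\lambda$ gives $1-u(\E)=\frac{d_A^2}{d_A^2-1}\bigl(1-\sum_\lambda p_\lambda^2/d_\lambda\bigr)$.

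It remains to link the two scalars through the weights. On the unitarity side I would estimate crudely, keeping only the $\lambda=0$ term in $\sum_\lambda p_\lambda^2/d_\lambda\geq p_0^2$ and using $1-p_0^2\leq 2(1-p_0)=2\sum_{\lambda\neq0}p_\lambda$, to obtain $\sum_{\lambda\neq0}p_\lambda\geq\frac{d_A^2-1}{2d_A^2}(1-u(\E))$. On the deviation side the quantity to control is $|F-1|=\bigl|\sum_{\lambda\neq0}p_\lambda(f(\lambda)-1)\bigr|$, and this is where the one genuine obstacle lies: a priori the coefficients $f(\lambda)-1$ could differ in sign and cancel, decoupling $\Delta$ from $1-u$ and breaking the bound. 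I expect to close this gap by noting that for $A=B$ a covariant channel is unital, hence a contraction in Hilbert--Schmidt norm, which forces $|f(\lambda)|\leq1$ and therefore $f(\lambda)-1\leq0$ for every $\lambda$. With all terms of one sign, $|F-1|=\sum_{\lambda\neq0}p_\lambda(1-f(\lambda))\geq K\sum_{\lambda\neq0}p_\lambda$ with $K=\min_{\lambda\neq0}|1-f(\lambda)|$.

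Chaining the three estimates gives $\sqrt{\Delta(\E)}\geq\frac{K\|\vv{J}_A\|}{\sqrt{d_A(d_A+1)}}\cdot\frac{d_A^2-1}{2d_A^2}(1-u(\E))$, and simplifying the dimensional prefactor through $\frac{d_A^2-1}{2d_A^2\sqrt{d_A(d_A+1)}}=\frac{(d_A-1)(d_A+1)^{1/2}}{2d_A^{5/2}}$ reproduces the claimed inequality exactly. The only step beyond routine bookkeeping is the contractivity argument that rules out sign cancellation in $F-1$; everything else is a direct computation of the two scalar invariants on the extremal decomposition.
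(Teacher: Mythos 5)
Your proposal is correct and follows essentially the same route as the paper's proof: the same extremal decomposition $\E=\sum_\lambda p_\lambda\E^\lambda$ from Theorem~\ref{thm:extremalchannelmultipfree}, the same scalar expression $\sqrt{\Delta(\E)}=\big|\sum_\lambda p_\lambda(1-f(\lambda))\big|\,\|\vv{J}_A\|/\sqrt{d_A(d_A+1)}$, the same unitality-based bound $1-p_0\geq\frac{d_A^2-1}{2d_A^2}(1-u(\E))$, and the same resolution of the sign-cancellation issue via $|f(\lambda)|\leq 1$. Your Hilbert--Schmidt contraction argument for unital channels is just the $p=2$ instance of the induced $p$-norm contractivity result~\cite{perez2006contractivity} that the paper invokes, so the two proofs coincide in all essential steps.
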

\noindent The proof of the above theorem can be found in Appendix~\ref{app:lowerbounds}.
	
		
\subsection{Bounds on deviating charges for SU(2)-covariant open dynamics}		
	
We now turn to investigating robustness of conservation laws for SU(2)-irreducibly covariant channels. We focus on a particular case of covariant quantum channels between spin-$j$ systems, i.e., for $j_A=j_B=j$. In this case it is possible to deduce both upper and lower bounds on average total deviation in terms of unitarity. One of the reasons for this is that dissipation, as given by a symmetric channel $\E$ that is not a unitary, cannot hide in the multiplicity subspace of the trivial representation, as $U_A\otimes U_A^{*}$ has a multiplicity-free decompositions into irreps. 
	

\subsubsection{Expressions for unitarity and deviations}
	
The structure of general SU(2)-irreducibly-covariant channels $\E:\B(\h_A)\rightarrow\B(\h_B)$ presented in Sec.~\ref{sec:su2_structure} gives a simple way to calculate their unitarity. Employing Lemma~\ref{lem:unitaritychoi}, using the decomposition of the Jamio{\l}kowski state given in Eq.~\eqref{eq:jamiolkowski_su2_irrep} and the fact that irreducibly-covariant channels are unital (so that \mbox{$\E(\iden_A/d_A)=\iden_B/d_B$}), one obtains
\begin{align}
	\label{eq:unitarity_su2}
	u(\E)=&\left((2j_A+1)^2\sum_{L=|j_A-j_B|}^{j_A+j_B} \frac{p_L(\E)^2}{2L+1}-\frac{2j_A+1}{2j_B+1}\right)\non
	&\quad \times \frac{1}{(2j_A+1)^2-1},
\end{align}
where $\vv{p}(\E)$ characterises a given SU(2)-covariant channel according to Eq.~\eqref{eq:decomposition_cptp}. In the particular case when \mbox{$j_A=j_B=j$}, so that both input and output dimensions $d=2j+1$, the above yields
\begin{align}
	\label{eq:unitarity_su2_same_dim}
	u(\E)=&\frac{1}{d^2-1}\left(d^2\sum_{L=0}^{2j} \frac{p_L(\E)^2}{2L+1}-1\right).
\end{align}
	
Now, in order to get an expression for the average deviation from a conservation, we first look at \mbox{$\delta J_A^k = \E\hc(J_{B}^k) - J_{A}^k$}, where $k\in\{x,y,z\}$ correspond to the spin angular momentum operators $J_x,J_y,J_z$. We start by noting that, due to Eq.~\eqref{eq:su2_irrep_transform_ITOs} and the fact that ITOs are orthonormal, we have
\begin{equation}
	\E^\dagger(S^l_m)=f_l(\E) T^l_m.
\end{equation}
Next, using the relation between angular momentum operators and ITOs from Eq.~\eqref{eq:su2_ITO}, we can re-write the above expression to arrive at
\begin{equation}
	\E^\dagger(J_B^k)=f_1(\E) \frac{\|\vv{J}_B\|}{\|\vv{J}_{A}\|} J_{A}^{k},
\end{equation}
where we recall that $\|\vv{J}_B\| = \sqrt{j_B(j_B+1)(2j_B+1)}$ and analogously for input system $A$. Next, we use convex decomposition of $\E$ into extremal channels $\E^L$, Eq.~\eqref{eq:decomposition_cptp}, to get
\begin{equation}
	\E^\dagger(J_B^k)=\ \left(\frac{\|\vv{J}_B\|}{\|\vv{J}_{A}\|}\!\!\sum_{L=|j_A-j_B|}^{j_A+j_B} p_L(\E) f_1(\E^L) \right) J_A^k .
\end{equation}
However, we have determined specific closed formulas for  $f_{1}(\E^L)$ in Eq.~\eqref{eq:explicit_f1}, so that combining with the above relations we end up with 
\begin{equation}
	\!\!\!\delta J_A^k = \frac{J_A^k}{2j_A(j_A+1)}\left(\beta_{AB}-\!\!\!\!\!\!\!\sum_{L=|j_A-j_B|}^{j_A+j_B}\!\!\!\!\!\!\! p_L(\E) L(L+1)\!\right)\!,\!
	\label{eq:deltajk}
\end{equation}
with
\begin{equation}
	\beta_{AB}=j_B(j_B+1)-j_A(j_A+1).
\end{equation}
Finally, the spin angular momenta satisfy the following relations (and similar ones for system $B$):
\begin{equation}
	\!\!\tra{J_A^k}=0,\quad \tra{(J_A^k)^2}=\frac{\|\vv{J}_A\|^2}{3}.
\end{equation}
Combining the above with Eq.~\eqref{eq:deltajk} and substituting to general expression for the average total deviation, Eq.~\eqref{eq:dev_su2_partxxxxx}, leads to
\begin{equation}
	\label{eq:dev_su2_part2}
	\!\!\!\!\Delta(\E) = \frac{1}{8j_A(j_A+1)^2}\!\left(\!\beta_{AB}-\!\!\!\!\!\!\!\!\sum_{L=|j_A-j_B|}^{j_A+j_B}\!\!\!\!\!\!\! p_L(\E) L(L+1)\!\right)^{\!\!2}\!\!\!.\!\!\!
\end{equation}
In the particular case when \mbox{$j_A=j_B=j$} the above yields
\begin{equation}
	\label{eq:dev_su2_part3}
	\Delta(\E) = \frac{1}{8j(j+1)^2} \left(\sum_{L=0}^{2j} p_L(\E) L(L+1)\right)^{2}\!\!\!.
\end{equation}
		
		
\subsubsection{Deriving trade-off relations}

We will now show how the unitarity and deviations from conservation laws are related, and obtain both lower and upper bounds on the average deviation from a conservation law of spin angular momenta under a rotationally invariant irreducible channel in terms of its unitarity.
	
\begin{theorem}
	\label{thm:su2_bounds}
	Let $\E:\B(\h_A) \rightarrow\B(\h_A)$ be an SU(2)-irreducibly covariant quantum channel acting on a $j$-spin system. Then, the average total deviation $\Delta(\E)$ from conservation law for spin angular momenta is bounded by the unitarity of the channel $u(\E)$ via the following trade-off inequalities:
	\begin{subequations}
		\begin{align}
			 \sqrt{\Delta(\E)}&\geq\frac{\sqrt{2}j^{1/2}}{(2j+1)^2}(1-u(\E)),\\
			 \sqrt{\Delta(\E)}& \leq\frac{3\sqrt{2} j^{3/2}}{2j+1} (1-u(\E)).
		\end{align}
	\end{subequations}
\end{theorem}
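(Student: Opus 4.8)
The plan is to reduce the two claimed inequalities to a pair of elementary estimates for the probability vector $\vv{p}(\E)=(p_0,\dots,p_{2j})$ from the extremal decomposition~\eqref{eq:decomposition_cptp}. Since for $j_A=j_B=j$ both the unitarity~\eqref{eq:unitarity_su2_same_dim} and the average deviation~\eqref{eq:dev_su2_part3} depend on $\E$ only through $\vv{p}(\E)$, I would introduce the two scalars
\[
	D:=\sum_{L=0}^{2j}p_L\,L(L+1),\qquad Q:=1-\sum_{L=0}^{2j}\frac{p_L^2}{2L+1}.
\]
Using $d=2j+1$ and $d^2-1=4j(j+1)$, a direct rewriting of the two formulas gives $\sqrt{\Delta(\E)}=D/\bigl(2\sqrt2\,j^{1/2}(j+1)\bigr)$ and $1-u(\E)=\tfrac{(2j+1)^2}{4j(j+1)}\,Q$. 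Substituting these identities into the target inequalities and cancelling the common $j$-dependent prefactors, the lower trade-off collapses to the single clean statement $D\ge Q$, while the upper trade-off collapses to $D\le 3j(2j+1)\,Q$. This reduction is the crux of the argument: it eliminates all reference to $u$ and $\Delta$ and leaves two inequalities about one probability distribution.

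The second step is to control both $D$ and $Q$ by the single quantity $1-p_0=\sum_{L\ge1}p_L$. For $D$ the range $1\le L\le 2j$ gives $2\le L(L+1)\le 2j(2j+1)$, hence
\[
	2(1-p_0)\ \le\ D\ \le\ 2j(2j+1)(1-p_0).
\]
For $Q$ I would split off the $L=0$ contribution and bound the remaining terms using $p_L^2\le p_L$ and $\tfrac1{2L+1}\le\tfrac13$ for $L\ge1$; this yields $p_0^2\le\sum_{L}\tfrac{p_L^2}{2L+1}\le p_0^2+\tfrac13(1-p_0)$, and therefore
\[
	\tfrac23(1-p_0)\ \le\ Q\ \le\ (1-p_0)(1+p_0)\ \le\ 2(1-p_0).
\]

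Combining the two sandwiches is then immediate. For the lower bound, $D\ge 2(1-p_0)\ge Q$ gives $D\ge Q$. For the upper bound, $1-p_0\le\tfrac32 Q$ together with $D\le 2j(2j+1)(1-p_0)$ gives $D\le 3j(2j+1)\,Q$. I expect the difficulty here to be bookkeeping rather than conceptual: the stated prefactors $\sqrt2\,j^{1/2}/(2j+1)^2$ and $3\sqrt2\,j^{3/2}/(2j+1)$ arise exactly from the bare constants $1$ and $3j(2j+1)$, so the sandwich estimates on $Q$ must be kept sharp enough — in particular the lower estimate $Q\ge\tfrac23(1-p_0)$, which rests on the crude bounds $p_L^2\le p_L$ and $\tfrac1{2L+1}\le\tfrac13$, is precisely what upgrades the factor $2j(2j+1)$ to $3j(2j+1)$. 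As a consistency check I would examine the near-identity regime $p_0\to1$ (weight drifting onto $L=1$), where $D/Q\to1$ confirms that the constant in the lower trade-off cannot be improved by this method.
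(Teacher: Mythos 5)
Your proof is correct and follows essentially the same route as the paper's: both pivot through the quantity $1-p_0$, bound the deviation via the extreme values of $L(L+1)$ over $L\in\{1,\dots,2j\}$ (the paper phrases this as placing all remaining weight on $L=2j$ for the maximum and on $L=1$ for the minimum), and sandwich the purity sum $\sum_L p_L^2/(2L+1)$ in terms of $p_0$ before chaining, arriving at the identical constants. Your reformulation through $D$ and $Q$, reducing the two trade-offs to $D\geq Q$ and $D\leq 3j(2j+1)\,Q$, is a clean repackaging that makes the bookkeeping transparent, but the underlying estimates coincide with those in the paper (your $p_L^2\leq p_L$ step plays the role of the paper's $\sum_{l\geq 1}p_l^2\leq(1-p_0)^2$, with the same final bound $\tfrac{2}{3}(1-p_0)$).
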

\begin{proof}
	First, using Eq.~\eqref{eq:dev_su2_part3}, we note that given a fixed $p_0$ the deviation $\Delta(\E)$ is maximised for $p_{2j}=1-p_0$. Therefore,
	\begin{equation}
		\Delta(\E)\leq\frac{j}{2}\frac{(2j+1)^2}{(j+1)^2}  (1-p_0)^2,
	\end{equation}
	resulting in the following bound
	\begin{equation}
		\label{eq:inequality_aux}
		1-p_0\geq \frac{j+1}{2j+1}\sqrt{\frac{2\Delta(\E)}{j}}.
	\end{equation}
	To shorten the notation we will now use $d=2j+1$ and $j$ simultaneously. Using Eq.~\eqref{eq:unitarity_su2} we have the following series of equalities and inequalities:
	\begin{align}
		u(\E)&= \frac{1}{d^2-1}\left(d^2\sum_{l=0}^{2j}\frac{p_l^2}{2l+1}-1\right)\nonumber\\
		&\leq \frac{1}{d^2-1}\left(d^2\left(p_0^2+\frac{1}{3}\sum_{l=1}^{2j}p_l^2\right)-1\right)\nonumber\\
		&\leq \frac{1}{d^2-1}\left(d^2\left(p_0^2+\frac{1}{3}\left(\sum_{l=1}^{2j}p_l\right)^2\right)-1\right)\nonumber\\
		&= \frac{1}{d^2-1}\left(d^2\left(p_0^2+\frac{1}{3}\left(1-p_0\right)^2\right)-1\right)\nonumber\\
		&= 1-\frac{2}{3}\frac{d^2}{d^2-1}(1-p_0)(1+2p_0)\nonumber\\
		&\leq 1-\frac{2}{3}\frac{d^2}{d^2-1}(1-p_0)\nonumber\\
		&\leq 1 - \frac{1}{3\sqrt{2}}\frac{2j+1}{j^{3/2}}\sqrt{\Delta(\E)},
	\end{align}
	where the second inequality comes from the fact that the sum of squares of positive numbers is upper bounded by the square of the sum, and the final one from Eq.~\eqref{eq:inequality_aux}.
		
	On the other hand, using Eq.~\eqref{eq:dev_su2_part3} again, we note that given a fixed $p_0$ the deviation $\Delta(\E)$ is minimised for \mbox{$p_{1}=1-p_0$}. Therefore,
	\begin{equation}
		\Delta(\E)\geq\frac{1}{2j(j+1)^2}  (1-p_0)^2,
	\end{equation}
	resulting in the following bound
	\begin{equation}
		\label{eq:inequality_aux2}
		1-p_0\leq (j+1)\sqrt{2j\Delta(\E)}.
	\end{equation}
	To shorten the notation we will use $d=2j+1$ and $j$ simultaneously, and introduce \mbox{$A:=2j(4j+1)$}. We then have the following series of equalities and inequalities:
	\begin{align}
		u(\E)&= \frac{1}{d^2-1}\left(d^2\sum_{l=0}^{2j}\frac{p_l^2}{2l+1}-1\right)\nonumber\\
		&\geq \frac{1}{d^2-1}\left(d^2\left(p_0^2+\frac{1}{4j+1}\sum_{l=1}^{2j}p_l^2\right)-1\right)\nonumber\\
		&\geq \frac{1}{d^2-1}\left(d^2\left(p_0^2+\frac{1}{4j+1}\frac{(1-p_0)^2}{2j}\right)-1\right)\nonumber\\
		&= 1-\frac{d^2}{d^2-1}\frac{(A+1)p_0+(A-1)}{A}(1-p_0)\nonumber\\
		&\geq 1-\frac{2d^2}{d^2-1}(1-p_0)\nonumber\\
		&\geq 1-\frac{1}{\sqrt{2}}\frac{(2j+1)^2}{j^{1/2}} \sqrt{\Delta(\E)},
	\end{align}
	with the second inequality coming from the fact that the sum of squared probabilities, given a constraint on the total probability, is minimised for uniform distribution; and the final inequality coming from Eq.~\eqref{eq:inequality_aux2}.
\end{proof}

\begin{figure}[t!]
	\includegraphics[scale=1]{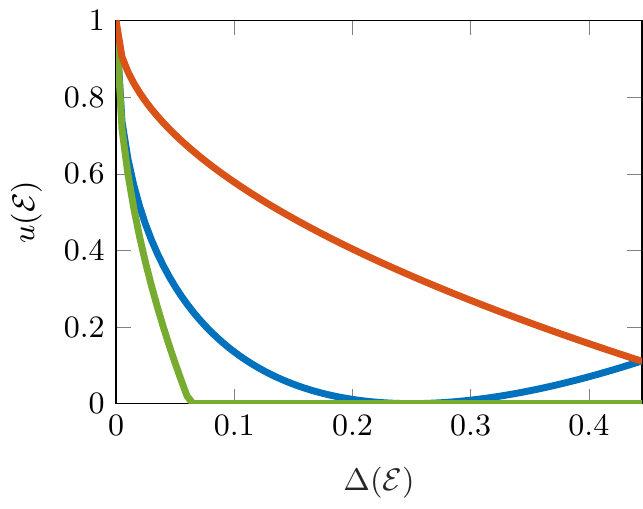}
	\includegraphics[scale=1]{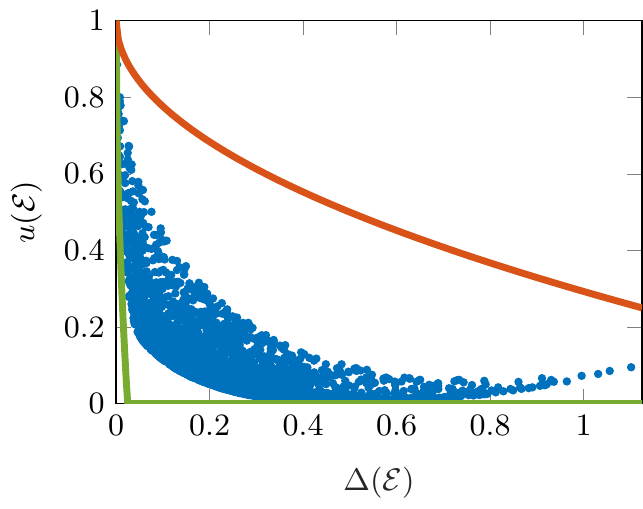}	
	\caption{\textbf{Trade-off between the deviation from angular momentum conservation, $\Delta(\E)$, and unitarity $u(\E)$ for SU(2)-irreducibly-covariant channels.} The middle blue line in the top panel (spin-$1/2$ system) and the blue dots in the bottom panel (spin-$1$ system) represent $[\Delta(\E),u(\E)]$ pairs realised by covariant channels. The top red and bottom green curves give the upper and lower bounds specialised to the case of SU(2) symmetry.\label{fig:su2_tradeoff} }
\end{figure}


\subsubsection{Examples}

Consider first the simplest example of a covariant channel for $j=1/2$. In this case the unitarity is given by
\begin{equation}
	u(\E) = \frac{1}{3}\left(4 p_0(\E)^2 + \frac{4}{3}(1-p_0(\E))^2-1\right),
\end{equation}
and the deviation by
\begin{equation}
	\Delta(\E) = \frac{4}{9}\, \,(1-p_0(\E))^2.
\end{equation}
A straightforward calculation then yields a direct relation between $u(\E)$ and $\Delta(\E)$,
\begin{equation}
	u(\E)=1-4\sqrt{\Delta(\E)}(1-\sqrt{\Delta(\E)}),
\end{equation}
while the bounds from Theorem~\ref{thm:su2_bounds} read
\begin{equation}
\frac{1}{4}	(1-u(\E))\leq \sqrt{\Delta(\E)} \leq \frac{3}{4}(1-u(\E))
\end{equation}
We present the above dependence and bounds in Fig.~\ref{fig:su2_tradeoff}. 
	 
The next simplest case concerns spin-$j$ system with $j=1$. We then have
\begin{equation}
	u(\E) = \frac{1}{8}\left(9 \left(p_0(\E)^2 + \frac{p_1(\E)^2}{3} +\frac{p_2(\E)^{2}}{5} \right) -1\right), 
\end{equation}
where $p_2(\E) = 1- p_0(\E)-p_1(\E)$ and the deviation is given by:
\begin{equation}
	\Delta(\E) = \frac{1}{32}\left(2\,\,p_1(\E)^2 + 6\, \,p_2(\E)^2\right).
\end{equation}
Unitarity $u(\E)$ and deviation $\Delta{\E}$ are no longer directly related, but they constrain each other, so that only some pairs \mbox{$[\Delta(\E),u(\E)]$} are realised by SU(2)-covariant channels. Our bounds then take the form
\begin{equation}
\frac{\sqrt{2}}{9} 	(1-u(\E))\leq \sqrt{\Delta(\E)} \leq \sqrt{2}(1-u(\E)),
\end{equation}
and again we plot them in Fig.~\ref{fig:su2_tradeoff} together with possible pairs \mbox{$(\Delta(\E),u(\E))$}.


\subsection{Bounds on deviating charges for U(1)-covariant open dynamics}
\label{sec:U1bounds}

We now turn to our final example of U(1)-covariant dynamics and the corresponding trade-off between deviations from energy conservation and unitarity of the channel. Throughout this section we employ the notation introduced in Sec.~\ref{sec:u1_structure} while studying the convex structure of U(1)-covariant channels. For simplicity, we focus on the input and output systems of the same dimension, \mbox{$d_A=d_B=d$}, and described by the same Hamiltonian \mbox{$H_A=H_B=H$}.	As we will shortly explain, in this case it is impossible to lower bound unitarity $u(\E)$ given the deviation $\Delta(\E)$, and thus our aim is to upper bound the unitarity $u(\E)$ given the deviation $\Delta(\E)$, i.e., to find the minimal allowed departure from a closed symmetric dynamics that can explain a given deviation from energy conservation.	

	
\subsubsection{Expressions for unitarity and deviations}
	
Substituting the decompositions given in Eq.~\eqref{eq:u1_jamiolkowski} to Eq.~\eqref{eq:unitarity_jamiolkowski}, one obtains the following expression for unitarity of a general U(1)-covariant channel~$\E$:
\begin{equation}
	\label{eq:u1_unitarity}
	u(\E)=\frac{1}{d^2-1}\left(d^2\sum_{\lambda}\gamma(\J^{(\lambda)}(\E))-b^\E\right),
\end{equation}
	where 
\begin{equation}
	b^\E:=\frac{1}{d}\sum_m \left(\sum_n P^\E_{mn}\right)^2,
\end{equation}
describes how far $P^\E$ is from a bistochastic matrix, i.e., $b^\E=1$ when $P^\E$ is bistochastic and $b^\E>1$ otherwise.
	
The expression for the average total deviation $\Delta$ is given by Eq.~\eqref{eq:dev_su2_partxxxxx},
\begin{align}
	\label{eq:dev_u1_part}
	\Delta(\E) &= \frac{1}{d(d+1)} \left(\tra{\delta H}^2 + \tra{\delta H^2}\right). 	
\end{align}
Moreover, since $H$ is diagonal in the energy eigenbasis, the expression for $\E^\dagger(H)$ only involves $\lambda=0$ block, and so $\delta H$ can be easily calculated explicitly. More precisely, one gets
\begin{subequations}
	\begin{align}
		\tra{\delta H}^2 &=\left( \sum_{m,n}  P^\E_{nm} (E_n-E_m) \right)^2,\\
		\tra{\delta H^2} &=\sum_m\left( \sum_{n}  P^\E_{nm} (E_n-E_m) \right)^2.
	\end{align}
\end{subequations}
	
	
\subsubsection{Deriving trade-off relations}
	
First of all, we note that the deviation $\Delta(\E)$ depends only on $P^\E$, while the unitarity $u(\E)$ depends both on $P^\E$ (forming diagonals of $\J^{(\lambda)}(\E)$) and on $L^{(\lambda\neq 0)}(\E)$ (forming the off-diagonal terms of $\J^{(\lambda)}(\E)$). Therefore, it is impossible to lower bound unitarity given the deviation. To see this more clearly, consider the following family of partial dephasing channels (which are U(1)-covariant):
\begin{equation}
	\D_p:=p\D+(1-p) \I
\end{equation}
	with
\begin{equation}
	\D(\cdot):=\sum_n \bra{E_n}(\cdot)\ket{E_n} \ketbra{E_n}{E_n}.
\end{equation}
Clearly, $P^{\D_p}_{mn}=\delta_{mn}$ and so $\Delta(\D_p)=0$. However, the unitarity varies between 1 (for $p=0$) and $1/(d+1)$ (for $p=1$). This is in accordance with our discussion in Sec.~\ref{sec:lowerbounds} concerning general non-existence of the lower bounds. Thus, we focus on deriving upper bounds.
	
We start by noting that each purity term in Eq.~\eqref{eq:u1_unitarity} is upper bounded by \mbox{$\tra{\J^{(\lambda)}(\E)}$} (this simply corresponds to \mbox{$\J^{(\lambda)}(\E)$} being unnormalised projectors). Using this observation, as well as the fact that $b^\E\geq 1$, we get
\begin{equation}
	\label{eq:u1_uni_bound}
	u(\E)\leq\frac{1}{d^2-1}\left(\sum_{\lambda} q_\lambda^2-1\right),
\end{equation}
with
\begin{equation}
	q_\lambda:=\sum_n P^\E_{n+\lambda,n}
\end{equation}
corresponding to the (unnormalised) probability of energy $\lambda$ flowing into the system due to the action of $\E$. Note that Eq.~\eqref{eq:u1_uni_bound} yields a bound on unitarity that is expressed purely in terms of $P^\E$.
	
Now, the crucial point is that for $\lambda\neq 0$ we have $q_\lambda\leq g$, with $g$ denoting the largest number of pairs of energy levels separated by the same energy difference. The minimal value of $g$ is $1$, corresponding to a Hamiltonian $H$ with non-degenerate Bohr spectrum, while the maximal value is $(d-1)$, achieved for a Hamiltonian with an equidistant spectrum. Since \mbox{$\sum_{\lambda} q_\lambda=d$}, this means that the upper bound in Eq.~\eqref{eq:u1_uni_bound} will be strictly smaller than 1 if $q_0<d$. In other words, as soon as there is any energy flow induced by $\E$ (captured by $P^\E_{nn}<1$ for at least one~$n$), unitarity $u(\E)$ will be strictly smaller than 1. 
	
Let us now relate this observation to a concrete bound on $u(\E)$ involving $\Delta(\E)$. First, we introduce the width of the energy spectrum:
\begin{equation}
	\tilde{E}:=E_d-E_1.
\end{equation}
	This allows us to  get the following bound,
\begin{equation}
	\tra{\delta H}^2 \leq \tilde{E}^2 \left( \sum_{m,n\neq m}  P^\E_{nm} \right)^2=\tilde{E}^2 \left( d-q_0 \right)^2.
\end{equation}
Similarly,
\begin{equation}
	\tra{\delta H^2} \leq \tilde{E}^2 \sum_n\left( 1-P^\E_{nn} \right)^2\leq\tilde{E}^2 \left( d-q_0 \right)^2,
\end{equation}
with the second inequality coming from the fact that the sum of squares of positive numbers is upper bounded by the square of the sum. Combining these two bounds together we arrive at
\begin{align}
	\label{eq:u1_dev_bound}
	\Delta(\E) &\leq \frac{2\tilde{E}^2}{d(d+1)} (d-q_0)^2. 	
\end{align}
	
Next, we will rewrite Eq.~\eqref{eq:u1_uni_bound} in a more convenient form as
\begin{equation}
	u(\E)\leq 1-\frac{d^2}{d^2-1}\left(1-\frac{q_0^2}{d^2}-\sum_{\lambda\neq 0} \frac{q_\lambda^2}{d^2}\right).
\end{equation}
For a fixed $q_0$ the right hand side of the above equation is maximised when for some $\lambda'$ we have $q_{\lambda'}=d-q_0$ and $q_\lambda =0$ otherwise. However, this may not be possible due to a constraint $q_\lambda\leq g$. Thus, we need to consider two separate cases. First, assume that $q_0\geq (d-g)$, so that $d-q_0\leq g$ and the constraint is satisfied. We then have
\begin{align}
	\!\!\!\!\!\!\!\!\!u(\E)&\leq 1-\frac{d^2}{d^2-1}\left(1-\frac{q_0^2}{d^2}-\frac{(d-q_0)^2}{d^2}\right)\nonumber\\
	&=1-\frac{2}{d^2-1}q_0(d-q_0)\leq 1-\frac{2(d-g)}{d^2-1}(d-q_0)\nonumber\\
	&\leq 1-\frac{2(d-g)}{d^2-1}\sqrt{\frac{d(d+1)}{2}}\frac{\sqrt{\Delta(\E)}}{\tilde{E}},
\end{align}
where the final inequality comes from Eq.~\eqref{eq:u1_dev_bound}. On the other hand, if $q_0< (d-g)$, then we can upper bound the unitarity by choosing the maximal allowed value \mbox{$q_{\lambda'}=g$}, and the remaining energy flows to \mbox{$q_{\lambda''}=d-g-q_0$}\footnote{Of course this is a very rough bound, since for small $q_0$ it may happen that the remaining energy flows $d-g-q_0$ are still larger than $g$, and should be split over more indices $\lambda$}. This means that
\begin{align}
	\!\!\!\!\!\!\!\!\!u(\E)&\leq 1-\frac{d^2}{d^2-1}\left(1-\frac{q_0^2}{d^2}-\frac{(d-g-q_0)^2}{d^2}-\frac{g^2}{d^2}\right)\nonumber\\
	&=1-\frac{2}{d^2-1}\left(g(d-g)+q_0(d-g-q_0)\right)\nonumber\\
	&\leq 1-\frac{2g(d-g)}{d^2-1}\leq 1-\frac{2g(d-g)}{d^2-1}\frac{d-q_0}{d}\nonumber\\
	&\leq 1-\frac{2g(d-g)}{d(d^2-1)} \sqrt{\frac{d(d+1)}{2}}\frac{\sqrt{\Delta(\E)}}{\tilde{E}}.
\end{align}
As we do not know what the value of $q_0$ really is (we only know what $\Delta(\E)$ is), we need to choose the weaker of the two bounds and thus we end up with
\begin{equation}
	\label{eq:u1_tradeoff}
	u(\E)\leq 1-\frac{g(d-g)}{d-1} \sqrt{\frac{2}{d(d+1)}}\frac{\sqrt{\Delta(\E)}}{\tilde{E}}.
\end{equation}

	
\subsubsection{Example}
	
\begin{figure}[t]
	\centering
	\setlength\figheight{0.45\columnwidth}
	\setlength\figwidth{0.65\columnwidth}
	\def\figscale{1}
	\includegraphics{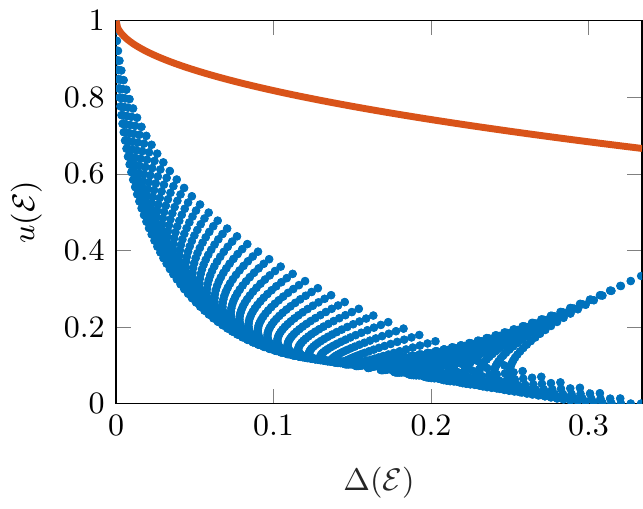}
	\caption{\label{fig:u1_tradeoff} \textbf{Trade-off between the deviation from energy conservation, $\Delta(\E)$, and unitarity $u(\E)$ for U(1)-covariant channels} Each blue dot represents $[\Delta(\E),u(\E)]$ pair for a qubit channel with fixed $P^\E$ and optimal unitarity (the two parameters defining $P^\E$, \mbox{$0\leq P^\E_{00},P^\E_{11}\leq 1$}, are taken as points from the lattice \mbox{$[0,1]\times[0,1]$} with lattice constant 0.02). The orange solid line is the upper bound from Eq.~\eqref{eq:u1_qubit_bound}. }
\end{figure}

Consider a qubit system with unit energy splitting, $\tilde{E}=1$. Average total deviation is then given by
\begin{equation}
	\Delta(\E)=\frac{\left(P_{00}^\E-P_{11}^\E\right)^2+\left(1-P_{00}^\E\right)^2+\left(1-P_{11}^\E\right)^2}{6},
\end{equation}	
while the optimal unitarity (obtained by choosing the blocks of the Jamio{\l}kowski matrix to be unnormalised projectors) for a fixed matrix $P^\E$ is given by
\begin{equation}
	\!\!u(\E)=\frac{\left(P^\E_{00}\!+\!P^\E_{11}\right)^{\!2}\!\!+\!\left(1\!-\!P^\E_{00}\right)^{\!2}\!\!+\!\left(1\!-\!P^\E_{11}\right)^{\!2}\!-b(\E)}{3},\!
\end{equation}
with
\begin{equation}
	b(\E)=1+\left(P^\E_{00}-P^\E_{11}\right)^2. 
\end{equation}
In Fig.~\ref{fig:u1_tradeoff} we present the region of all achievable pairs \mbox{$[\Delta(\E),u(\E)]$} (i.e., for each matrix $P^\E$ we plot the corresponding deviation from energy conservation and the optimal unitarity of the quantum channel transforming energy eigenstates according to $P^\E$), together with our bound from Eq.~\eqref{eq:u1_tradeoff} that for this example reads
\begin{equation}
	\label{eq:u1_qubit_bound}
	u(\E)\leq 1-\frac{1}{\sqrt{3}}\frac{\sqrt{\Delta(\E)}}{\tilde{E}}.
\end{equation}

\kk{
	\section{Directions of Application}
	\label{sec:applications}
	
	In this section we list and discuss potential applications of our results. The basic philosophy is that symmetry constraints (or the lack thereof) lead to simple consequences for easily estimated physical quantities, and thus by tracking these concrete quantities one can probe symmetry structures in the quantum dynamics. This generality suggests a range of applications, and so we shall briefly outline the following:
	
	\begin{itemize}
		\item  \emph{Symmetry-analysis for benchmarking of quantum devices for quantum technologies.}
		\item \emph{Development of universal bounds for the thermodynamics of quantum systems.}
		\item \emph{Applications in the foundations of measurement-theory and quantum tomography.}
		\item \emph{Use in symmetry-checking for Hamiltonian simulations.}
		\item \emph{Specialization to continuous-time (Markov) dynamics of open quantum systems.}
	\end{itemize}
	Concerning the last point, the discrepancy between symmetries and conservation laws for open systems has been used in the context of Lindblad master equations to examine the structure of non-equilibrium steady states~\cite{zhang2020stationary,albert2019asymptotics,albert2014symmetries,buvca2012note,manzano2014symmetry}. Our work has been phrased in terms of general quantum channels, but this can be readily adapted to continuous-time master equations involving dissipative dynamics, which corresponds to a 1-parameter family of quantum channel $\{\E_t\}_{t\ge0}$. It is therefore of interest to specialize our analysis to this regime and determine if the bounds derived here can be tightened under the assumption of Markovian dynamics.
	We leave these to future work, and now give a more detailed description of the other possible directions mentioned above.

	
	\subsection{Tools for benchmarking quantum devices}
	
	Currently, a major theoretical and experimental focus is to develop devices that can process quantum information for future quantum technologies, e.g., for communication, metrology or computing. A central challenge is to assess the degree to which one has good coherent control over the quantum device, which generically undergoes complex dynamics. One could, in principle, answer this question with full quantum process tomography; however, it \emph{scales exponentially with the system size}, requires strong assumptions on state preparation and measurement errors, and thus in many cases is infeasible~\cite{knill2008randomized}. Therefore, approximate methods have been developed that are based on easily estimated physical quantities, to shed light on the level of control over the quantum device. As such, this provides a target for applying the results developed in the current work.
	
	For example, there is a major push to develop a quantum computer that would provide computational abilities that surpass those of traditional classical computers, e.g., in the simulation of highly correlated quantum systems, for quantum chemistry, etc.~\cite{georgescu2014quantum}. Here, it is crucial to assess the degree to which a device can approximately realise a computational gate-set. Among various methods developed, \emph{randomized benchmarking} (RB) is probably the most prominent one~\cite{knill2008randomized,magesan2011scalable}. Randomized benchmarking of a quantum device can be achieved by applying a random sequence of noisy gates $\G_1, \G_2, \dots , \G_m$ from the gate-set to some initial quantum state $\rho$, and then estimating the expectation value $\<Q\>$ of some observable $Q$ on the system.
	
	From the theory of RB, we find that the expectation value, averaged over randomly selected gate sequences of length $m$, decays as~\cite{wallman2015estimating}
	\begin{equation}
	\<Q^2\> = A + B u(\E)^{m-1},
	\end{equation}
	which holds for an arbitrary observable $Q$ on the system. The central role played by the unitarity of a quantum channel in both the present analysis and the RB scenario suggests the following application. One may analyse the noise channels through the lens of particular unitary sub-group actions, which would allow a finer description of the device, essentially due to harmonic analysis. 
	
	More concretely, consider a noisy $n$-qubit system $A$ whose gate-set is subject to some noise model, described by a quantum channel $\E$ on $A$ (assume for simplicity gate-independent noise). We would like to study the structure $\E$, and we might have prior information that suggests it only weakly breaks some symmetry group $G$, e.g., it might be expected that the quantum device respects rotations about the $z$ axis of its physical qubits. Our analysis provides a simple way to test this. Let $\{J^k_A\}_{k=1}^n$ be the generators for this unitary subgroup on~$A$. Now, Theorem~\ref{thm:upperbounds} tells us that if it is the case that $\E$ is covariant with respect to this subgroup then the following relationship between $\Delta(\E)$ and $u(\E)$ must hold
	\begin{equation}
	\Delta (\E) \le 8n d_A(d_A-1) \max_k ||J^k_A||_1^2 [ 1-u(\E)].
	\end{equation}
	Note that both the unitarity $u(\E)$ and $\Delta(\E)$ can be estimated efficiently. Thus, if $\Delta(\E)$ violates the above bound then we deduce that the noise model strongly breaks this particular subgroup. From a Stinespring dilation we can further infer that the noise present involves non-trivial $G$-couplings with the environment. In a similar vein, one can exploit the multiplicity-free lower bound condition from Theorem~\ref{thm:lowerbounds},
	\begin{equation}
	\sqrt{\Delta{\E}} \ge c(d_B, J_B^k) (1-u(\E)),
	\end{equation}
	where $c(d_B, J_B^k)$ is a constant depending on the dimension of $B$. Employing it, one can assess how well the noise channel respects symmetries on multiplicity-free (sub)-systems of the quantum device.
	
	Thus estimating $\Delta(\E)$, for a choice of sub-group, supplements the existing toolkit of randomized benchmarking. Obviously, such constraints should be refined and specialised to the task in hand, but such general bounds can help one circumvent full process tomography, and address abstract structural questions of quantum channels through a simple set of expectation values of observables, and in a manner that links naturally with modern randomized benchmarking techniques.
	
	
	\subsection{Thermodynamics of general quantum systems}
	
	The reduction of a potentially complex quantum process to a small set of distinguished quantities is squarely in the spirit of thermodynamic methods. Indeed, unitarity $u(\E)$ clearly captures the degree to which the quantum channel $\E$ is irreversible~\cite{korzekwa2018coherifying}. Therefore, our analysis can be used to further develop recent works on thermodynamics from a quantum information perspective~\cite{goold2016role}. There, one models thermodynamic transformations by a distinguished set of quantum channels known as \emph{thermal operations}~\cite{horodecki2013fundamental}, which do not inject any free energy into the system or ordered quantum coherence. More precisely, a thermal operation $\E$ describes the evolution of a system with a Hamiltonian $H_S$ prepared in a state $\rho_S$ due to the energy-preserving interaction $U$ with a bath described by a Hamiltonian $H_B$ and prepared in a thermal equilibrium state $\gamma_B$:
	\begin{equation}
	\label{eq:tos}
	\E(\rho)=\trb{B}{U(\rho_S\otimes\gamma_B)U^\dagger},
	\end{equation}
	with $[U,H_S+H_B]=0$. The standard tools within this theory are based on a continuous family of single-shot entropies~\cite{brandao2013second} that are quite unwieldy for describing the coarse-grained thermodynamic behaviour of the system. However, the channels defined in Eq.~\eqref{eq:tos} naturally exhibit a time-translational symmetry~\cite{lostaglio2015description}, i.e., they are covariant with respect to U(1) group generated by the system's Hamiltonian $H_S$, and thus fall under the scope of our results. 
	
	As a consequence, the results we presented here have direct thermodynamic consequences and offer a novel approach to studying conserved charge flows and the decoherence they induce. In particular, the bound in Eq.~\eqref{eq:u1_tradeoff} relates \emph{averaged energy flows} into the system (measured by $\Delta$) during a thermodynamic process, to the amount of irreversible decoherence this process induces (measured by $u$). This means that in order to change the system's energy during a thermodynamic protocol, one has to pay the unavoidable price of deteriorating the quantum superpositions present in the system. We emphasise that this trade-off is universal, i.e., it does not depend on the structure of the thermodynamic bath or the particular interaction Hamiltonian; instead it is based on the fundamental concept of energy conservation. However it is reasonable to expect that these bounds will be most useful in the regime where the effective bath degrees of freedom involved in the process are small. Moreover, our results can also be used to investigate thermodynamics of quantum systems with multiple conserved quantities~\cite{guryanova2016thermodynamics}, e.g., Theorem~\ref{thm:su2_bounds} can be used to upper- and lower-bound the decoherence induced by transferring angular momentum from the bath to the system. We thus see that one could use this work as a starting ground to develop general thermodynamic trade-off relations that, on the one hand, are based on directly measurable physical quantities (conserved charges), and on the other hand capture the unavoidable thermodynamic irreversibility (and related quantum decoherence).
	
	
	\subsection{Measurement-theory and the WAY theorem}
	
	One motivation for the above thermodynamic models can be provided by the measurement theory in the presence of conservation laws. In particular, if one considers an additively conserved quantum $A_S$ (such as energy or momentum), then the celebrated Wigner-Araki-Yanese theorem, or WAY-theorem~\cite{busch2010translation,araki1960measurement,yanase1961optimal}, tells us that the only observables that can be measured in a repeatable manner, are those commuting with $A_S$. Otherwise, a measurement of some observable $B_S$  with $[B_S, A_S] \ne 0$ unavoidably displays a form of irreversibility, which the resource-theoretic thermodynamic framework can capture. 
	
	However, one could equally well extend the present analysis for such WAY-theorem scenarios beyond thermodynamics to general measurement theory. It is well-known that \emph{approximate measurements} of $B_S$ in a state $\ket{\psi}$ can be designed, but these require the measuring apparatus to be prepared in a state $\ket{\xi}$ with large variance of the apparatus observable $A_M$ corresponding to the conserved charge. More precisely, the root mean square error $\epsilon(B_S)$ in the
	measurement of $B_S$ is lower-bounded by~\cite{ozawa2002conservation} 
	\begin{equation}
	\label{eq:ozawa}
	\epsilon^2(B_S)\geq \frac{|\bra{\psi} [A_S,B_S]\ket{\psi}|^2}{4\sigma^2(A_S)+4\sigma^2(A_M)},
	\end{equation}
	where $\sigma(A_S)$ and $\sigma(A_M)$ denote the variances of observables $A_S$ and $A_M$ in the initial states $\ket{\psi}$ and $\ket{\xi}$. However, the original derivation of the bound~\cite{ozawa2002conservation}, as well as further designs of optimal measurement protocols~\cite{ahmadi2013wigner}, assume a perfectly closed unitary evolution of the joint $SM$ system. In this context, a natural extension of the work here would be to provide general analysis of \emph{the second-order moments} of the conserved quantities as a function of the unitarity of the channel. One can then, for example, take into account the charge-conserving coupling of the measured system and apparatus with the environment, and establish universal resource bounds on measurement schemes in the presence of noise. Specifically, such analysis would connect the measurement error of observables under symmetry constraints with both the unitarity of the joint system-apparatus evolution and the variance of the conserved observables for the apparatus. In effect, we envision that the suggested extensions would lead to a finer-grained inequality similar to Eq.~\eqref{eq:ozawa} that would also take into account the leakage of information outside the system plus apparatus setup, and which in the limit of unitary evolution would reduce to the WAY formulation.  In this manner one could obtain unitarity-based generalisations of measurement-error trade-off relations under symmetry constraints. 
	
	Beyond this second-order analysis and motivated by resource-measures, another interesting direction to be explored in the context of tomography, is to develop a more unified treatment of the $n$-th order moments, which could provide greater insight into the degree to which one can perform efficient tomography of quantum channels that are covariant with respect to some symmetry constraint.
	
	
	\subsection{Hamiltonian simulations \& error-mitigation via symmetry checking}
	
	Current quantum hardware is plagued by noise that in the regime of a few hundred qubits with relatively short coherence time cannot be feasibly corrected via fault-tolerant methods of encoding information~\cite{preskill2018quantum}. However, these noisy intermediate scale quantum (NISQ) devices are capable of demonstrating quantum advantage~\cite{arute2019quantum} and, while it is still a major open question, they may possibly lead to computational advantages. Therefore, it is important to devise methods of characterisation, mitigation and benchmarking of noise accumulated during a quantum information processing task. Such methods are useful not only for near-term applications, but they may very well pave the way for improved error-correction techniques. 
	
	In many instances, the problem we aim to solve on a quantum device has a specific structure, so that during the computation only a particular subspace of the full Hilbert space is explored. A major application is Hamiltonian simulation, where the underlying physical system often has many symmetries (particle conservation, rotational symmetry, etc.) that need to be enforced to prepare a time-evolved state or an appropriate ansatz for energy estimation (via the variational quantum eigensolver or other methods \cite{mcardle2019error, gard2020efficient}). For example, one may want to enforce conservation of particle number for a lattice gauge theory simulation, or prepare states with a fixed number of spin orbitals and enforce that wavefunctions resulting from chemistry simulations lie in the antisymmetric subspace. However, noise will affect such state preparation, and so the resulting state will be outside of the feasible subspace determined by the symmetry constraints. 
	
	Our results find a natural application in obtaining bounds on the unitarity of the global noise associated with the \emph{entire} (symmetric) circuit from directly evaluating the strength of the conservation violation. Unitarity, together with average process fidelity (which may be determined via cycle benchmarking \cite{erhard2019characterizing}), can then be used directly to obtain the bounds (or, in some restricted multiplicity-free cases, estimates) for the diamond norm distance between target symmetric unitary and the actual noisy channel implemented by the quantum device. One should note that this information about the unitarity of the noise across the entire circuit \emph{is not} accessible via the randomised benchmarking protocol for unitarity; by analogy, the difference here is the same as between error rate computed via RB giving an estimate of average fidelity and process fidelity, which is more challenging to evaluate. This also raises the question of how the benchmarking unitarity protocol can be extended to estimate the unitarity of noise across an entire specific computation.
	
	We can provide the following simple toy-model experiment to illustrate our point. Suppose $\psi_0=\ketbra{\psi_{0}}{\psi_0}$ is a two-qubit initial state prepared with support on the $\{|01\rangle , |10\rangle \}$ subspace. Let $J = Z\otimes \iden+\iden\otimes Z$ be a generator of symmetry, with $Z$ standing for the Pauli $Z$ operator. The state $\psi_0$ then undergoes an evolution by a (parametrized) unitary $V(\theta) = e^{iZ \otimes Z\theta}$, which is a symmetric unitary as it commutes with the generator. These types of unitaries appear for instance in ansatze for quantum-chemistry simulations~\cite{romero2018strategies}, and the chosen subspace can be viewed as a restriction to a fixed particle number subspace. One can check that $\Tr(J\psi_0) =0$. Under the symmetric unitary, $\ket{\psi_0}$ gets mapped to a target state \mbox{$|\psi_\theta\rangle = V(\theta)|\psi_0\rangle$}, which also satisfies $\Tr(J\psi_\theta) =0$ due to the symmetry. However, should we want to prepare such a state on an actual device, the effective map will be a noisy approximation described by a channel $\tilde{\mathcal{V}}$. The resulting noisy state \mbox{$\tilde{\psi}_\theta = \tilde{\mathcal{V}}(\psi_0)$} may not conserve $J$ if, for instance, the noisy state leaks outside the subspace $\{|01\rangle , |10\rangle \}$. A measurement in the computational basis will allow us to evaluate $\Delta(\psi_0, \tilde{\mathcal{V}}) = |\Tr(J \tilde{\psi}_{\theta})|^{2}$, which puts a direct lower bound on the distinguishability between $\tilde{\mathcal{V}}$ and the target computation $\mathcal{V}$, via Eq.~\eqref{eq:diamondupper}.
}


\section{Conclusion and Outlook}
\label{sec:conclusions}
	
We have studied the relationship between symmetry principles and conservation laws for irreversible dynamics that goes beyond Noether's principle. We established that the two questions posed in the introduction are fundamentally related. On the one hand, we provide the optimal active transformation approximating spin polarisation inversion, but this turns out to be the symmetric channel that achieves maximal deviation from the conservation law of spin angular momenta. Both of these limitations arise as fundamental constraints imposed by quantum theory on the connection between symmetry principles and conservation laws. At the core of these statements lies the convex structure of symmetric channels. 

Generally, classifying the structure of extremal (symmetric) channels~\cite{d2004extremal} is a difficult problem that remains open in the general setting~\cite{ruskai2007some}. For particular symmetries, the structure simplifies significantly and in several situations all extremal channels become isolated, forming a simplex. This was the case of symmetries described by irreducible representations of $SU(2)$ analysed in detail in Ref.~\cite{nuwairan2013su2,nuwairan2015su2}, but it can occur also for finite groups~\cite{mozrzymas2017structure} and Weyl groups~\cite{siudzinska2018quantum}. Channels that are symmetric under an irreducible representation of some compact group are of particular importance in quantum information as their classical capacity is related to their minimal output entropy~\cite{holevo2002remarks,holevo2005additivity,holevo1993note,mozrzymas2017structure}. This simpler structure was also crucial to our analysis of the robustness of conservation laws under symmetric irreversible dynamics.

This work broadly addresses structural aspects of Noether's theorem for general quantum processes, feat that connects with several important developments that aim to understand the constraints symmetry imposes on measurements via the Wigner-Araki-Yanase theorem~\cite{yanase1963information,marvian2012information, ahmadi2013wigner}, on state transformations \cite{gour2017quantum, hebdige2019classification} or the consequences of global symmetries to gauging dissipative dynamics of multipartite systems~\cite{cirstoiu2017global}. \kk{Moreover, we have discussed in more detail the relevance of our work to the development of quantum technologies (benchmarking of quantum devices, quantum thermodynamics and simulations), emphasising the importance of the fact, that the central measure we use in our results is easily accessible experimentally.}

We have restricted our analysis of Noether's principle to symmetric dynamics described by completely positive maps. Violations from conservation laws can occur in a variety of situations, including classical systems with dissipation leading to modified conserved currents and extensions of Noether's theorem for classical Markov processes~\cite{baez2013noether, gough2015noether}. In using the formalism of CPTP maps, there is an assumption that	the quantum system of interest is initially fully decoupled from its environment. A further direction to explore can be the situation when the system is coupled to the environment. This would lead to a local dynamical map corresponding to non-CP noise. We expect the stability of conservation laws under such dynamics to be difficult to characterise solely in terms of the local dynamics on the main system; we conjecture that in such case the upper bounds on deviation from conservation law in terms of unitarity of the (now non-CP) dynamics will no longer hold, due to a strong dependence on the initial system-environment interaction.

Finally, we speculate on the relevance of this work to relativistic quantum information theory, where decoherence induced by relativistic effects~\cite{pikovski2015universal} can have an impact on probing conservation laws for the quantum systems involved.

\vspace{-0.3cm}
	
\section*{Acknowledgements}
	
The authors would like to thank Benoit Collins for suggesting the excellent reference~\cite{nuwairan2013su2} and Iman Marvian for helpful discussions. CC acknowledges support from EPSRC National Quantum Technology Hub in Networked Quantum Information Technologies. This project/publication was also supported through a grant from the John Templeton Foundation.  KK acknowledges support from the ARC via the Centre of Excellence in Engineered Quantum Systems (EQUS), project number CE110001013, as well as by the Foundation for Polish Science through IRAP project co-financed by EU within Smart Growth Operational Programme (contract no. 2018/MAB/5) and through TEAM-NET project (contract no. POIR.04.04.00-00-17C1/18-00) DJ is supported by the Royal Society and also a University Academic Fellowship.


\newpage
\onecolumngrid
\appendix



\section{Unitarity of quantum channels -- alternative formulations}
\label{appendix:unitarity}
	
\begin{lemma}
	For any channel $\E:\B(\h_A)\rightarrow\B(\h_B)$ the unitarity satisfies:
	\begin{equation}
		u(\E) \leq 1
	\end{equation}
	with equality if and only if there exists an isometry $V:\h_A\rightarrow\h_B$ such that $\E(\rho) = V\rho V\hc$ for all $\rho \in \B(\h_A)$
\end{lemma}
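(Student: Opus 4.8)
The plan is to reduce the statement to a purity inequality for the reduced states of a single bipartite operator, using the complementary-channel formula for unitarity from Lemma~\ref{lem:unitaritychoi}. First I would fix a Stinespring isometry $V:\h_A\to\h_B\otimes\h_E$ for $\E$ and set $\omega_{BE}:=V(\iden_A/d_A)V^\dagger$. Since $V$ is an isometry, $\omega_{BE}=\frac{1}{d_A}\sum_i \ketbra{\phi_i}{\phi_i}$ with $\ket{\phi_i}=V\ket{i}$ orthonormal, so $\omega_{BE}$ is ``flat'': rank $d_A$ with all nonzero eigenvalues equal to $1/d_A$, whence $\gamma(\omega_{BE})=1/d_A$. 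Its marginals are exactly $\sigma_B:=\Tr_E\omega_{BE}=\E(\iden_A/d_A)$ and $\sigma_E:=\Tr_B\omega_{BE}=\tilde\E(\iden_A/d_A)$, so Eq.~\eqref{eq:unitarity_complementary} reads $u(\E)=\frac{d_A}{d_A^2-1}\big(d_A\gamma(\sigma_E)-\gamma(\sigma_B)\big)$. Thus $u(\E)\le 1$ is equivalent to the bound $d_A\gamma(\sigma_E)-\gamma(\sigma_B)\le (d_A^2-1)/d_A$.

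The key step is a replica argument producing two purity inequalities. On two copies of $\h_B\otimes\h_E$, let $F_B,F_E$ be the swaps of the two $B$-factors and two $E$-factors, and $P_B^{\pm}=\frac12(\iden\pm F_B)$, $P_E^{\pm}=\frac12(\iden\pm F_E)$ the associated symmetric/antisymmetric projectors. Using $\Tr[(\omega\otimes\omega)F_B]=\gamma(\sigma_B)$, $\Tr[(\omega\otimes\omega)F_E]=\gamma(\sigma_E)$ and $\Tr[(\omega\otimes\omega)F_BF_E]=\gamma(\omega)$, the positivity $\Tr[(\omega\otimes\omega)P_B^{s}P_E^{-}]\ge0$ for $s=\pm$ gives
\begin{subequations}
\begin{align}
\gamma(\sigma_E)-\gamma(\sigma_B)&\le 1-\gamma(\omega)=1-\tfrac{1}{d_A},\\
\gamma(\sigma_E)+\gamma(\sigma_B)&\le 1+\gamma(\omega)=1+\tfrac{1}{d_A}.
\end{align}
\end{subequations}
Taking the convex combination of these with the positive weights $\tfrac{d_A+1}{2}$ and $\tfrac{d_A-1}{2}$ respectively yields exactly $d_A\gamma(\sigma_E)-\gamma(\sigma_B)\le (d_A^2-1)/d_A$, establishing $u(\E)\le1$.

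For the equality case, both weights are strictly positive when $d_A\ge2$ (and $d_A=1$ is trivial), so $u(\E)=1$ forces equality in both inequalities, i.e.\ $\Tr[(\omega\otimes\omega)P_B^{+}P_E^{-}]=\Tr[(\omega\otimes\omega)P_B^{-}P_E^{-}]=0$. Adding them gives $\Tr[(\omega\otimes\omega)P_E^{-}]=\tfrac12(1-\gamma(\sigma_E))=0$, so $\sigma_E$ is pure. Writing $\sigma_E=\ketbra{e}{e}$ and a Kraus form $\tilde\E(\cdot)=\sum_k \tilde E_k(\cdot)\tilde E_k^\dagger$, the relation $\sum_k \tilde E_k\tilde E_k^\dagger=d_A\ketbra{e}{e}$ forces each $\tilde E_k=\ket{e}\bra{v_k}$, and trace preservation $\sum_k\ketbra{v_k}{v_k}=\iden_A$ then gives $\tilde\E(\cdot)=\ketbra{e}{e}\Tr(\cdot)$; the complementary channel is constant. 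Hence $V\ket\psi$ has pure $E$-marginal $\ketbra{e}{e}$ for every $\ket\psi$, so $V\ket\psi=(W\ket\psi)\otimes\ket e$ for an isometry $W:\h_A\to\h_B$, yielding $\E(\rho)=W\rho W^\dagger$. The converse is a one-line check: for $\E(\rho)=V\rho V^\dagger$ with $V^\dagger V=\iden_A$, the identity $\Tr[\E(X)^2]=\Tr[X^2]$ collapses the definition~\eqref{eq:unitarity} to the identity-channel value $1$.

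The main obstacle I anticipate is the equality analysis rather than the inequality itself: one must argue cleanly that purity of $\sigma_E=\tilde\E(\iden_A/d_A)$ forces the \emph{entire} complementary channel to be constant (not just its action on the maximally mixed input), and that this pins $V$ to the product form $W\otimes\ket e$. I would also take care to note that $\gamma(\sigma_E)$ is dilation-independent, since $\sigma_E$ is determined only up to an isometry on $E$, so the whole argument is insensitive to the chosen Stinespring representation.
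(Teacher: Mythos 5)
Your proof is correct, but it takes a genuinely different route from the paper's. The paper works directly from the definition: for each pure state $\psi$ it expands $\iden_A$ in an orthonormal basis containing $\psi$, rewrites $u(\E)$ in $d_A$ symmetric ways by permuting which basis element plays the role of $\psi$ (using Haar invariance), sums these expressions, and then uses only $\Tr(\E(\psi_i)^2)\le 1$ together with $\Tr(\E(\psi_j)\E(\psi_k))\ge 0$ to conclude $d_A u(\E)\le d_A$, with equality forcing every output $\E(\psi)$ of a pure state to be pure, i.e.\ $\E$ an isometry; the converse direction is essentially the same direct Haar-integral computation you give. Your argument instead routes everything through the complementary-channel identity of Lemma~\ref{lem:unitaritychoi}, the flatness of $\omega_{BE}=V(\iden_A/d_A)V^\dagger$, and replica positivity $\Tr[(\omega\otimes\omega)P_B^{\pm}P_E^{-}]\ge 0$; this is legitimate and non-circular, since the appendix proof of Lemma~\ref{lem:unitaritychoi} is a self-contained computation that nowhere uses $u(\E)\le 1$. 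What your route buys is a sharper equality analysis: the paper's closing claim that purity of all outputs is ``equivalently'' the isometry property is slightly too quick (channels with constant pure output also preserve purity; strictly one must additionally invoke the vanishing of the cross terms $\Tr(\E(\psi_j)\E(\psi_k))$, which the paper has available but does not explicitly use), whereas you derive the isometry structure explicitly by showing that the complementary channel is constant, so that $V\ket{\psi}=(W\ket{\psi})\otimes\ket{e}$ with $W$ an isometry. What the paper's route buys is economy of means: it requires nothing beyond the definition and unitary invariance of the Haar measure, whereas yours presupposes the Stinespring and complementary-channel machinery. Your closing remarks — dilation-independence of $\gamma(\sigma_E)$ and triviality of the $d_A=1$ case — address exactly the right loose ends, so I see no gap.
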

\begin{proof}
	$\Longrightarrow$\\
	We illustrate the proof idea with $d_A = 2$ (and no restriction on $d_B$), in this case the unitarity is given by:
	\begin{equation}
		u(\E) = 2 \int \Tr(\E(\psi - \frac{\iden_{A}}{2})^2) d\, \psi
	\end{equation}
	For each $\psi\in \B(\h_A)$ we are free to write the decomposition of identity in many ways such that:
	\begin{equation}
		\iden_A = |\psi\>\<\psi| + |\psi^{\perp}\>\<\psi^{\perp}|
	\end{equation}
	where $|\psi^{\perp}\>$ is the orthogonal complement of the pure state $|\psi\>$ such that $\{|\psi\>, |\psi^{\perp}\> \}$ form an orthonormal basis for $\B(\h_A)$. Thus we can re-write the unitarity in the form:
	\begin{align}
		u(\E) &= 2\int \Tr( \E((\psi - \psi^{\perp})/2)^2 ) d\, \psi \\
		& = 2 \int \Tr( (\E(\psi) - \E(\psi^{\perp})^2/4) d\, \psi \\
		& = \frac{1}{2} \int \Tr(\E(\psi)^2)  + \Tr(\E(\psi^{\perp})^2) - 2\Tr(\E(\psi)\E(\psi^{\perp})d\,\psi
	\end{align}
		
	However for any $\psi$ and its corresponding $\psi^{\perp}$ we have $\Tr(\E(\psi)^2) \leq 1$ and $\Tr(\E(\psi^{\perp})^2) \leq 1$. Since $\E$ is a CPTP map then $\E(\psi)$ and $\E(\psi^{\perp})$ are positive operators so that $\Tr(\E(\psi)\E(\psi^{\perp})\geq 0$.  Putting everything together it follows that:
	\begin{equation}
		u(\E) \leq \frac{1}{2} \int \Tr(\E(\psi)^2)  + \Tr(\E(\psi^{\perp})^2)d\, \psi \leq 1
	\end{equation}
	with equality if and only if $\E(\psi)$ is a pure state for all pure states $\psi$. Therefore $\E$ is an isometry.
		
	More generally (for arbitrary $d_A$ and $d_B$), we have a lot more freedom in rewriting the identity in terms of and orthonormal basis containing $\psi$. Suppose that for every pure state $\psi$ we extend it to an orthonormal basis $\{\psi, \psi_1, ..., \psi_{d_A-1}\}$. With respect to this we can write:
	\begin{equation}
		\iden_A = |\psi\>\<\psi| +|\psi_1\>\<\psi_1| + ... + |\psi_{d_A-1}\>\<\psi_{d_A-1}|
	\end{equation}
	Therefore the unitarity can be written as:
	\begin{equation}
		u(\E) = \frac{d_A}{d_A-1} \int \Tr\left(\E\left((1-\frac{1}{d_A})\psi - \frac{1}{d_A}\sum_{i=1}^{d_A-1} \psi_{i} \right)^2\right) d\, \psi
	\end{equation}
		the above can be expanded since $\E$ is convex linear so that:
	\begin{equation}
		u(\E) = \frac{d_A}{d_{A}-1} \int \Tr\left(\left(\frac{d_A-1}{d_A} \E(\psi) - \frac{1}{d_A} \sum_{i=1}^{d_A-1} \E(\psi_i) \right)^2 \right)
	\end{equation}
	Or equivalently	get that:
	\begin{align*}
		u(\E) =  \int\left( \frac{d_A-1}{d_A} \Tr(\E(\psi)^2)+\frac{1}{d_A(d_A-1)}\sum_{i=1}^{d_A-1} \Tr(\E(\psi_i)^2)  \right.  \left. -\frac{2}{d_A} \sum_{i=1}^{d_A-1} \Tr(\E(\psi)\E(\psi_i))+ 2 \sum_{i,j=1}^{d_A-1 }\frac{\Tr(\E(\psi_i)\E(\psi_j)}{d_A(d_A-1)} \right) d\, \psi 
	\end{align*}
		
	The above holds equally well for any pure state $\psi$ (which is in fact a dummy variable) so the integration remains invariant under $\psi \rightarrow \psi_i = U_i\psi U_i\hc$ for some unitary $U_i$, with the rest of the basis states remaining invariant. This comes from the fact that the Haar measure is an unitarily invariant measure. In this manner we can write $u(\E)$ in $d_A$ different ways. For instance it also holds that:
	\begin{align*}
		u(\E) =  \int\left( \frac{d_A-1}{d_A} \Tr(\E(\psi_i)^2)+\frac{1}{d_A(d_A-1)}\sum_{j\neq i, j=1}^{d_A-1} \Tr(\E(\psi_j)^2)  \right.  + \frac{1}{d_A(d_A-1)} \Tr(\E(\psi)^2)- \frac{2}{d_A}\sum_{j\neq i, j=1}^{d_A-1} \Tr(\E(\psi_i)\E(\psi_j)\\
		  -\frac{2}{d_A}  \Tr(\E(\psi)\E(\psi_i))+ 2 \sum_{j,k\neq i}^{d_A-1 }\frac{\Tr(\E(\psi_k)\E(\psi_j)}{d_A(d_A-1)} 
		 \left. + 2 \sum_{j\neq i}^{d_A-1}\frac{\Tr(\E(\psi)\E(\psi_j)}{d_A(d_A-1)}\right) d\, \psi 
	\end{align*}
	The above holds for all basis states so we have in total $d_A$ equation. Summing all together we notice that we obtain the following
	\begin{align*}
		d_A u(\E) = \int \left(\Tr(\E(\psi)^2 ) + \sum_{i=1}^{d_A-1}\Tr(\E(\psi_i)^2)\right. \left. -\frac{2}{d_A-1} \sum_{j,k, \psi} \tr(\E(\psi_j)\E(\psi_k)) \right) d\, \psi
	\end{align*}
	In the above each term $\Tr(\E(\psi_j)\E(\psi_k)$ appears $d_A-2$ times with coefficient $\frac{2}{d_A(d_A-1)}$ and twice with coefficient $\frac{2}{d_A}$ the latter arising from the equations for which $i=j$ and $i=k$ and the former from the rest of the equations, where we consider the $j,k$ label to include $\psi$ as well. Putting it all together $-\frac{4}{d_A} + 2 \frac{d_A-2}{d_A(d_A-1)} = \frac{2(d_A)- 4 - 4(d_A-1)}{d_A(d_A-1)} = -\frac{2}{d_A-1}$. The quadratic terms $\Tr(\E(\psi_i)^2)$ will appear once with coefficient $\frac{d_A-1}{d_A}$ and then with coefficient $\frac{1}{d_A(d_A-1)}$ in each of the rest $d_A-1$ equations, which sums up to one. 
		
	Now it is always true that $\Tr(\E(\psi_i)^2)\leq 1$ for all $i$ and also since $\E$ is a CPTP map then $\E(\psi_i)$ is a positive operator so $\Tr(\E(\psi_j)\E(\psi_k))\geq 0$. Therefore it follows that:
	\begin{equation}
		d_A u(\E) \leq \int  \left( \Tr(\E(\psi)^2) + \sum_{i=1}^{d_A-1}\Tr(\E(\psi_i)^2) \right) \leq d_A
	\end{equation}
	with equality holding for $\Tr(\E(\psi)^2)=1$, that is whenever $\E(\psi)$ is a pure state or equivalently when $\E$ is an isometry.
		
	$\Longleftarrow$\\
	Conversely, if $\E(X) = V X V\hc$ for some isometry $V:\h_A\rightarrow \h_B$ then $V\hc V = \iden_A$ and in this case we get 
	\begin{align*}
		u(\E) &= \frac{d_A}{d_A-1}\int\Tr(V\left(\psi-\frac{\iden}{d_A}\right)^2V\hc)d\, \psi \\
		&=  \frac{d_A}{d_A-1}\int\Tr(V\left(\psi-2\frac{\psi}{d_A}+ \frac{\iden}{d_A^2}\right)V\hc)d\, \psi \\   
		&= \frac{d_A}{d_A-1}\Tr(V \int\left(\psi-2\frac{\psi}{d_A}+ \frac{\iden}{d_A^2}\right)d \, \psi\, V\hc)\\   
		&= \frac{d_A}{d_A-1}\Tr(V (\frac{\iden}{d_A} - \frac{\iden}{d_A^2}) V\hc)
	\end{align*} 
	where we have used the fact that $\int \psi d\, \psi = \iden/d_A$.  Collecting terms it follows that if $\E$ is an isometry then $u(\E) = \frac{\Tr(VV\hc)}{d_A} $. However trace preserving condition implies that $\Tr(\E(\iden/d_A)) = \Tr(VV\hc)/d_A = 1$ so that $u(\E) = 1$.\\
\end{proof}	

\begin{lemma}
	Given a channel $\E:\B(\h_A)\rightarrow\B(\h_B)$ then the unitarity can be equivalently expressed by:
	\begin{align*}
		(i)& \ \ \ u(\E) = \frac{d_A}{d_A^2-1}\left(d_A\Tr(J(\E)^2) - \Tr(\E(\iden_A/d_A)^2)\right) \\ \nonumber
		(ii)& \ \ \ u(\E) = \frac{d_A}{d_A^2-1}\left(d_A\Tr(\tilde{\E}(\iden/d_A)^2) - \Tr(\E(\iden/d_A)^2)\right) \nonumber
	\end{align*}
\end{lemma}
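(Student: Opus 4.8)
The plan is to prove both identities by expanding the definition of unitarity, reducing the Haar average over pure states to the standard second-moment (2-design) integral, and then re-expressing the resulting traces in terms of the Jamio\l{}kowski state and the complementary channel. First I would set $\sigma := \E(\iden_A/d_A)$ and use linearity of $\E$ to write $\E(\psi - \iden_A/d_A) = \E(\psi) - \sigma$, so that
\begin{equation}
\Tr\!\left((\E(\psi)-\sigma)^2\right) = \Tr(\E(\psi)^2) - 2\Tr(\E(\psi)\sigma) + \Tr(\sigma^2).
\end{equation}
Integrating over the Haar measure and using $\int \psi\, d\psi = \iden_A/d_A$ (so that $\int \E(\psi)\, d\psi = \sigma$), the last two terms collapse, since $\int \Tr(\E(\psi)\sigma)\, d\psi = \Tr(\sigma^2)$, leaving
\begin{equation}
\int \Tr\!\left((\E(\psi)-\sigma)^2\right)d\psi = \int \Tr(\E(\psi)^2)\, d\psi - \Tr(\sigma^2).
\end{equation}

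Next I would evaluate $\int \Tr(\E(\psi)^2)\,d\psi$ using the swap trick $\Tr(X^2) = \Tr(\F_B (X\otimes X))$, with $\F_B$ the flip on $\h_B\otimes\h_B$, together with the second-moment formula $\int \psi^{\otimes 2} d\psi = (\iden + \F_A)/(d_A(d_A+1))$, giving
\begin{equation}
\int \Tr(\E(\psi)^2)\, d\psi = \frac{1}{d_A(d_A+1)}\Big(\Tr\big(\F_B\,(\E\otimes\E)(\iden_A\otimes\iden_A)\big) + \Tr\big(\F_B\,(\E\otimes\E)(\F_A)\big)\Big).
\end{equation}
The first term equals $d_A^2\Tr(\sigma^2)$ because $(\E\otimes\E)(\iden_A\otimes\iden_A) = d_A^2\,\sigma\otimes\sigma$. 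For the second term I would expand $\F_A = \sum_{ij}\ketbra{ij}{ji}$ to obtain $\sum_{ij}\Tr(\E(\ketbra{i}{j})\E(\ketbra{j}{i}))$, and recognise this as $d_A^2\,\Tr(\J(\E)^2)$ by directly expanding $\J(\E) = \frac{1}{d_A}\sum_{ij}\E(\ketbra{i}{j})\otimes\ketbra{i}{j}$ and computing its purity. Collecting terms and multiplying by $d_A/(d_A-1)$ then yields identity $(i)$,
\begin{equation}
u(\E) = \frac{d_A}{d_A^2-1}\big(d_A\,\Tr(\J(\E)^2) - \Tr(\sigma^2)\big).
\end{equation}

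Finally, for $(ii)$ it suffices to show $\Tr(\J(\E)^2) = \Tr(\tilde{\E}(\iden_A/d_A)^2)$. I would invoke the Stinespring isometry $V:\h_A\rightarrow\h_B\otimes\h_E$ of Eq.~\eqref{eq:stinespring} and observe that $\J(\E)$ is the reduced state on $BA'$ of the pure state $\ket{\Psi} := (V\otimes\iden_{A'})\ket{\Omega}$ on $\h_B\otimes\h_E\otimes\h_{A'}$. Since $\ket{\Psi}$ is pure, the marginals on the complementary partitions $BA'$ and $E$ share the same nonzero spectrum, hence the same purity. Tracing out $A'$ first gives $V(\iden_A/d_A)V^\dagger$, and tracing out $B$ then identifies the $E$-marginal with $\tilde{\E}(\iden_A/d_A)$; therefore $\Tr(\J(\E)^2) = \Tr(\tilde{\E}(\iden_A/d_A)^2)$, and substituting into $(i)$ gives $(ii)$. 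I expect the main obstacle to be the index bookkeeping establishing $\Tr(\F_B(\E\otimes\E)(\F_A)) = d_A^2\Tr(\J(\E)^2)$ cleanly, and tracking the correct ordering of partial traces in the Stinespring argument for $(ii)$; both steps are elementary but error-prone, whereas everything else follows routinely once the 2-design integral and the Schmidt symmetry of the pure dilation are in hand.
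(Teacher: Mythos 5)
Your proposal is correct, and part (i) follows essentially the same route as the paper: expand the square, use \mbox{$\int\psi\, d\psi=\iden_A/d_A$} to collapse the cross terms, evaluate \mbox{$\int\Tr(\E(\psi)^2)d\psi$} via the swap trick and the second-moment formula, and identify the term \mbox{$\Tr\big(\F_B(\E\otimes\E)(\F_A)\big)=d_A^2\Tr(\J(\E)^2)$} by a basis expansion -- this is exactly the paper's calculation (your bookkeeping of the cross terms is, if anything, slightly tidier). Part (ii) is where you genuinely diverge. The paper proves \mbox{$\Tr(\J(\E)^2)=\Tr(\tilde{\E}(\iden_A/d_A)^2)$} by a direct operator computation: it writes \mbox{$d_A^2\Tr(\J(\E)^2)=\sum_{n,m}\Tr\big(\E\hc(\ketbra{e_m}{e_n})\,\E\hc(\ketbra{e_n}{e_m})\big)$}, substitutes the Stinespring form \mbox{$\E\hc(Y)=V\hc(Y\otimes\iden_E)V$}, and reassembles the product of isometries into \mbox{$d_A^2\Tr(\tilde{\E}(\iden_A/d_A)^2)$} through several lines of index manipulation. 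You instead note that $\J(\E)$ and $\tilde{\E}(\iden_A/d_A)$ are the two marginals of the single pure state \mbox{$(V\otimes\iden_{A'})\ket{\Omega}$} across the complementary cut $BA'|E$, so by Schmidt symmetry they share the same nonzero spectrum and hence the same purity. Your route is cleaner and more conceptual: it removes the computation entirely, makes manifest that the result is independent of the choice of dilation, and actually yields the stronger fact that the full spectra (not merely the purities) of $\J(\E)$ and $\tilde{\E}(\iden_A/d_A)$ coincide. The paper's computational approach only avoids invoking the Schmidt-decomposition fact, which is standard; both arguments are valid.
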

\begin{proof}
	(i)	Directly from the definition of unitarity we get:
	\begin{align}
		u(\E)&= \frac{d_A}{d_A-1}\left(\int\Tr(\E(\psi)^2)d\,\psi-2\int \Tr(\E(\psi)\E(\frac{\iden_A}{d_A}))d\,\psi\right. \\ \nonumber &\left.+\int\Tr(\E(\iden/d_A)^2)d\,\psi\right)
			\label{eq:unitarityrewrite}
	\end{align}
	We note that when the Haar measure over pure states is properly normalised then the following hold $\int \psi d\,\psi =\frac{\iden_A}{d_A}$ and $\int \psi^{\otimes 2} d\, \psi = \frac{1}{d_A(d_A+1)}\left(\iden_A\otimes\iden_A + SWAP_{A} \right)$ where $\psi\in\B(\h_A)$ and $SWAP_{A}\in \B(\h_A\otimes\h_A)$ is the SWAP operators defined via $SWAP_{A} := \sum_{i,j} |i_A\>|j_A\>\<j_A|\<i_A|$. We also have the following relation $\Tr(\rho^2) = \Tr(SWAP_A\rho\otimes\rho)$ for all $\rho\in \B(\h_A)$. One can similarly define the SWAP operator for system $B$. Therefore it follows that the average output purity is the purity of the Jamiolkowski operator:
	\begin{align*}
		&\int \Tr(\E(\psi)^2)d\,\psi = \int \Tr(SWAP_{B}\,\E(\psi)\otimes\E(\psi))d\,\psi\\ \nonumber
		&= \int\Tr(\E\hc\otimes\E\hc(SWAP_{B}\psi\otimes\psi))d\,\psi \\ \nonumber
		&= \Tr(\E\hc\otimes\E\hc(SWAP_B)\int\psi^{\otimes 2}d\,\psi)\\ \nonumber
		&= \frac{1}{d_A(d_A+1)}\Tr(\E\hc\otimes\E\hc(SWAP_B)(\iden_A\otimes\iden_A + SWAP_A)\\ \nonumber
		&=\frac{1}{d_A(d_A+1)}\Tr(SWAP_B\E\otimes\E((\iden_A\otimes\iden_A + SWAP_A))
	\end{align*}
	where  $SWAP_{B}\in \B(\h_B\otimes\h_B)$ is the swap operator on system $B$.
	One can also show that $\Tr(J[\E]^2) = \Tr(SWAP_B\,\E\otimes\E( SWAP_A)$ by expanding in terms of basis for $A$ and $B$. To check directly denote by $|e_{m}\>_{m=1}^{d_B}$ an orthonormal basis for system $B$. We get that
	\begin{align*}
		&\Tr(SWAP_B \E\otimes\E (SWAP_A)) =\\ &=\sum_{i,j}\Tr(SWAP_{B}\E(|i\>\<j|)\otimes\E(|j\>\<i|))\\ &=\sum_{i,j,m,n}\<e_n|\E(|i\>\<j|)|e_m\>\<e_m|\E(|j\>\<i|)|e_n\> \\ &=\sum_{i,j}\Tr(\E(|i\>\<j|)\E(|j\>\<i|) = d_A^2\Tr(\J[\E]^2)
	\end{align*}
	Putting everything together it follows that
	\begin{equation}
		\int \Tr(\E(\psi)^2)d\,\psi =\frac{1}{d_A(d_A+1)}\left(\Tr(\E(\iden_A)^2) +d_A^2\Tr(\J[\E]^2) \right)
	\end{equation}
	Similarly we have by linearity that:
	\begin{align*}
		\int \Tr(\E(\psi)\E(\iden/d_A))d\,\psi &= \Tr(\E (\int\psi d\,\psi)\E(\iden/d_A))\\
		&= \Tr(\E(\iden/d_A)^2)
	\end{align*}
	Therefore we get that the unitarity is given by
	\begin{equation}
		u(\E) = \frac{d_A^2\left(\Tr(\E(\iden/d_A)^2) + \Tr(\J[\E]^2)\right)}{(d_A^2-1)} - \frac{d_A\Tr(\E(\iden/d_A)^2)}{ (d_A-1)}
	\end{equation}
	and re-arranging we obtain
	\begin{equation}
		u(\E) = \frac{1}{d_A^2-1} \left(d_A^2\Tr(\J[\E]^2) - d_A\Tr(\E(\iden/d_A)^2) \right)
	\end{equation}
	(ii) To show the second part we just need  to check that $\Tr(\J[\E]^2) = \Tr(\tilde{\E}(\iden/d_A)^2)$. First suppose that $V:\h_A\otimes \h_B\otimes\h_E$ is a Stinespring dilation for the channel $\E$. Then the adjoint channel is $\E\hc(Y_B) =V\hc \iden_E\otimes Y_B V$. Moreover suppose that $|e_n\>_{n=1}^{d_B}$ is an orthonormal basis for system $B$ and that $SWAP_B = \sum_{n,m} |e_n\>|e_m\>\<e_m|\<e_n|$ The result then follows from the following argument:
	\begin{align*}
		d_A^2\Tr(\J[\E]^2) &= \Tr( \E\hc\otimes \E\hc (SWAP_B) SWAP_A)\\
		&= \sum_{n,m} \Tr(\E\hc(|e_m\>\<e_n|)\otimes\E\hc(|e_n\>\<e_m| SWAP_A)\\
		&= \sum_{n,m} \Tr(\E\hc(|e_m\>\<e_n|)\E\hc(|e_n\>\<e_m|)\\
		& = \sum_{n,m} \Tr(V\hc \iden_E\otimes|e_m\>\<e_n|VV\hc\iden_E\otimes|e_n\>\<e_m|V)\\
		& =  \Tr(V\hc \Tr_B(VV\hc)\otimes\iden_B V)\\
		& =  d_A\Tr(V\hc \tr_B(V\iden_A/d_A V\hc )\otimes \iden_B V)\\
		& =  d_A\Tr( V\hc \tilde{\E}(\iden_A/d_A)\otimes \iden_B V)\\
		& =  d_A \Tr(\tilde{\E}\hc (\tilde{\E}(\iden_A/d_A))) \\
		& =  d_A^2 \Tr( \tilde{\E} (\iden_A/d_A)^2).
	\end{align*}
\end{proof}
	

\section{Irreducible SU(2)-covariant channels}
\label{app:su2}

\subsection{Liouville representation for extremal SU(2)-irreducible covariant channels}
\label{LiouvilleExtremal}

In Sec.~\ref{sec:su2_structure} we have seen that the set of SU(2)-irreducibly-covariant channels between spin-$j_A$ and spin-$j_B$ systems is fully characterised by its extremal points $\E^{L}:\B(\h_{A})\rightarrow\B(\h_{B})$ with $L$ ranging from $|j_A-j_B|$ to $j_A+j_B$ in increments of one. Since the input and output spaces carry irreducible representations $j_A$ and $j_B$ of SU(2), this means that the decomposition of the operator spaces into irreducible components is multiplicity-free, and therefore the results on the structure of the corresponding Liouville operators holds. For each extremal channel $\E^{L}$ there is a unique vector $\mathbf{f}(\E^{L})$ of coefficients that fully determines it. Moreover, for SU(2) symmetries we can always construct basis of irreducible tensor operators that are Hermitian, which implies that these coefficients are real for any covariant quantum channel. Therefore, each of the vectors $\mathbf{f}(\E^{L})$ represents one of the extremal points that form a simplex in $\mathbb{R}^{d}$, where $d=2\min(j_A, j_B)$.
	
Since we have a full characterisation of the channels $\E^L$, we can give closed form formulas for the vectors $\mathbf{f}(\E^{L})$ in terms of Clebsch-Gordan coefficients. In doing so, we will make use of the Wigner-Eckart theorem. As before, let $\{T^{\lambda}_{k}\}_{k,\lambda}$ and $\{S^{\mu}_{k}\}_{k,\lambda}$ be ITO bases for $\B(\h_{A})$ and $\B(\h_{B})$, respectively. We have that $\E^{L}(T^{\lambda}_k)=f_\mu(\E^{L})S^{\mu}_{k}\delta_{\mu,\lambda}$ for any $L,\lambda,\mu$ and $k$. The vector $\mathbf{f}(\E^{L})$ has entries $f_{\lambda}(\E^{L})$ with $\lambda$ ranging from 1 to $\min(2j_A, 2j_B)$; for $\lambda=0$ trace-preserving condition implies that $f_{0}(\E^{L})=\frac{1}{2j_B+1}$ is constant for all covariant channels, so we will not include it further into the vector definition of $\mathbf{f}(\E^{L})$.
	
Concerning the angular momentum states that form the basis for $\h_{A}$ and $\h_{B}$ as in Sec.~\ref{sec:su2_structure}, for any $\lambda$-irrep there exists labels $m',n'$ and $k$ such that $\bra{j_B,n'}S^{\lambda}_{k}\ket{j_B,m'}\neq 0$. Therefore we can conveniently re-write each coefficient as: 
\begin{equation}
	f_{\lambda}(\E^{L})=\frac{\bra{j_B,n'}\E^{L}(T^{\lambda}_{k})\ket{j_B,m'}}{\bra{j_B,n'}S^{\lambda}_{k}\ket{j_B,m'}},
\end{equation} 
where we re-iterate that at the core of our analysis is that the quantity above is independent of $m',n'$ and $k$, and this is solely as a consequence of covariance of $\E^L$. The numerator can be written in an equivalent form by a basis expansion 
\begin{equation}
	\bra{j_B,n'}\E^{L}(T^{\lambda}_{k})\ket{j_B,m'}=\sum_{m,n}\bra{j_A,n}T^{\lambda}_k\ket{j_A,m}\bra{j_B,n'}\E^{L}(\ket{j_A,n}\bra{j_A,m})\ket{j_B,m'}.
\end{equation}
Therefore, by using the specific action of $\E^{L}$ on angular momentum states given in Eq.~\eqref{eq:su2_extremal} we obtain that:
\begin{align}
	f_{\lambda}(\E^{L})
	&=\sum_{m,n=-j_A}^{j_A}\frac{\bra{j_A,n}T^{\lambda}_k\ket{j_A,m}\bra{j_B,n'}\E^{L}(\ket{j_A,n}\bra{j_A,m})\ket{j_B,m'}}{\bra{j_B,n'}S^{\lambda}_{k}\ket{j_B,m'}} \nonumber \\
	& = \sum_{m,n=-j_A}^{j_A}\sum_{s=-L}^{L} \frac{\bra{j_A,n}T^{\lambda}_k\ket{j_A,m}}{\bra{j_B,n'}S^{\lambda}_{k}\ket{j_B,m'}} \clebsch{j_B}{m-s}{L}{s}{j_A}{m}\clebsch{j_B}{n-s}{L}{s}{j_A}{n}\delta_{n',n-s}\delta_{m',m-s} \nonumber \\
	& = \sum_{s=-L}^{L}\frac{\bra{j_A,n'+s}T^{\lambda}_k\ket{j_A,m'+s}}{\bra{j_B,n'}S^{\lambda}_{k}\ket{j_B,m'}}\clebsch{j_B}{m'}{L}{s}{j_A}{m'+s}\clebsch{j_B}{n'}{L}{s}{j_A}{n'+s}.
\end{align}
To simplify the above expression further we can employ the Wigner-Eckart theorem, which states that the matrix elements of an irreducible tensor operators depend on the vector component labels only trough the Clebsch-Gordan coefficients. In particular:
\begin{equation}
	\bra{j_B,n'}S^{\lambda}_{k}\ket{j_B,m'}=\clebsch{j_B}{m'}{\lambda}{k}{j_B}{n'}\bra{j_B}|S^{\lambda}|\ket{j_B},
\end{equation}
where $\bra{j_B}|S^{\lambda}|\ket{j_B}$ is the reduced matrix element which is independent of $n',m'$ or $k$. We can also write down Wigner-Eckart for the $T^{\lambda}_{k}$ irreducible operator. This leads to the following form for the vector of coefficients for the extremal channel labelled by $L$:
\begin{equation}
	f_{\lambda}(\E^{L})=\frac{\bra{j_A}|T^{\lambda}|\ket{j_A}}{\bra{j_B}|S^{\lambda}|\ket{j_B}}\sum_{s=-L}^{L}\frac{\clebsch{j_A}{m'+s}{\lambda}{k}{j_A}{n'+s}}{\clebsch{j_B}{m'}{\lambda}{k}{j_B}{n'}}\clebsch{j_B}{m'}{L}{s}{j_A}{m'+s}\clebsch{j_B}{n'}{L}{s}{j_A}{n'+s}.
\end{equation}
In particular, since the above factor has no dependence on the labels $m'$, $n'$ and $k$, without loss of generality we can take $k=0$, $m'=n'=j_B$. We thus end up with the following expression:
\begin{equation}
	f_{\lambda}(\E^{L})=\frac{\bra{j_A}|T^{\lambda}|\ket{j_A}}{\bra{j_B}|S^{\lambda}|\ket{j_B}}\sum_{s=-L}^{L}\frac{\clebsch{j_A}{j_B+s}{\lambda}{0}{j_A}{j_B+s}}{\clebsch{j_B}{j_B}{\lambda}{0}{j_B}{j_B}}\clebsch{j_B}{j_B}{L}{s}{j_A}{j_B+s}^2.
	\label{eq:coefficients_spin}
\end{equation}
	
In the particular case when the input and output spaces have the same dimension and both carry the same irrep of SU(2), $j_A=j_B=j$, we obtain the following:
\begin{equation}
	f_{\lambda}(\E^{L})=\sum_{s=-L}^{L}\frac{\clebsch{j}{j+s}{\lambda}{0}{j}{j+s}}{\clebsch{j}{j}{\lambda}{0}{j}{j}}\clebsch{j}{j}{L}{s}{j}{j+s}^{2}.
	\label{eq:coefficients_spin_same}
\end{equation}
	
	
\subsection{Maximal inversion and amplification of spin polarisation vector for SU(2)-covariant channels}
	
Here, we will characterise the range of values that the coefficient $f_{1}(\E^{L})$ takes while varying over all extremal channels $L$. This factor corresponds to how much the spin polarisation can scale (up or down) under a covariant operation. As we will see, due to the particular choice of irrep $\lambda=1$, we can significantly simplify the expressions with Clebsch-Gordan coefficients appearing in Eqs.~\eqref{eq:coefficients_spin}-\eqref{eq:coefficients_spin_same}. We will first analyse the simpler case of same input and output dimension, and then proceed to the general case. In the former case, we find that while the spin polarisation cannot increase, the spin can be inverted up to a factor that is always greater than $-1$. In other words, we show that $-\frac{j}{j+1}\leq f_{1}(\E^L)\leq 1$, where the upper bound is attained for $L=0$, i.e. the identity channel; and the lower bound is attained for $L=2j$, i.e. the extremal channel with the maximal number of Kraus operators. In the latter case, when the output dimension is larger than the input one, we will show that spin polarisation vector can actually be amplified.

	
\subsubsection{Input and output systems of the same dimension}
	
From the explicit formula for $f_{\lambda}(\E^L)$, Eq.~\eqref{eq:coefficients_spin_same}, we have that:
\begin{equation}
	f_{1}(\E^{L})=\sum_{s=-L}^{L}\frac{\clebsch{j}{j+s}{1}{0}{j}{j+s}}{\clebsch{j}{j}{1}{0}{j}{j}}\left(\clebsch{j}{j}{L}{s}{j}{j+s}\right)^{2}.
\end{equation}
Moreover,
\begin{equation}
	\frac{\clebsch{j}{j+s}{1}{0}{j}{j+s}}{\clebsch{j}{j}{1}{0}{j}{j}}=\frac{j+s}{j}
\end{equation}	
and
\begin{equation}
	\clebsch{j}{j}{L}{s}{j}{j+s}^{2}=\frac{(2j+1)!(2j+s)!(L-s)!}{(2j-L)!(L+2j+1)!(L+s)!(-s)!}
\end{equation}
is non-zero for $s\leq 0$. As a result, we have
\begin{equation}
	f_1(\E^L)=\frac{(2j+1)!}{(2j-L)!(L+2j+1)!}\sum_{s=0}^{L}\frac{(j-s)(2j-s)!(L+s)!}{j(L-s)!s!}.
\end{equation}
It turns out that the above expression can be easily evaluated in terms of products of binomial coefficients, so that
\begin{equation}
	f_1 (\E^L) = \binom{L+2j+1}{L}^{-1}\left(\sum_{s=0}^{l}\binom{2j-s}{L-s}\binom{L+s}{s} - \frac{(L+1)}{j}\binom{2j-s}{L-s}\binom{L+s}{s-1}\right).
\end{equation}
We can compute each of the two sums above separately by using combinatorial identities,
\begin{equation}
		\sum_{s=0}^{L}\binom{2j-s}{L-s}\binom{L+s}{s} = \binom{L+2j+1}{L},\qquad \sum_{s=0}^{L}\binom{2j-s}{L-s}\binom{L+s}{s-1} = \binom{L+2j+1}{L-1},
\end{equation}
to obtain a closed form formula for $f_1 (\E^L)$:
\begin{align}
	f_1 (\E^L) &= 1 - \frac{L+1}{j} \binom{L+2j+1}{L-1}\binom{L+2j+1}{L}^{-1}=1-\frac{L(L+1)}{2j(j+1)}
\end{align}
	
Therefore, under any SU(2)-covariant channel, the spin polarisation can either remain the same (whenever $L=0$, which corresponds to the identity channel), decrease by $0 \leq f_{1}(\E^{L})\leq 1$, or get inverted by $f_1(\E^{L})\leq 0$. However, in this scenario the spin polarisation will never increase. The maximal deviation from a conservation law is achieved by the extremal channel $L=2j$ which also achieves the maximal spin inversion of polarisation:
\begin{equation}
	f_{1}(\E^{2j})=-\frac{j}{j+1}.
\end{equation}
	
	
\subsubsection{Input and output systems of different dimensions}
	
We now proceed to the case $j_A\neq j_B$. From Eq.~\eqref{eq:coefficients_spin} for $\lambda= 1$ we have the following:
\begin{equation}
	f_{1}(\E^{L})\frac{\langle j_B \|S^{1}\|j_B\rangle}{\langle j_A \|T^{1}\|j_A\rangle} = \sum_{s=-L}^L \frac{\clebsch{j_A}{j_B+s}{1}{0}{j_A}{j_B+s}}{\clebsch{j_B}{j_B}{1}{0}{j_B}{j_B}}\clebsch{j_B}{j_B}{L}{s}{j_A}{j_B+s}^2.
\end{equation}
For operators on the carrier space $\h_{A}$ for the $j_A$-irrep (and similarly for $j_B$), the decomposition of $\B(\h_{A})$ contains each irreducible representation with multiplicity at most one. For any $j_A> 0$ the 1-irrep will appear once, and the corresponding subspace will be spanned by the ITOs $\{T^{1}_{k}\}_{k}$. The reduced matrix element is independent of the vector label component $k$. Therefore, due to the uniqueness of the 1-irrep, the quantity $\bra{j_A}|T^{1}|\ket{j_A}$ will be uniquely associated with the irreducible subspace of $\B(\h_{A})$ that transforms under the 1-irrep. This implies that $\bra{j_A}|T^{1}|\ket{j_A}$ is independent on the choice of orthonormal ITO basis. Analogous relation holds for $\bra{j_B}|S^{1}|\ket{j_B}$. In face, we can fairly easily determine what the constant factor $\frac{\langle j_B \|S^{1}\|j_B\rangle}{\langle j_A \|T^{1}\|j_A\rangle}$ is. For this we would again use Wigner-Eckart theorem together with the standard form for ITOs $S^1_{0}$ and $T^{1}_0$ to evaluate a particular matrix element. We then get
\begin{equation}
	\bra{j_B}|S^{1}|\ket{j_B}=\frac{\bra{j_B,m}S^{1}_{0}\ket{j_B,m}}{\clebsch{j_B}{m}{1}{0}{j_B}{m}}=\frac{\sqrt{3}}{\sqrt{2j_B+1}},
\end{equation}
where the above makes no assumption on the ITOs $S^{1}$ other than it forming an orthonormal basis for the 1-irrep component. Therefore, the ratio of the reduced matrix elements $S^{1}$ and $T^{1}$ is $\sqrt{\frac{2j_A+1}{2j_B+1}}$.
	
Now, in order to arrive at a closed form formula for $f_{1}(\E^{L})$, we need to combine the above with with binomial expansions for the Clebsch-Gordan coefficients. First notice that one of the terms in the expression for $f_1(\E^{L})$ is given by
\begin{equation}
	\clebsch{j_B}{j_B}{L}{s}{j_A}{j_B+s} ^{2}=\frac{2j_A+1}{2j_B+1}\binom{1+j_A+j_B+L}{j_A-j_B+L}^{-1}\binom{j_A+j_B+s}{L+s}\binom{L-s}{j_A-j_B-s}.
\end{equation}
Remark that for the coefficients to be non-zero we need that $-L\geq s\geq L$ and $j_A-j_B-s\geq 0$, where we recall that $L$ takes one of the positive values in the set $\{|j_A-j_B|, |j_A-j_B|+1,..., j_A+j_B\}$. Therefore, we get
\begin{equation}
	f_{1}(\E^{L})=\sqrt{\frac{j_B(j_B+1)(2j_A+1)}{j_A(j_A+1)(2j_B+1)}}\binom{1+j_A+j_B+L}{j_A-j_B+L}^{-1}\sum_{s=-L}^{j_A-j_B}\frac{j_B+s}{j_B}\binom{j_A+j_B+s}{L+s}\binom{L-s}{j_A-j_B-s},
\end{equation}
where in the summation only the terms for which the two binomials exist contribute, i.e. $j_A-j_B-s\geq 0$ (note that these correspond exactly to non-zero values of the relevant coefficients in the previous summation). Changing the dummy summation variable from $s$ to $w=s+L$ we obtain the alternative formulation:
	\begin{equation*}
	f_{1}(\E^{L})=\sqrt{\frac{j_B(j_B+1)(2j_A+1)}{j_A(j_A+1)(2j_B+1)}}\binom{1+j_A+j_B+L}{j_A-j_B+L}^{-1} \ \sum_{w=0}^{j_A-j_B+L}\frac{j_B-L+w}{j_B}\binom{j_A+j_B-L+w}{w}\binom{2l-w}{j_A-j_B+L-w}.
\end{equation*}
To compute the above, we make use of the following combinatorial property:
\begin{equation}
		\sum_{w=0}^{a}\binom{a+b+w}{w}\binom{c-w}{a-w}=\frac{(1+a+b+c)!}{a!(1+b+c)!}=\binom{1+a+b+c}{a},
\end{equation}
for $c\geq a$ and similarly 
\begin{equation}
	\sum_{w=0}^{a}w\binom{a+b+w}{w}\binom{c-w}{a-w}=\frac{(1+a+b+c)!(1+a+b)}{(a-1)!(2+b+c)!}=\binom{1+a+b+c}{a}\frac{a(1+a+b)}{2+b+c}
\end{equation}
for $a\neq 0$ and $c\geq a$ (if $a=0$ the latter sum clearly becomes zero). Now, employing this we obtain
\begin{equation}
	f_{1}(\E^{L})=\sqrt{\frac{j_B(j_B+1)(2j_A+1)}{j_A(j_A+1)(2j_B+1)}}\left(\frac{j_B-L}{j_B}+\frac{(j_A-j_B+L)(1+j_A+j_B-L)}{2j_B(1+j_B)}   \right),
\end{equation}
and after some simplification we arrive at
\begin{equation}
	f_{1}(\E^{L})=\sqrt{\frac{j_B(j_B+1)(2j_A+1)}{j_A(j_A+1)(2j_B+1)}}\left(\frac{j_A(j_A+1)+ j_B(j_B+1)-L(L+1) }{2j_B(j_B+1)} \right).
\end{equation}

For different extremal channels with $L$ between $|j_B-j_A|$ and $j_A+j_B$ the maximal value is attained for the closest valid value of $L$ to $\frac{j_A-j_B+1}{2}$.  This maximal value is attained for $L=|j_A-j_B|$. We then have two cases. If $j_A\geq j_B$ then
\begin{equation}
	f_{1}(\E^{|j_A-j_B|})=\sqrt{\frac{j_B(j_A+1)(2j_A+1)}{j_A(j_B+1)(2j_B+1)}},
\end{equation}
and if $j_A\leq j_B$ then
\begin{equation}
	f_{1}(\E^{|j_A-j_B|})=\sqrt{\frac{j_A(j_B+1)(2j_A+1)}{j_B(j_A+1)(2j_B+1)}}.
\end{equation}
The minimal value in turn will always be attained by $L=j_A+j_B$, which gives
\begin{equation}
	f_{1}(\E^{j_A+j_B})=-\frac{j_A}{j_B+1}\sqrt{\frac{j_B(j_B+1)(2j_A+1)}{j_A(j_A+1)(2j_B+1)}}.	
\end{equation}
Note that for $j_B>j_A$ there may exist an extremal channel for which the scaling coefficient will be less than $-1$. In other words, the spin polarisation can be effectively inverted in this case.

\section{Convex structure of symmetric channels}
\label{app:convexstruct}

\begin{theorem}
	Let $G$ be a compact group with representations $U_A$ and $U_B$ acting on Hilbert spaces $\h_A$ and $\h_B$. Suppose that $U_B\otimes U_A^{*}$ is a multiplicity-free tensor product representation with non-equivalent irreps labelled by elements of a set $\Lambda$, and that $U_A$ is an irrep. Then, the convex set of $G$-covariant quantum channels $\E:\B(\h_A)\rightarrow\B(\h_B)$ has $|\Lambda|$ distinct isolated extremal points given by channels $\E^{\lambda}$ for $\lambda\in \Lambda$. Each $\E^{\lambda}$ can be characterised by the following: 	
	\begin{enumerate}
		\item A unique Jamio{\l}kowski state 
		\begin{equation}
		\J(\E^{\lambda}) = \frac{\iden^{\lambda}}{d_\lambda}.
		\end{equation} 
		\item Kraus decomposition $\{E^{\lambda}_{k}\}_{k=1}^{d_\lambda}$ such that:
		\begin{equation}
		\E^{\lambda}(\rho) = \sum_{k} E^{\lambda}_{k} \rho (E^{\lambda}_k)\hc,
		\label{eq:A2}
		\end{equation}
		with $E^{\lambda}$ forming a $\lambda$-irreducible tensor operator transforming as \mbox{$U_{B}(g) E^{\lambda}_k U_{A}(g)\hc = \sum_{k'} v^{\lambda}_{k'k}(g) E^{\lambda}_{k'}$}, where $v^{\lambda}_{k'k}$ are matrix coefficients of the $\lambda$-irrep.\\
		\item A symmetric isometry $W^{\lambda}:\h_A \rightarrow\h_B\otimes\h^{\lambda}$ such that
		\begin{equation}
		\E^{\lambda}(\rho) =\Tr_{\h^{\lambda}}( W^{\lambda}\rho (W^{\lambda})\hc).
		\end{equation}
		Also, the minimal Stinespring dilation dimension for $\E^\lambda$ is given by $d_\lambda$.
	\end{enumerate}
\end{theorem}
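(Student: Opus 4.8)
The plan is to take characterisation~1, together with the identification of the $\E^\lambda$ as the extremal points --- both already established in Sec.~\ref{sec:general_structure} via Eq.~\eqref{eq:directdecompChoi}, where the orthogonality of the supports $\h^\lambda\subset\h_B\otimes\h_A$ forces the $\J(\E^\lambda)$ to be linearly independent and hence the $\E^\lambda$ to be isolated extreme points --- as the starting data, and then to derive characterisations~2 and~3 from the single relation $\J(\E^\lambda)=\iden^\lambda/d_\lambda$.

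For characterisation~2, I would first spectrally decompose the (unnormalised) projector $\iden^\lambda=\sum_{k=1}^{d_\lambda}\ketbra{\lambda,k}{\lambda,k}$ in the orthonormal basis $\{\ket{\lambda,k}\}$ of $\h^\lambda$ that transforms as the $\lambda$-irrep under $U_B(g)\otimes U_A^*(g)$. Invoking the canonical Choi--Kraus correspondence, each eigenvector reshapes to a Kraus operator $E^\lambda_k=\sqrt{d_A/d_\lambda}\,\mathrm{reshape}(\ket{\lambda,k})$, and a short computation confirms $\J(\E^\lambda)=\sum_k(E^\lambda_k\otimes\iden_A)\ketbra{\Omega}{\Omega}((E^\lambda_k)\hc\otimes\iden_A)$, so that Eq.~\eqref{eq:A2} holds; the trace-preserving condition $\sum_k(E^\lambda_k)\hc E^\lambda_k=\iden_A$ is just $\Tr_B\J(\E^\lambda)=\iden_A/d_A$, already known. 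The ITO property then follows from the vectorisation identity $(U_B(g)\otimes U_A^*(g))\,\mathrm{vec}(M)=\mathrm{vec}(U_B(g)\,M\,U_A(g)\hc)$: since $\ket{\lambda,k}$ carries the $\lambda$-irrep, this yields $U_B(g)E^\lambda_k U_A(g)\hc=\sum_{k'}v^\lambda_{k'k}(g)E^\lambda_{k'}$, which is precisely the ITO transformation law. Linear independence of the $E^\lambda_k$, needed to match coefficients, is inherited from orthonormality of the $\ket{\lambda,k}$, reshaping being an isometry for the Hilbert--Schmidt product.

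For characterisation~3, I would assemble the canonical Stinespring isometry $W^\lambda=\sum_k E^\lambda_k\otimes\ket{k}$ with environment $\h^\lambda\cong\mathbb{C}^{d_\lambda}$, check $(W^\lambda)\hc W^\lambda=\iden_A$ (again equivalent to the TP condition) and $\E^\lambda(\rho)=\Tr_{\h^\lambda}(W^\lambda\rho(W^\lambda)\hc)$. Covariance $W^\lambda U_A(g)=(U_B(g)\otimes R(g))W^\lambda$ is then verified by substituting the ITO relation and matching coefficients of the independent $E^\lambda_{k'}$; this pins down the environment representation as $R(g)=v^\lambda(g)^*$, i.e.\ the conjugate $\lambda$-irrep on $\h^\lambda$. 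Minimality of the dilation dimension follows from the standard fact that the minimal Kraus rank equals $\mathrm{rank}\,\J(\E^\lambda)$; since $\J(\E^\lambda)$ is $1/d_\lambda$ times a rank-$d_\lambda$ projector, this rank is exactly $d_\lambda$.

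The routine parts are the normalisation constants and the direct verifications of the Kraus, isometry and partial-trace identities. The main obstacle --- and essentially the only place where real care is needed --- is the covariance bookkeeping: correctly threading the $\lambda$-irrep transformation on $\h^\lambda\subset\h_B\otimes\h_A$ through the reshaping (where the vectorisation identity converts the $U_A^*$ factor into a right-conjugation by $U_A\hc$) to obtain the ITO law, and then re-threading it through $W^\lambda$ to see that the environment necessarily carries the conjugate representation. Keeping the transposes and complex conjugates consistent across these two reshapings is where the argument could most easily go wrong.
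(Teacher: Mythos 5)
Your proposal is correct and follows essentially the same route as the paper's Appendix~C proof: both turn the orthonormal basis of the $\lambda$-irrep block of $\J(\E^{\lambda})$ into Kraus operators via (inverse) vectorisation — your spectral decomposition of the projector $\iden^{\lambda}$ is the same data as the paper's square-root factorisation — and both then stack these Kraus operators into the canonical isometry $W^{\lambda}=\sum_k E^{\lambda}_k\otimes\ket{k}$, with minimality read off from $\mathrm{rank}\,\J(\E^{\lambda})=d_\lambda$. The one place you are actually more precise than the paper is the covariance bookkeeping for $W^{\lambda}$: your identification of the environment representation as the conjugate irrep $v^{\lambda}(g)^{*}$ is correct, whereas the paper's proof loosely asserts the environment carries $V^{\lambda}$ itself — a distinction that is immaterial for SU(2) (where every irrep is self-conjugate) but matters for general compact $G$.
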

\begin{proof}
	$1. \implies 2$
	Any square root factorisation of the Choi-Jamio{\l}kowski state gives a set of Kraus operators. In this case, it is trivial to compute the square root operator $R$ of $\J(\E^{\lambda}) = R\hc R$ and this is given simply by $R = \frac{\iden^{\lambda}}{\sqrt{d_{\lambda}}}$. Note that $R$ is not unique and any $\tilde{R} = WR$ for arbitrary unitary $W$ will also result in a valid square-root factorisation. This freedom is then reflected in the non-uniqueness of Kraus operators. Since $R$ is supported only on the $\lambda$-irrep subspace of dimension $d_{\lambda}$ this implies there will only be $d_{\lambda}$ non-zero row vectors in $R$. The non-zero row vectors of $R$ will be given by $\{\<r_i|\}_{i=1}^{d_\lambda}$, such that $\J(\E^{\lambda}) = \sum_{i=1}^{d_\lambda} |r_i\>\<r_i|$. In particular $|r_i\>\in\h_B\otimes\h_A$ and they will form an orthogonal basis for the $\lambda$-irrep subspace of $\h_B\otimes \h_A$ under the tensor product representation $U_B\otimes U_A^{*}$. This is enough to ensure that they transform irreducibly such that $U_B(g)\otimes U_A^{*}(g) |r_k\> = \sum_{k'} v^{\lambda}_{k'k} (g)|r_{k'}\>$, where $v^{\lambda}_{k'k}$ are the matrix coefficients for the $\lambda$-irrep. Under the inverse of the vectorisation operation, there exist a set of operators $E^{\lambda}_i$ represented by $d_{B}\times d_A$ matrices such that $\vc{E^{\lambda}_i} = |r_i\>$, and therefore $\{E^{\lambda}_i\}_i=1^{d_{\lambda}}$ will give a particular Kraus decomposition of $\E^{\lambda}$. Moreover since $|r_{k}\>$ transform as a $\lambda$-irrep then $\{E^{\lambda}_k\}_{k=1}^{d_{\lambda}}$ will form a $\lambda$-irrep ITO. Moreover, since the rank of $\J(\E^{\lambda})$ is $d_\lambda$ this gives a minimal Kraus representation of $\E^{\lambda}$.
	
	$2 \implies 1$ Conversely, given a Kraus decomposition as in Eqn.~\ref{eq:A2} then its corresponding Choi operator will take the form $\J(\E^{\lambda}) = \sum_{k=1}^{d_\lambda} \vc{E^{\lambda}_k}\vcb{E^{\lambda}_k}$. Moreover this is non-trivial only on the on the $\lambda$-irrep subspace of $\h_B\otimes \h_A$ under the tensor product $U_B\otimes U_A^{*}$ where it acts as the identity since this $\lambda$-irrep subspace is spanned by an orthonormal basis $\{\vc{E^{\lambda}_k}\}_{k=1}^{d_{\lambda}}$.
	
	$3\implies 2$ Given a Stinespring dilation $W^{\lambda}$ of $\E$ on an environment $\h^{\lambda}$ carrying $V^{\lambda}$ the $\lambda$-irreducible representation of $G$, then without loss of generality suppose $\{|\lambda, k\>\}_{k=1}^{d_{\lambda}}$ forms an orthonormal basis for $\h^{\lambda}$. Then define $E^{\lambda}_k : = \<\lambda,k|W^{\lambda}$. This is a linear operator from $\h_A$ to $\h_B$. Moreover, $W^{\lambda}$ is symmetric so $U_{B}(g) \otimes V^{\lambda}(g) W^{\lambda} U_{A}(g) = W^{\lambda}$ for all $g\in G$. Therefore $U_{B}\hc(g) E^{\lambda}_{k} U_{A}(g) = U_{B}\hc(g) \<\lambda,k| W^{\lambda} U_{A}(g) = U_{B}\hc(g)\<\lambda,k|U_{B}(g)\otimes V^{\lambda}(g) W^{\lambda} = \<\lambda,k|V^{\lambda}(g)|\lambda,k'\> \<\lambda,k'|W^{\lambda}$. This implies that $\{E^{\lambda}_k\}_{k=1}^{d_{\lambda}}$ transform as ITOs under the group action.
	
	$2\implies 3$ Conversely, we show that there exists a symmetric isometry $W^{\lambda}$ defined by $E^{\lambda}_{k} = \<\lambda,k|W^{\lambda}$ with $W^{\lambda}:\h_A\rightarrow \h_B\otimes \h_{\lambda}$ where $\h_{\lambda}$ has a standard orthonormal basis $|\lambda,k\>$ that transforms under the $\lambda$-irrep. Since $\E^{\lambda}$ is CPTP then $\sum_{k} (E^{\lambda}_k)\hc E^{\lambda}_k = \iden_{A}$ then $\sum_{k}(W^{\lambda})\hc |\lambda,k\>\<\lambda,k| W^{\lambda} = \iden$. However $|\lambda,k\>$ form complete orthonormal basis for $\h_{\lambda}$ so $\sum_{k}|\lambda,k\>\<\lambda,k| = \iden$ on $\h^{\lambda}$. As such $(W^{\lambda})\hc W^{\lambda} = \iden_{A}$ and therefore $W^{\lambda}$ is indeed an isometry. Moreover $W^{\lambda}$ is symmetric because $E^{\lambda}_{k}$ transform as an ITO.
	
\end{proof}


\section{Proof of Theorem~\ref{thm:upperbounds}}
\label{app:upperbounds}
	
\begin{proof}
	Since $\E$ is a $G$-covariant channel, it follows from Eq.~\eqref{eq:choigeneraldecomp} that its corresponding Jamio{\l}kowski state can be written as:  
	\begin{equation}
		\J(\E) = \bigoplus_{\lambda\in \Lambda} p_{\lambda}(\E) \frac{\iden^{\lambda}}{d_{\lambda}}\otimes \rho^{\lambda}.
		\label{eq:choidecp}
	\end{equation}
	We start by bounding unitarity in terms of $p_0$. From Lemma~\ref{lem:unitaritychoi}, the unitarity can be evaluated in terms of $\J(\E)$:
	\begin{equation}
		u(\E)=\frac{d_A}{d_A^2-1}\left[d_A\Tr(\J(\E)^2)-\Tr(\E(\iden/d_A)^2)\right].
	\end{equation} 
	As purity remains invariant under a unitary change of basis, we can compute $\Tr(\J(\E)^2)$ by using its block diagonal structure:
	\begin{align}
		\Tr(\J(\E)^2) & = \sum_{\lambda\in \Lambda} \frac{p_{\lambda}(\E)^2}{d_{\lambda}^2} \Tr(\iden^{\lambda}\otimes (\rho^{\lambda})^2) = \sum_{\lambda\in \Lambda} \frac{p_{\lambda}(\E)^2}{d_{\lambda}} \Tr( (\rho^{\lambda})^2).
	\end{align}
	Therefore we get
	\begin{equation}
		u(\E)= \frac{d_A}{d_A^2-1}\left[d_{A}\sum_{\lambda\in \Lambda} \frac{p_{\lambda}(\E)^2}{d_{\lambda}}\Tr((\rho^{\lambda})^{2})- \Tr(\E(\iden/d_A)^2)\right].
	\end{equation}
	Now, since we assumed that the output purity for the maximally mixed input state is lower bounded by \mbox{$1/d_A$} and because purity of a density matrix is always upper bounded by 1, it follows that
	\begin{equation}
		u(\E)\leq \frac{d_A}{d_A^2-1} \left[ d_A\sum_{\lambda\in \Lambda} \frac{p_{\lambda}(\E)^2}{d_{\lambda}} - \frac{1}{d_A}\right],
	\end{equation}
	or, equivalently,
	\begin{align}
		1-u(\E) \geq \frac{d_A^2}{d_A^2-1} \left[1 - \sum_{\lambda\in \Lambda} \frac{p_{\lambda}(\E)^2}{d_{\lambda}}\right].
	\end{align}
	Furthermore, recall that $p_{\lambda}(\E)$ is a normalised probability distribution, so:
	\begin{equation}
		1- \sum_{\lambda\in \Lambda} \frac{p_{\lambda}(\E)^2}{d_{\lambda}} = \sum_{\lambda\in \Lambda} \frac{d_{\lambda}-1}{d_{\lambda}} p_{\lambda}(\E)^2 + \sum_{\lambda\neq \mu} p_{\lambda}(\E) p_{\mu}(\E).
	\end{equation}
	Finally, for connected compact Lie groups, there is a single one-dimensional irreducible representation that is given by the trivial irrep. We will denote it by $\lambda=0$ for convenience. Then, for $\lambda\neq 0$ we have $\frac{d_\lambda-1}{d_\lambda}\geq \frac{1}{2}$ and so we can obtain the following lower bound:
	\begin{align}
		1-u(\E) \geq \frac{d_A^2}{d_A^2-1}\left( \frac{1}{2}\sum_{\lambda\neq 0} p_{\lambda}(\E)^2 +\frac{1}{2}\sum_{\lambda\neq \mu\neq 0} p_{\lambda}(\E) p_{\mu}(\E)   \right),
	\end{align}
	which can be conveniently rewritten as 
	\begin{equation}
		\label{eq:unitarityline}
		1-u(\E)\geq \frac{d_A^2}{2(d_A^2-1)} (1-p_0)^2.
	\end{equation}
		
	We now proceed to bounding the average total deviation $\Delta(\E)$ in terms of $p_0$. First, we will simplify the expression for $\Delta(\E)$ given in Eq.~\eqref{eq:dev_su2_partxxxxx},
	\begin{equation}
		\Delta(\E) = \frac{1}{d_A(d_A+1)} \sum_{k=1}^{n} \left(\Tr((\delta J^{k}_A)^2)+ (\Tr(\delta J{k}_A))^2\right)
	\end{equation}
	with $\delta J^{k}_A = \E\hc(J^{k}_B) -J^{k}_A$. Since $J^{k}_A$ and $J^{k}_B$ are generators of the unitary representation of the compact Lie group~$G$, they are traceless and Hermitian. Moreover, they live in the irreducible representation of $\B(\h_A)$ and $\B(\h_B)$ that is isomorphic to the adjoint representation (note that, unless $\h_A$ and $\h_B$ are trivial representations, the bounded operator spaces will always have a trivial and adjoint representation). Thus,
	\begin{equation}
		\Tr(\delta J^{k}_A) = \Tr(\E\hc(J^{k}_B)) - \Tr(J^{k}_A) = \Tr(\E\hc(J^{k}_B)) = 0.
	\end{equation}
	The last equality comes from the fact that $\E\hc$ is a symmetric operation, and it will map onto operators in $\B(\h_B)$ fully supported on the adjoint irreducible representations (and multiplicities thereof). This subspace is orthogonal (relative to the Hilbert-Schmidt norm) to the trivial representation in $\B(\h_B)$ where the identity lives. So the deviation from conservation laws reduces to:
	\begin{equation}
		\Delta(\E) = \frac{1}{d_A(d_A+1)}\sum_{k=1}^{n} \Tr((\E\hc(J^{k}_B)-J^{k}_A)^2).
	\end{equation}

	Next, since $\J(\E)$ has the block-diagonal form given in Eq.~\eqref{eq:choidecp}, we can construct CP maps \mbox{$\E^\lambda:\B(\h_A)\rightarrow \B(\h_B)$} associated to each block $\lambda\in \Lambda$, i.e., their Jamio{\l}kowski states $\J^{\lambda}(\E)$ are given by $\frac{\iden^{\lambda}}{d_{\lambda}}\otimes \rho^{\lambda}$ with  $\rho^{\lambda}(\E)$ acting on the multiplicity space. The original channel is then simply given by the convex combination,
	\begin{equation}
		\E = \sum_{\lambda\in\Lambda} p_\lambda(\E) \ \E^{\lambda}.
	\end{equation}
	Consequently, there is a Kraus decomposition for each $\E^{\lambda}$ such that
	\begin{equation}
			\E^{\lambda} = \sum_{i} E^{\lambda}_{i}\rho (E^{\lambda}_i)\hc,
	\end{equation}
	where $E^{\lambda}_i$ transform irreducibly under the group action and span an irreducible $\lambda$-subspace with multiplicity at most $m_{\lambda}$, so that $i$ can range from $1$ up to $d_{\lambda}m_{\lambda}$. In general, a given $\E^{\lambda}$ will be a trace non-increasing CP operation, and the original trace-preserving property of $\E$ can be written as
	\begin{equation}
			\sum_{\lambda\in\Lambda} p_{\lambda}(\E) \sum_{i}(E^{\lambda}_i)\hc E^{\lambda}_i = \iden.
	\end{equation}
	
	In terms of the above considerations we can re-write the crucial term appearing in the expression for $\Delta(\E)$ as follows:
	\begin{align}
		\E\hc(J_{B}^k) -J_{A}^k & = \sum_{\lambda\in\Lambda}p_{\lambda}(\E) \sum_{i} (E^{\lambda}_i)\hc J^{k}_B E^{\lambda}_i - (E^{\lambda}_i)\hc E^{\lambda}_k J^{k}_A=: \sum_{\lambda\in\Lambda}p_{\lambda}(\E) M^{\lambda,k},
		\label{eq:deviationJ}
	\end{align}
	so that
	\begin{equation}
		\label{eq:deviationJ_2}
		\Delta(\E) = \frac{1}{d_A(d_A+1)}\sum_{k=1}^{n} \tr\left(\left(\sum_{\lambda\in\Lambda}p_{\lambda}(\E) M^{\lambda,k}\right)^2\right).
	\end{equation}
	Because $J^{k}_B$ and $J^{k}_A$ are the generators of the symmetry it follows that whenever $\lambda=0$ (the trivial representation), the Kraus operators $E^{\lambda=0}_i$ transform trivially under the group action, and so $J^{k}_B E^{\lambda=0}_i = E^{\lambda=0}_i J^{k}_A$ (where $i$ in this case may label the possible multiplicities of the trivial representation). We remark that this condition is equivalent to the definition of irreducible tensor operators in terms of the generators of the symmetry. Therefore, in Eqs.~\eqref{eq:deviationJ}-\eqref{eq:deviationJ_2} the terms with $\lambda= 0$ vanish. Now, recall that the Schatten $p$-norm of a linear operator $A$ between $\h_B$ and $\h_A$ is defined as $\|A\|_{p}^{p} := \Tr(|A|^{p})$ with $|A|=\sqrt{A\hc A}$. Using H\"{o}lder's inequality followed by a triangle inequality we get: 
	\begin{align}
		\tr\left(\left(\sum_{\lambda\in\Lambda}p_{\lambda}(\E) M^{\lambda,k}\right)^2\right)
		&\leq \left\|\sum_{\lambda\neq 0} p_{\lambda}(\E)M^{\lambda,k} \right\|_{1} \left\|\sum_{\lambda\neq 0} p_{\lambda}(\E)M^{\lambda,k} \right\|_{\infty}\\
		&\leq \sum_{\lambda\neq 0} p_{\lambda}(\E)\|M^{\lambda,k} \|_{1} \sum_{\lambda\neq 0} p_{\lambda}(\E)\|M^{\lambda,k} \|_{\infty}\\
		&\leq \left(\sum_{\lambda\neq 0} p_{\lambda}(\E)\right)^2 \left(\max_{\lambda\neq 0} \|M^{\lambda,k}\|_{1}\right)^{2},
	\end{align}
	with the last inequality coming from \mbox{$\|\cdot\|_\infty\leq \|\cdot\|_1$}. Since $p_{\lambda}(\E)$ forms a probability distribution over $\lambda\in \Lambda$ we get the following bound on the deviation:
	\begin{equation}
		\Delta(\E) \leq \frac{n}{d_A(d_A+1)}(1-p_0)^2  \left(\max_{k,\lambda\neq 0}\|M^{\lambda,k}\|_1\right)^2.
	\end{equation}
	Furthermore, we can bound the term $\|M^{\lambda,k}\|_1$ for any $\lambda$ and $k$. This follows from triangle inequality and submultiplicativity of the Schatten $p$-norms:
	\begin{align}
		\|M^{\lambda,k}\|_{1} &\leq\sum_{i} \|(E^{\lambda}_i)\hc J^{k}_B E^{\lambda}_i\|_1 + \|(E^{\lambda}_i)\hc E^{\lambda}_i J^{k}_A\|_1\leq \sum_{i} \|E^{\lambda}_i\|_1^2\left( \|J^{k}_{B}\|_1 + \|J^{k}_{A}\|_1\right).
	\end{align}
	However,
	\begin{equation}
		\|E^{\lambda}_i\|_1 = \sum_{s} s\left(\sqrt{E^{\lambda\dagger}_k E^{\lambda}_k}\right)
	\end{equation}
	with $s(A)$ denoting the singular values of operator $A$. Since $E^{\lambda\dagger}_i E^{\lambda}_i\geq 0$ then it follows that $\|E^{\lambda}_i\|_{1}^{2} \leq d_{A} \Tr(E^{\lambda\dagger}_i E^{\lambda}_i)$. We also have that $\sum_{i} \Tr(E^{\lambda\dagger}_i E^{\lambda}_i) = d_A$, as the Jamio{\l}kowski states $\J(\E^{\lambda})$ satisfy $\tr(\J(\E^{\lambda})) = 1$, or equivalently $\Tr(\E^\lambda(\frac{\iden}{d_A}))=1$. Then we get the following upper bound on $M^{\lambda,k}$:
	\begin{equation}
		\|M^{\lambda}\|_1\leq d_{A}^2 \max_k (\|J^{k}_B\|_1 + \|J^{k}_A\|_1).
	\end{equation}
	Therefore, we get the following upper bound on the deviation from a conservation law:
	\begin{equation}
		\Delta(\E) \leq n\frac{d_A^3}{d_A+1} (1-p_0)^2 \max_{k} (\|J^{k}_B\|_1+\|J^{k}_A\|_1)^2,
	\end{equation}
	where we recall that $n$ is the number of generators and $p_0 = p_{0}(\E)$.
	Combining the above with Eq.~\eqref{eq:unitarityline} we finally obtain
	\begin{equation}
		\Delta(\E) \leq 2n(d_A-1)d_A \ \max_{k} (\|J^{k}_B\|_1 + \|J^{k}_A\|_1)^{2} (1-u(\E)).
	\end{equation}
\end{proof}

	
\section{Proof of Theorem~\ref{thm:lowerbounds}}
\label{app:lowerbounds}
	
\begin{proof}
	First, note that since $\B(\h_A)$ has a multiplicity-free decomposition, in particular there will be exactly one $\lambda=0$ irrep in the decomposition of $U_A\otimes U_A^{*}$, and it will correspond to the identity operator in $\B(\h_A)$. Consequently, any such symmetric channel $\E$ will necessarily be unital. Moreover, $U_A$ in this case must be an irrep. Otherwise, for each irrep appearing in the decomposition of $U_A$ there would be a trivial irrep in the decomposition of $U_A\otimes U_A^*$ and, by assumption, there is just one such trivial irrep.
		
	Recall that the deviation takes the form:
	\begin{equation}
		\Delta(\E) = \frac{1}{d_A(d_A+1)}\sum_{k=1}^{n} 	\Tr((\E\hc(J_{A}^k)-J_{A}^k)^2).
	\end{equation}
	Moreover, from Theorem~\ref{thm:extremalchannelmultipfree} it follows that $\E$ has a decomposition in terms of isolated extremal channels \mbox{$\E^{\lambda}:\B(\h_A) \rightarrow \B(\h_A)$}, with corresponding Jamio{\l}kowski states $\J(\E^{\lambda}) =\frac{\iden^{\lambda}}{d_{\lambda}}$ acting as identity on the $\lambda$-irrep subspace of $U_A\otimes U_A^{*}$. Therefore,
	\begin{equation}
		\E = \sum_{\lambda\in \Lambda} p_{\lambda}(\E) \E^{\lambda},
	\end{equation}
	with $p_{\lambda}(\E)$ being a probability distribution that depends on $\E$. Now, the multiplicity-free decomposition also ensures that $\E^{\lambda\dagger}(J_{A}^k) = f(\lambda) J_{A}^k$ for some fixed real coefficient $f(\lambda)$ that is associated with the fixed extremal point $\E^{\lambda}$ and independent of $k$. Thus,
	\begin{equation}
		\Delta(\E) = \frac{\|\boldsymbol{J}_{A}\|^{2}}{d_A(d_A+1)} \left(\sum_{\lambda} p_{\lambda} (1- f(\lambda))\right)^2,
	\end{equation}
	where analogously to the previous considerations we have introduced
	\begin{equation}
		\|\boldsymbol{J}_{A}\|^{2} :=\sum_{k=1}^{n} \operatorname{tr}\left(\left(J^{k}_{A}\right)^{2}\right).
	\end{equation}
		
	It is clear that $f(\lambda) = 1$ if and only if $\lambda = 0$. This is because for $\lambda=0$ we deal with the identity channel and so $f(\lambda)=1$; conversely, $\lambda=1$ means that the $J_k$ operators are fixed points of the unital CPTP map, and so they commute with the Kraus operators, and this happens only for Kraus operators transforming as $\lambda=0$. Moreover, without loss of generality, we may assume that $|f(\lambda)|\leq 1$. This follows from a result of Ref.~\cite{perez2006contractivity}, which states that for unital trace-preserving channels the induced $p$-norm is contractive for all $1< p \leq \infty$. That means that
	\begin{equation}
		\|\E^{\lambda\dagger}\|_{p} := \sup_{X\in\B(\h)}\frac{\|(\E^{\lambda})\hc(X)\|_p}{\|X\|_p}\leq 1,
	\end{equation}
	because $\E^{\lambda}$ are unital CPTP maps due to the fact that we deal with an irrep system. Then, it follows that
	\begin{equation}
		\Delta(\E) \geq \frac{\|\vv{J}_{A}\|^2}{d_A(d_A+1)} (1-p_0)^2 \min_{\lambda\neq 0\in \Lambda} |1-f(\lambda)|^{2}=:\frac{\|\vv{J}_{A}\|^2}{d_A(d_A+1)} (1-p_0)^2 K^2.
	\end{equation} 
	Since $K$ arises from minimisation over all $\lambda\neq 0$, it is strictly greater than zero, leading to a non-trivial lower bound on the deviation. The coefficient $K$ will be fixed for any given symmetry principle described by the representation $U_A$ of $G$.
		
	Now, according to Eq.~\eqref{eq:unitarity_jamiolkowski} and using the decomposition from Eq.~\eqref{eq:directdecompChoi}, unitarity can be expressed in terms of the probability distribution $p_{\lambda}$ as follows:
	\begin{equation}
		u(\E) = \frac{1}{d_A^2-1}\left(d_A^2\sum_{\lambda\in\Lambda}\frac{p_{\lambda}(\E)^2}{d_{\lambda}} - 1\right),
	\end{equation}
	where we have used that the channel $\E$ is unital. We can then bound it using Cauchy-Schwartz inequality in the following way:
	\begin{equation}
		u(\E)\geq \frac{1}{d_{A}^2-1}\left(d_A^2 p_0^2 + d_A^2\frac{(1-p_0)^2}{\sum_{\lambda\neq 0} d_{\lambda}} -1\right).
	\end{equation}	
	Equivalently,
	\begin{align}
		1-u(\E)&\leq \frac{d_A^2}{d_A^2-1}\left(1 - p_0^2 -\frac{(1-p_0)^2}{\sum_{\lambda\neq 0}d_{\lambda}}\right)\leq \frac{2d_A^2}{d_A^2-1}(1-p_0).
	\end{align}
	Combining the two relations results in
	\begin{equation}
		\sqrt{\Delta(\E)}\geq K \|\vv{J}_{A}\| (1-u(\E)) \frac{(d_A-1)(d_A+1)^{1/2}}{2d_A^{5/2}}.
	\end{equation}
\end{proof}

%

\end{document}